\pdfoutput=1
\documentclass[11pt]{article}

\input{macros.sty}

\notarxiv{
	\usepackage[subtle, all=normal, mathspacing=tight, floats=tight, 
	sections=tight]{savetrees}
}

\title{Principal Component Projection and Regression in\\ Nearly Linear Time through Asymmetric SVRG
}

\arxiv{
	
\author{
Yujia Jin \\
Stanford University \\
\texttt{\href{mailto:yujiajin@stanford.edu}{yujiajin@stanford.edu}}
\and
	Aaron Sidford \thanks{Research supported in part by NSF CAREER Award CCF-1844855.}\\
Stanford University \\
\texttt{\href{mailto:sidford@stanford.edu}{sidford@stanford.edu}}
}

\date{}
}

\begin{document}

\maketitle

\begin{abstract}

Given a data matrix $\A \in \R^{n \times d}$, \emph{principal component projection (PCP)} and \emph{principal component regression (PCR)}, i.e. projection and regression restricted to the top-eigenspace of $\A$, are fundamental problems in machine learning, optimization, and numerical analysis. In this paper we provide the first algorithms that solve these problems in nearly linear time 
for fixed eigenvalue distribution and large  $n$. This improves upon previous methods which have superlinear running times when both the number of top eigenvalues and inverse gap between eigenspaces is large. %
We achieve our results by applying rational approximations to reduce PCP and PCR to solving asymmetric linear systems which we solve by a variant of SVRG. We corroborate these findings with preliminary empirical experiments.

\end{abstract}

\newpage

\section{Introduction}

PCA is one of the most fundamental algorithmic tools for analyzing large data sets. Given a data matrix $\A \in \R^{n \times d}$ and a parameter $k$ the classic \emph{principal component analysis} (PCA) problem asks to compute the top $k$ eigenvectors of $\A^\trans \A$. This is a core computational task in machine learning and often used for feature selection~\cite{malhi2004pca,song2010feature,pascoal2012robust}, data visualization~\cite{niedoba2014multi,metsalu2015clustvis}, and model compression~\cite{yin2015efficient}.

However, as $k$ becomes large, the running time of PCA can degrade. Even just writing down the output takes $\Omega(kd)$ time and the performance of many methods degrade with $k$.
This high-computational cost for exploring large-cardinality top-eigenspaces has motivated researchers to consider prominent tasks solved by PCA, for example \emph{principal component projection (PCP)} which asks to project a vector onto the top-$k$ eigenspace, and \emph{principal component regression (PCR)} which asks to solve regression restricted to this top-$k$ eigenspace (see \cref{sec:prob}).  

Recent work \cite{FMMS16,AL17} showed that the dependence on $k$ in solving PCP and PCR  can be overcome by instead depending on \emph{eigengap $\gamma$}, defined as the ratio between the smallest eigenvalue in the space projected onto and the largest eigenvalue of the space projected orthogonal to. These works replace the typical $\poly(k) \nnz(\A)$ dependence in runtime with a $\poly(1/\gamma) \nnz(\A)$ (at the cost of lower order terms), by reducing these problems to solving $\poly(1/\gamma)$ ridge-regression problems. %
Unfortunately, for large-scale %
 problems, as data-set sizes grow so too can $k$ and $1/\gamma$, yielding large super-linear running times for all previously known methods (see \cref{sec:comparison}). Consequently, this leaves the following fundamental open problem:\\
\\
\emph{Can we obtain  nearly linear running times for solving PCP and PCR to high precision, i.e. having running time $\tilde{O}(\nnz(\A))$ plus an additive term depending only on the eigenvalue distribution?}\\

The main contribution of the paper is an affirmative answer to this question. We design randomized algorithms that solve PCP and PCR with high probability in nearly linear time. Leveraging rational polynomial approximations we reduce these problems to solving asymmetric linear systems, which we solve by a technique we call \emph{asymmetric SVRG}. Further, we provide experiments demonstrating the efficacy of this method.

\subsection{Approach}
To obtain our results, critically we depart from the previous frameworks in~\citet{FMMS16,AL17} for solving PCP and PCR. These papers use polynomial approximations to the sign function to reduce PCP and PCR to solving ridge regression. Their runtime is limited by the necessary $\Omega(1/\gamma)$ degree for polynomial approximation of the sign function shown by \citet{eremenko2007uniform}. Consequently, to obtain nearly linear runtime, a new insight is required.

In our paper, we instead consider rational approximations to the sign function and show that these efficiently reduce PCP and PCR to solving a particular class of squared linear systems.
The closed form expression for the best rational approximation to sign function was given by Zolotarev~\cite{Z77} and has recently been proposed for matrix sign approximation~\cite{NF16}. The degree of such rational functions is logarithmic in $1/\gamma$, leading to much fewer linear systems to solve. While the \emph{squared systems} $[(\A^\top\A-c\I)^2+\mu^2\I]\xx=\bb,\mu>0$ induced by this rational approximation are computationally more expensive to solve, as compared with simple ridge regression problems $\left[\A^\top\A+\mu\I\right]\xx=\bb,\mu>0$, interestingly, we show that  these systems can still be solved in nearly linear time for sufficiently large matrices. As a by-product of this analysis, we also obtain an efficient algorithm for leveraging linear system solvers to apply the square-root of a positive semidefinite (PSD) %
matrix to a  vector, where we call a matrix $\M$ positive semidefinite and denote $\M\succeq\0$  when $\forall \xx,\xx^\top\M\xx\ge0$.

We believe the solver we develop for these squared systems is of independent interest. Our solver is a variant of the stochastic variance-reduced gradient descent algorithm (SVRG)~\cite{johnson2013accelerating} modified to solve asymmetric linear systems. Our iterative method can be viewed as an instance of the variance-reduced algorithm for monotone operators discussed in Section 6 of~\citet{pal16}, with a more careful analysis of the error. We also combine this method with approximate proximal point~\cite{FGKS15} or  Catalyst~\cite{LMH15} to obtain accelerated variants.

The conventional wisdom when solving asymmetric systems $\M \xx = \bb$ that are not positive semidefinite (PSD), i.e. $\M\nsucceq\0$, is to instead solve its PSD counterpart $\M^\top \M \xx = \M^\top \bb$. However, this operation can greatly impair the performance of stochastic methods, e.g. SVRG~\cite{johnson2013accelerating}, SAG~\cite{schmidt2017minimizing}, etc. 
(See \cref{sec:SVRG}.)
The solver developed in this paper constitutes one of few known cases where transforming it into asymmetric system solving enables better algorithm design and thus provides large savings (see \cref{cor:square_solver_main}.) Ultimately, we believe this work on SVRG-based methods outside of convex optimization as well as our improved PCP and PCR algorithms may find further impact.

\subsection{The Problems}
\label{sec:prob}

Here we formally define the PCP (\cref{dfn:PCP}), PCR (\cref{dfn:PCR}), Squared Ridge Regression (\cref{dfn:square}), and Square-root Computation (\cref{dfn:square-root}) problems we consider throughout this paper. Throughout, we let $\A\in\R^{n\times d}$  $(n\ge d)$ denote  a data matrix where each row $\aaa_i\in\R^n$ is viewed as a datapoint. Our algorithms typically manipulate the positive semidefinite (PSD) matrix $\A^\top\A$.  We denote the eigenvalues of $\A^\top\A$ as $\lambda_1\ge\lambda_2\ge\cdots\ge\lambda_d\ge0$ and corresponding eigenvectors as $\nnu_1,\nnu_2,\cdots,\nnu_d\in \R^d$, i.e.  $\A^\top\A=\V\LLambda\V^\top$ with $\V\defeq(\nnu_1,\cdots,\nnu_d)^\top$ and $\LLambda\defeq\diag(\lambda_1,\cdots,\lambda_d)$.

Given eigenvalue threshold $\lambda\in(0,\lambda_1)$ we define $\PP_\lambda\defeq(\nnu_1,\cdots,\nnu_k)(\nnu_1,\cdots,\nnu_k)^\top$ as a projection matrix projecting any vector onto the top-$k$ eigenvectors of $\A^\top\A$, i.e.  $\mathrm{span}\{\nnu_1,\nnu_2,\cdots,\nnu_k\}$, where $\lambda_k$ is the minimum eigenvalue of  $\A^\top\A$ no smaller than $\lambda$, i.e. $\lambda_k\ge\lambda>\lambda_{k+1}$. Without specification $\|\cdot\|$ is the standard $\ell_2$-norm of vector or matrix.

Given $\gamma\in(0,1)$, the goal of a PCP algorithm is to project any given vector $\vv=\sum_{i\in[d]}\alpha_i\nnu_i$ in a desired way: mapping $\nnu_i$ of $\A^\top\A$ with eigenvalue $\lambda_i$ in $[\lambda(1+\gamma),\infty)$ to itself, eigenvector $\nnu_i$ with eigenvalue $\lambda_i$ in $[0,\lambda(1-\gamma)]$ to $\0$, and any eigenvector $\nnu_i$ with eigenvalue $\lambda_i$ in between the gap to  anywhere between $\0$ and $\nnu_i$. Formally, we define the PCP as follows.

\begin{definition}[Principal Component Projection]
\label{dfn:PCP}

 The principal component projection (PCP) of $\vv\in\R^d$ at threshold $\lambda$ is $\vv_\lambda^*=\PP_\lambda \vv$. Given threshold $\lambda$ and eigengap $\gamma$, an algorithm $\mathcal{A}_{\PCP}(\vv,\eps,\delta)$ is an $\epsilon$-approximate PCP algorithm if with probability $1-
 \delta$, its output satisfies following:
 \begin{align}
 & \bullet \|\PP_{(1+\gamma)\lambda}(\mathcal{A}_{\PCP}(\vv)-\vv)\|\le\eps\|\vv\|;\nonumber\\ 
& \bullet 
 	\|(\I-\PP_{(1-\gamma)\lambda})\mathcal{A}_{\PCP}(\vv)\|\le\eps\|\vv\|
\label{cond:PCP}\\
& \bullet 
\|(\PP_{(1+\gamma)} - \PP_{(1-\gamma)\lambda}) (\mathcal{A}_{\PCP}(\vv) - \vv)
\|\le
\|(\PP_{(1+\gamma)} - \PP_{(1-\gamma)\lambda}) \vv \|
+ \eps\|\vv\|
\nonumber
 \end{align}
	
\end{definition}

The goal of a PCR problem is to solve regression restricted to the particular eigenspace we are projecting onto in PCP. The resulting solution should have no correlation with eigenvectors $\nnu_i$ corresponding to $\lambda_i\le\lambda(1-\gamma)$, while being accurate for $\nnu_i$ corresponding to eigenvalues above $\lambda_i\ge\lambda(1+\gamma)$. Also, it shouldn't have too large correlation with $\nnu_i$ corresponding to eigenvalues between $(\lambda(1-\gamma),\lambda(1+\gamma))$. Formally, we define the PCR problem as follows.

\begin{definition}[Principal Component Regression]
\label{dfn:PCR}

The principal component regression (PCR) of an arbitrary vector $\bb\in\R^n$ at threshold $\lambda$ is $\xx_\lambda^*=\min_{\xx\in\R^d}\|\A\PP_\lambda \xx-\bb\|$. 
 Given threshold $\lambda$ and eigengap $\gamma$, an algorithm $\mathcal{A}_{\PCR}(\bb,\eps,\delta)$ is an $\epsilon$-approximate PCR algorithm if with probability $1-\delta$, its output satisfies following: %
 \begin{equation}
 \|(\I-\PP_{(1-\gamma)\lambda})\mathcal{A}_{\PCR}(\bb,\eps)\|\le\epsilon\|\bb\|
 \enspace \text{ and } \enspace
 \|\A\mathcal{A}_{\PCR}(\bb,\eps)-\bb\|\le\|\A \xx_{(1+\gamma)\lambda}^*-\bb\|+\eps\|\bb\|
 ~.
 \label{cond:PCR}
 \end{equation}
\end{definition}

We reduce PCP and PCR to solving squared linear systems. The solvers we develop for this squared regression problem defined below we believe are of independent interest.

\begin{definition}[Squared Ridge Regression Solver]
\label{dfn:square}
Given $c\in[0,\lambda_1]$~\footnote{We remark that when $c<0$ (or $c>\lambda_1$), we can preprocess the problem by solving $(\A^\top\A-c\I)+\mu I$ (or $(c\I-\A^\top\A)+\mu \I$) twice, which is known to have efficient solvers~\cite{FMMS16,GHJ+16} enjoying provably better runtime guarantees than what we've shown for the harder (non-PSD) case $c\in[0,\lambda_1]$. }, $\vv\in\R^d$, we consider a squared ridge regression problem where exact solution is $\xx^*=((\A^\top \A-c\I)^2+\mu^2\I)^{-1}\vv$. We call an algorithm $\RidgeSquare(\A,c,\mu^2,\vv,\eps,\delta)$ an $\eps$-approximate squared ridge regression solver if with probability $1-\delta$ it returns a solution $\tilde{\xx}$ satisfying $\|\tilde{\xx}-\xx^*\|\le\eps\|\vv\|.$
\end{definition}

Using a similar idea of rational polynomial approximation, we also examine the problem of $\M^{1/2}\vv$ for arbitrarily given PSD matrix $\M$ to solving PSD linear systems approximately. 

\begin{definition}[Square-root Computation]%

\label{dfn:square-root}
Given a PSD matrix $\M\in\R^{n\times n}$ such that $\mu \I\preceq\M\preceq \lambda\I$ and $\vv\in\R^n$, an algorithm $\SR(\M,\vv,\eps,\delta)$ is an $\eps$-approximate square-root solver if with probability $1-\delta$ it returns a solution $\xx$ satisfying $\|\xx-\M^{1/2}\vv\|\le\eps\|\M^{1/2}\vv\|$.
\end{definition}

\subsection{Our Results}
\label{sec:results}

Here we present the main results of our paper, all proved in \cref{App:main}. For data matrix $\A\in\R^{n\times d}$, our running times are presented in terms of the following quantities.
\begin{itemize}
\item Input sparsity: $\nnz(\A)\defeq\text{ number of nonzero entries in }\A$;
\item Frobenius norm: $\|\A\|_\mathrm{F}^2\defeq \Tr(\A^\top\A)$;
\item Stable rank: $\mathrm{sr}(\A)\defeq \|\A\|_\mathrm{F}^2 / \|\A\|_2^2 = \|\A\|_\mathrm{F}^2/\lambda_1$;
\item Condition number of top-eigenspace: $\kappa\defeq\lambda_1/\lambda$. 
\end{itemize}
When presenting running times we use $\tilde{O}$ to hide polylogarithmic factors in the input parameters $\lambda_1 ,\gamma,\vv,\bb$, error rates $\eps$, and success probability $\delta$. %

For $\A\in\R^{n\times d}$ ($n\ge d$), $\vv\in\R^d$, $\bb\in\R^n$, without loss of generality we assume $\lambda_1\in[1/2,1]$\footnote{This can be achieved by getting a constant approximating overestimate $\tilde{\lambda}_1$ of $\A^\top\A$'s top eigenvector $\lambda_1$ through power method in $\tilde{O}(\nnz(\A))$ time, 
and consider $\A\leftarrow\A/\sqrt{\tilde{\lambda}_1},\lambda\leftarrow\lambda/\tilde{\lambda}_1,\bb\leftarrow\bb/\sqrt{\tilde{\lambda}_1}$ instead.} Given threshold $\lambda\in(0,\lambda_1)$ and eigengap $\gamma\in(0,2/3]$, the main results of this paper are the following new running times for solving these problems. %

\begin{theorem}[Principal Component Projection]
\label{thm:pcp_main}
For any $\eps\in(0,1)$, there is an $\eps$-approximate PCP algorithm (see \cref{dfn:PCP}) $\ISPCP(\A,\vv,\lambda,\gamma,\eps,\delta)$ specified in \cref{alg:ISPCP} with runtime 
	 $$\tilde{O} \left(\nnz(\A)+\sqrt{\nnz(\A)\cdot d\cdot\mathrm{sr}(\A)}\kappa/\gamma \right).$$
\end{theorem}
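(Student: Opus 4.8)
The plan is to realize the ideal projection as $\PP_\lambda\vv = h(\A^\top\A)\vv$ for the step function $h(x)=\mathbb{1}[x\ge\lambda]$ (legitimate since $\PP_\lambda$ keeps exactly the $\nnu_i$ with $\lambda_i\ge\lambda$), and to have $\ISPCP(\A,\vv,\lambda,\gamma,\eps,\delta)$ output an approximation to $r(\A^\top\A)\vv$ for a low-degree \emph{rational} approximant $r$ of $h$. Writing $h(x)=\tfrac12\bigl(1+\sgn(x-\lambda)\bigr)$ and using that the spectrum of $\A^\top\A-\lambda\I$ lies in $[-\lambda,\lambda_1-\lambda]$, after rescaling by $\Theta(\lambda_1)$ the task reduces to approximating $\sgn$ uniformly on $[-1,-\alpha]\cup[\alpha,1]$ with $\alpha=\Theta(\gamma\lambda/\lambda_1)=\Theta(\gamma/\kappa)$, the excluded dead zone around $0$ corresponding to eigenvalues in the gap $(\lambda(1-\gamma),\lambda(1+\gamma))$. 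Zolotarev's explicit minimax rational function does this with error $\eps$ at degree $q=O(\log(\kappa/\gamma)\log(1/\eps))=\tilde O(1)$, stays bounded in $[-1,1]$ on $[-1,1]$, and has an explicit partial-fraction form $r(x)=a_0+\sum_{j=1}^{O(q)}\frac{\alpha_j x+\beta_j}{(x-c_j)^2+\mu_j^2}$ with each $\mu_j=\Omega(\gamma\lambda)$ up to logarithmic factors.

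First I would push this decomposition to the matrix level: for each $j$, call $\RidgeSquare(\A,c_j,\mu_j^2,\vv,\eps',\delta/q)$ (\cref{dfn:square}) to compute $\zz_j\approx\bigl((\A^\top\A-c_j\I)^2+\mu_j^2\I\bigr)^{-1}\vv$ to relative accuracy $\eps'$, form $\alpha_j(\A^\top\A)\zz_j+\beta_j\zz_j$ with one extra $O(\nnz(\A))$ matrix--vector product, and sum, adding $a_0\vv$; poles with $c_j\notin[0,\lambda_1]$ are handled by the cheaper PSD solvers noted in \cref{dfn:square}. This yields an output $\tilde\vv\approx r(\A^\top\A)\vv$, and $\ISPCP$ also does the $\tilde O(\nnz(\A))$ power-method preprocessing to ensure $\lambda_1\in[1/2,1]$.

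Next I would verify the three conditions of \cref{dfn:PCP}. Diagonalizing $\A^\top\A=\V\LLambda\V^\top$, the operator $r(\A^\top\A)-\PP_\lambda$ acts on $\nnu_i$ as the scalar $r(\lambda_i)-\mathbb{1}[\lambda_i\ge\lambda]$: for $\lambda_i\ge\lambda(1+\gamma)$ and for $\lambda_i\le\lambda(1-\gamma)$ this is at most $\eps$ in magnitude by the uniform Zolotarev bound, giving the first two inequalities; for $\lambda_i$ in the gap it is at most $1$ in magnitude by boundedness of $r$, giving the third. For the approximate solves, since $q=\tilde O(1)$ and the partial-fraction coefficients together with $\|\A^\top\A\|$ are $\poly(\kappa/\gamma)$, taking $\eps'=\eps/\poly(\kappa/\gamma)$ makes $\|\tilde\vv-r(\A^\top\A)\vv\|\le\tfrac{\eps}{3}\|\vv\|$ while changing $\log(1/\eps')$ only by polylog factors, and a union bound over the $q$ solves yields total failure probability $\delta$.

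Finally, for the runtime: each squared system has eigenvalues in $[\mu_j^2,\lambda_1^2+\mu_j^2]$, hence condition number $\tilde O\bigl((\lambda_1/(\gamma\lambda))^2\bigr)=\tilde O\bigl((\kappa/\gamma)^2\bigr)$, so by \cref{cor:square_solver_main} each solve costs $\tilde O\bigl(\nnz(\A)+\sqrt{\nnz(\A)\cdot d\cdot\sr(\A)}\cdot\kappa/\gamma\bigr)$; multiplying by $q=\tilde O(1)$ and adding the $\tilde O(\nnz(\A))$ preprocessing and matrix--vector products gives the claimed bound. I expect the main obstacle to be not this bookkeeping but securing a rational approximant for which \emph{every} induced squared system is simultaneously well-conditioned --- i.e. lower-bounding all pole imaginary parts $\mu_j$ by $\Omega(\gamma\lambda)$ up to logarithmic factors --- which is exactly where the explicit structure of the Zolotarev solution and the $\lambda$-versus-$\lambda_1$ rescaling are essential, together with checking that the approximate solves compose without error amplification.
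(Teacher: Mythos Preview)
Your approach is correct and reaches the same runtime, but it differs from the paper's in how the Zolotarev rational is applied. The paper's $\ISPCP$ (\cref{alg:ISPCP}) uses the \emph{product} form $r_k^{\lambda\gamma}(x)=Cx\prod_i\frac{x^2+c_{2i}}{x^2+c_{2i-1}}$ and applies the factors sequentially, controlling error accumulation via the product-stability \cref{lem:stable} together with the factor-norm bound $M=O(k^4/(\lambda\gamma)^2)$ coming from \cref{lem:bound}; this forces per-solve accuracy $\eps_1\approx\eps/M^{k}$, but $\log(1/\eps_1)$ stays polylog. You instead take the \emph{partial-fraction} form $r_k^{\lambda\gamma}(x)=Cx+\sum_j\frac{B_j x}{x^2+c_{2j-1}}$ and sum the pole contributions, which reduces the error analysis to a triangle inequality and is naturally parallel. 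The price is that you must bound the residues $B_j$, which you assert but do not prove; it does hold --- the residues are all positive by the interlacing $c_1<c_2<\cdots<c_{2k}$ from \cref{lem:bound}, and evaluating $r_k^{\lambda\gamma}(1)\le1$ together with $c_{2k}=O(k^2)$ gives $\sum_j B_j=O(k^2)=\tilde O(1)$ --- but this is the step that needs filling in. A few minor corrections: the polynomial part of the decomposition is $Cx$, not a constant $a_0$; for Zolotarev all poles are purely imaginary, so every shift in your $\RidgeSquare$ calls is $\lambda$ and the remark about $c_j\notin[0,\lambda_1]$ is moot; and the runtime reference for the squared solve should be \cref{thm:square_solver_main} rather than \cref{cor:square_solver_main}.
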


\begin{theorem}[Principal Component Regression]
\label{thm:pcr_main}
For any $\eps\in(0,1)$, there is an $\eps$-approximate PCR algorithm (see \cref{dfn:PCR}) $\ISPCR(\A,\bb,\lambda,\gamma,\eps,\delta)$ specified in \cref{alg:ISPCR} with runtime 
	 $$\tilde{O}\left(\nnz(\A)+\sqrt{\nnz(\A)\cdot d\cdot\mathrm{sr}(\A)}\kappa/\gamma\right).$$
\end{theorem}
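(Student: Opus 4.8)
The plan is to reduce principal component regression to two subroutines that are already under our control. Using the identity $\xx_\lambda^* = (\A^\top\A)^{+}\PP_\lambda\A^\top\bb$, where $(\A^\top\A)^{+}$ acts as multiplication by $1/\lambda_i$ on the retained top-$k$ eigenspace, I would first invoke the PCP algorithm $\ISPCP$ of \cref{thm:pcp_main} to approximately form $\uu\approx\PP_\lambda\A^\top\bb$, and then approximately apply a single scalar function $h$ of $\A^\top\A$ to $\uu$, where $h(t)$ must behave like $1/t$ on the band $[(1+\gamma)\lambda,\lambda_1]$ while staying bounded by $O(1/\lambda)$ on all of $[0,\lambda_1]$. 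The payoff of projecting \emph{first} is that $h$ no longer needs to be \emph{small} below the threshold, only \emph{bounded} there; this is what allows $h$ to be taken as a low-degree rational function and applied through the squared-system machinery developed for PCP, rather than reintroducing the ill-conditioning coming from the tiny eigenvalues of $\A^\top\A$.

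Concretely I would take $h(t)=\tfrac12\bigl(1+r(t)\bigr)\,q(t)$, where $r$ is the Zolotarev optimal rational approximation to $\sgn(t-\lambda)$ on $[0,(1-\gamma)\lambda]\cup[(1+\gamma)\lambda,\lambda_1]$ and $q$ is the Zolotarev optimal rational approximation to $1/t$ on $[(1+\gamma)\lambda,\lambda_1]$; both have degree $\tilde O(1)$ (polylogarithmic in $\kappa,1/\gamma,1/\eps$), the factor $\tfrac12(1+r)$ suppresses the sub-threshold band where $q$ could be large, and $h$ has no pole at the origin. Its partial-fraction expansion has $\tilde O(1)$ terms, each either of squared form $\tfrac{\alpha t+\beta}{(t-c)^2+\mu^2}$ with $c\in[0,\lambda_1]$ and $\mu=\Omega(\gamma\lambda)$ — solved by $\RidgeSquare$ of \cref{dfn:square} via \cref{cor:square_solver_main} — or a simple real pole $\tfrac{\alpha}{t-c}$ with $c\notin[0,\lambda_1]$, solved by the strictly cheaper standard ridge solver noted in the footnote to \cref{dfn:square}. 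Applying every term to $\uu$ and summing produces $\tilde\xx$, the output of $\ISPCR$ (\cref{alg:ISPCR}).

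For correctness I would track errors in the three eigenvalue regimes, mirroring the PCP conditions. Below $(1-\gamma)\lambda$ the PCP guarantee bounds the corresponding components of $\uu$ by $\tilde O(\eps)\|\A^\top\bb\|=\tilde O(\eps)\|\bb\|$, and since $h=O(1/\lambda)$ there those components stay $\le\eps\|\bb\|$ in $\tilde\xx$, giving $\|(\I-\PP_{(1-\gamma)\lambda})\tilde\xx\|\le\eps\|\bb\|$. Above $(1+\gamma)\lambda$, $\uu$ matches $\PP_\lambda\A^\top\bb$ up to $\tilde O(\eps)\|\bb\|$ and $h(\lambda_i)$ matches $1/\lambda_i$ up to relative error $\tilde O(\eps)$, so $\A\tilde\xx$ matches $\A\xx^*_{(1+\gamma)\lambda}$ on this band up to $\eps\|\bb\|$; in the transition band, the boundedness of $h$ together with the slack of evaluating $\xx^*$ at threshold $(1+\gamma)\lambda$ in \cref{dfn:PCR} absorbs the rest. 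Summing the per-term solver errors over the $\tilde O(1)$ partial fractions (each run to accuracy $\eps/\poly$) keeps the total within $\eps\|\bb\|$. For the runtime, the single $\ISPCP$ call costs $\tilde O(\nnz(\A)+\sqrt{\nnz(\A)\,d\,\mathrm{sr}(\A)}\,\kappa/\gamma)$ by \cref{thm:pcp_main}, and each of the $\tilde O(1)$ squared solves costs the same by \cref{cor:square_solver_main}: the squared system has condition number $\Theta((\lambda_1/\mu)^2)=\tilde O((\kappa/\gamma)^2)$ and the accelerated asymmetric-SVRG solver pays only its square root; the simple-pole solves are cheaper, so the total is as claimed.

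The main obstacle is the joint design of $r$ and $q$ together with the error propagation: one must pick the Zolotarev degrees so that $h$ is simultaneously $\tilde O(\eps)$-relative-accurate to $1/t$ on the retained band, $O(1/\lambda)$-bounded on all of $[0,\lambda_1]$, and has residues and poles tame enough that the $\tilde O(1)$-fold partial-fraction summation does not amplify error, all while keeping each induced $\mu$ at least $\Omega(\gamma\lambda)$ so the squared solves remain within the stated runtime. The secondary subtlety is the composition step — verifying that hitting $\uu$, which carries different error types in the three regimes, with the bounded, only-approximately-inverse, itself-approximately-computed operator $h(\A^\top\A)$ still yields a vector satisfying both conditions of \cref{dfn:PCR}; this is exactly where projecting before inverting, rather than after, is essential.
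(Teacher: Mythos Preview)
The paper's proof is much shorter and takes a different route for the second half. After one call to $\ISPCP$ to form $\sss_0\approx\PP_\lambda\A^\top\bb$, it simply invokes the \emph{known} PCR-to-PCP reduction of \cite{FMMS16,AL17} (stated here as \cref{red:PCP} and \eqref{red:PCPtoPCR}): apply $(\A^\top\A)^{-1}$ on the projected vector via the truncated recurrence $\sss_1=\RidgeReg(\A,\lambda,\sss_0)$, $\sss_{m+1}=\sss_1+\lambda\cdot\RidgeReg(\A,\lambda,\sss_m)$, using only \emph{standard} ridge regression $(\A^\top\A+\lambda\I)^{-1}$. Because $\sss_0$ is (approximately) supported on eigenvalues $\ge(1-\gamma)\lambda$, one has $\lambda/(\lambda_i+\lambda)\le 1/2$ there, so this Neumann-type iteration converges in $k=\Theta(\log(1/\eps\gamma))$ steps. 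The runtime is then one $\ISPCP$ call (dominant, giving the stated bound via \cref{thm:pcp_main}) plus $k$ ordinary ridge solves, each $\tilde O(\nnz(\A)+d\cdot\sr(\A)/\lambda)$, which is strictly cheaper.

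Your plan shares the high-level structure ``project first, then invert'' but is substantially more elaborate for the inversion step and has soft spots. First, ``Zolotarev optimal rational approximation to $1/t$'' is not a standard object; Zolotarev's theory is specifically about $\sgn$. For $1/t$ on $[(1+\gamma)\lambda,\lambda_1]$ the natural tool is exactly the degree-$O(\log\kappa)$ polynomial in the ridge map, i.e.\ the Neumann series the paper uses---no new poles, no squared systems. Second, your partial-fraction claim (all complex poles have $\mu=\Omega(\gamma\lambda)$, all real poles lie outside $[0,\lambda_1]$) is asserted without justification, yet it is precisely what would make every term solvable within budget. Third, multiplying by $\tfrac12(1+r)$ after already calling $\ISPCP$ is redundant: you pay for the projection twice. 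None of this is fatal---with enough care your route could be made to work---but it is considerably heavier than needed, whereas the paper's proof is two lines once \cref{red:PCP} is quoted.
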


We achieve these results by introducing a technique we call \emph{asymmetric SVRG} to solve squared systems $[(\A^\top\A-c\I)^2+\mu^2\I]\xx=\vv$ with $c\in[0,\lambda_1]$.  The resulting algorithm is closely related to the SVRG algorithm for monotone operators in~\citet{pal16}, but involves a more fine-grained error analysis. This analysis coupled with approximate proximal point~\cite{FGKS15} or Catalyst~\cite{LMH15} yields the following result (see \cref{sec:SVRG} for more details).

\begin{theorem}[Squared Solver]
	\label{thm:square_solver_main}
	For any $\eps\in(0,1)$, there is an $\eps$-approximate squared ridge regression solver (see \cref{dfn:square}) using $\AsySVRG(\M,\hat{\vv},\zz_0,\eps\|\vv\|,\delta)$ that runs in time 
$$\tilde{O}\left(\nnz(\A)+\sqrt{\nnz(\A)d\cdot\mathrm{sr}(\A)}\lambda_1/\mu\right).$$
\end{theorem}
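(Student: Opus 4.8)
The plan is to turn the squared system into an \emph{asymmetric} linear system in dimension $2d$ whose matrix splits as a cheap deterministic part plus $n$ rank‑one‑type parts, to solve it by a variance‑reduced iteration for the resulting monotone operator, and to drive the running time with a variance bound that is a factor $\mathrm{sr}(\A)$ sharper than the obvious operator‑norm one.

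\smallskip
\noindent\textbf{Reduction and the iteration.} Write $\mathbf{B}\defeq\A^\top\A-c\I=\sum_{i=1}^n\aaa_i\aaa_i^\top-c\I$ and put
\[
\M\defeq\begin{pmatrix}\mu\I & \mathbf{B}\\ -\mathbf{B} & \mu\I\end{pmatrix}\in\R^{2d\times2d},\qquad \hat\vv\defeq\tvect{\vv/\mu}{\0}.
\]
The symmetric part of $\M$ is $\mu\I_{2d}\succ\0$, so $\M$ is invertible, and eliminating $\yy=\tfrac1\mu\mathbf{B}\xx$ from the second block of $\M\tvect{\xx}{\yy}=\hat\vv$ gives $(\mathbf{B}^2+\mu^2\I)\xx=\vv$; hence the top block of $\M^{-1}\hat\vv$ is $\xx^*$, and an additive‑$\eps\norm{\vv}$ solution of $\M\zz=\hat\vv$ yields one for the squared system. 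Since $\norm{\hat\vv}=\norm{\vv}/\mu$, the required relative accuracy $\eps\mu$ enters only polylogarithmically, so $\RidgeSquare$ forms $\M,\hat\vv$, runs $\AsySVRG(\M,\hat\vv,\zz_0,\eps\norm{\vv},\delta)$, and returns the top block. For $\AsySVRG$ I would split $\M=\mathbf{D}+\sum_{i=1}^n\mathbf{Q}_i$ with $\mathbf{D}\defeq\left(\begin{smallmatrix}\mu\I&-c\I\\ c\I&\mu\I\end{smallmatrix}\right)$ and $\mathbf{Q}_i\defeq\left(\begin{smallmatrix}\0&\aaa_i\aaa_i^\top\\ -\aaa_i\aaa_i^\top&\0\end{smallmatrix}\right)$: $\mathbf{D}$ acts on $\R^{2d}$ blockwise as a fixed $2\times2$ scaling--rotation (so powers of $\I-\eta\mathbf{D}$ cost $O(1)$ and its action preserves sparsity, enabling lazy updates), each $\mathbf{Q}_i$ is applied in $O(\nnz(\aaa_i))$ time, and $\sum_i\mathbf{Q}_i$ is skew‑symmetric. $\AsySVRG$ is then the monotone‑operator SVRG of \citet{pal16} for $F(\zz)=\M\zz-\hat\vv$: keep an epoch snapshot $\bar\zz$ with the precomputed vector $\mathbf{h}\defeq\big(\sum_j\mathbf{Q}_j\big)\bar\zz-\hat\vv$ (one $O(\nnz(\A))$ pass per epoch), sample $i_t$ with probability $p_{i_t}\propto\norm{\aaa_{i_t}}^2$, and update $\zz_{t+1}=\zz_t-\eta\big(\mathbf{D}\zz_t+\tfrac1{p_{i_t}}\mathbf{Q}_{i_t}(\zz_t-\bar\zz)+\mathbf{h}\big)$, which after lazy propagation of $\I-\eta\mathbf{D}$ costs $O(\nnz(\aaa_{i_t}))$ amortized per step.

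\smallskip
\noindent\textbf{The two estimates.} Everything hinges on: (i) $\M+\M^\top=2\mu\I_{2d}$, so $F$ is $\mu$‑strongly monotone; and (ii) since $\mathbf{Q}_i\tvect{\zz_1}{\zz_2}=\tvect{(\aaa_i^\top\zz_2)\aaa_i}{-(\aaa_i^\top\zz_1)\aaa_i}$, for $p_i=\norm{\aaa_i}^2/\norm{\A}_\mathrm{F}^2$
\[
\E_{i\sim p}\Big\|\tfrac1{p_i}\mathbf{Q}_i\zz\Big\|^2=\norm{\A}_\mathrm{F}^2\big(\norm{\A\zz_1}^2+\norm{\A\zz_2}^2\big)\le\norm{\A}_\mathrm{F}^2\,\lambda_1\,\norm{\zz}^2=\bar L^2\norm{\zz}^2,\qquad \bar L^2\defeq\lambda_1^2\,\mathrm{sr}(\A).
\]
The point is that this beats the naive $\max_i\|\tfrac1{p_i}\mathbf{Q}_i\|^2=\norm{\A}_\mathrm{F}^4=\lambda_1^2\,\mathrm{sr}(\A)^2$ by the factor $\mathrm{sr}(\A)$, because averaging $\aaa_i^\top\zz_j$ over $i\sim p$ gains $\lambda_1/\norm{\A}_\mathrm{F}^2$.

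\smallskip
\noindent\textbf{Convergence, acceleration, runtime, and the hard part.} Running the \citet{pal16} potential argument with (i) and (ii) --- replacing their operator‑norm variance bound by (ii), which is the \emph{more careful analysis of the error} referred to in the introduction --- a step size $\eta=\Theta(\mu/\bar L^2)$ and epoch length $\Theta(\max(n,\bar L^2/\mu^2))$ give a constant‑factor per‑epoch contraction of $\E\norm{\zz_t-\zz^*}^2$, hence an $\eps\norm{\vv}$‑solution in $\tilde O(n+\bar L^2/\mu^2)$ stochastic steps (amplified to probability $1-\delta$ by independent repetitions). Composing this inner solver with approximate proximal point / Catalyst~\cite{FGKS15,LMH15} --- regularize $\M$ by $\mu_{\mathrm{reg}}\I$, warm‑start, solve each inner instance to geometric accuracy, and optimize $\mu_{\mathrm{reg}}$ --- cuts this to $\tilde O(n+\sqrt{n}\,\bar L/\mu)=\tilde O(n+\sqrt{n\,\mathrm{sr}(\A)}\,\lambda_1/\mu)$ steps (this is $\Asyacc$). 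Each step does $O(\nnz(\aaa_{i_t}))$ work and the $\tilde O(1+\bar L/(\sqrt n\,\mu))$ epochs each apply $\M$ once in $O(\nnz(\A))$ time, so the total is $\tilde O\big(\nnz(\A)(1+\lambda_1\sqrt{\mathrm{sr}(\A)}/(\sqrt n\,\mu))\big)\le\tilde O\big(\nnz(\A)+\sqrt{\nnz(\A)\,d\,\mathrm{sr}(\A)}\,\lambda_1/\mu\big)$, using $\nnz(\A)/n\le d$. I expect the main obstacle to be pushing the refined bound (ii) through the \emph{non‑symmetric} SVRG recursion so that the larger step $\eta\sim\mu/\bar L^2$ is legitimately licensed: $F$ here is only monotone, not a gradient field, so the co‑coercivity normally used is unavailable and the variance‑versus‑contraction bookkeeping has to be redone --- this is precisely where the $\sqrt{\mathrm{sr}(\A)}$ gain is secured or lost --- and then checking that an inner solver that is asymmetric and convergent only in expectation still meets the warm‑start and linear‑convergence hypotheses needed to plug into Catalyst/APPA.
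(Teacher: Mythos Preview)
Your plan is essentially the paper's: lift the squared system to a $2d$-dimensional asymmetric system whose symmetric part is a multiple of the identity, run variance-reduced iterations for the resulting monotone operator with the refined second-moment bound $\E_{i\sim p}\bigl\|\tfrac{1}{p_i}\mathbf{Q}_i\zz\bigr\|^2\le\|\A\|_\mathrm{F}^2\lambda_1\|\zz\|^2$ (rather than the crude $\|\A\|_\mathrm{F}^4$), and accelerate via approximate proximal point / Catalyst. The paper packages this slightly differently: it pre-divides by $\mu$ so the symmetric part is $\I$ (strong-PD constant $1$, variance parameter $\|\A\|_\mathrm{F}^2\lambda_1/\mu^2$), and it folds the identity and $-c\I$ contributions proportionally into each $\M_i$ instead of separating a deterministic $\mathbf{D}$, so its per-step cost is simply $O(d)$ with no lazy-update machinery. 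These are equivalent packagings yielding the identical ratio $S^2/\mu_{\mathrm{strong}}^2=\|\A\|_\mathrm{F}^2\lambda_1/\mu^2$ and hence the same final bound. One small slip in your accounting: under $p_i\propto\|\aaa_i\|^2$ the expected per-step work $\E_{i\sim p}[\nnz(\aaa_i)]$ is \emph{not} $\nnz(\A)/n$ in general (a single heavy dense row already makes it $\Theta(d)$), so the step from ``$O(\nnz(\aaa_{i_t}))$ per step'' to ``$\tilde O(\nnz(\A))$ per epoch of length $n$'' is not justified as written; simply bounding each step by $O(d)$, as the paper does, recovers the stated runtime without the lazy-update detour.
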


When the eigenvalues of $\A^\top\A-c\I$ are bounded away from $0$, such a solver can be utilized to solve non-PSD linear systems in form $(\A^\top\A-c\I)\xx=\vv$ through preconditioning and considering the corresponding problem $(\A^\top\A-c\I)^2\xx=(\A^\top\A-c\I)\vv$ (see \cref{cor:square_solver_main}).

\begin{corollary}
	\label{cor:square_solver_main}
	Given $c\in[0,\lambda_1]$, and a non-PSD system $(\A^\top\A-c\I)\xx=\vv$ and an initial point $\xx_0$, for arbitrary $c$ satisfying $(\A^\top\A-c\I)^2\succeq\mu^2\I,\mu>0$, there is an algorithm returns with probability $1-\delta$ a solution $\widetilde{\xx}$ such that $\|\widetilde{\xx}-(\A^\top\A-c\I)^{-1}\vv\|\le\epsilon\|\vv\|$, within runtime 
	$\tilde{O}\bigl(\nnz(\A)+\sqrt{\nnz(\A)d\cdot\mathrm{sr}(\A)}\lambda_1/\mu\bigr)$.
\end{corollary}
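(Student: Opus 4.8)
The plan is to reduce the non-PSD system $(\A^\top\A-c\I)\xx=\vv$ to a \emph{single} PSD system that, after preconditioning by the squared ridge regression solver of \cref{thm:square_solver_main}, has constant condition number. Set $\M\defeq(\A^\top\A-c\I)^2$ and $\hat\vv\defeq(\A^\top\A-c\I)\vv$; the latter is one matrix-vector product costing $O(\nnz(\A))$, and since $\A^\top\A-c\I$ is symmetric and (by $\M\succeq\mu^2\I$) invertible, the target equals $\xx^*\defeq(\A^\top\A-c\I)^{-1}\vv=\M^{-1}\hat\vv$, i.e.\ the exact solution of $\M\xx=\hat\vv$. Note $\mu^2\I\preceq\M\preceq\lambda_1^2\I\preceq\I$, using the normalization $\lambda_1\le1$ and $\|\A^\top\A-c\I\|\le\lambda_1$. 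I would then run an inexactly preconditioned Richardson iteration on $\M\xx=\hat\vv$: from the current $\xx_t$, form the residual $\mathbf{r}_t=\M\xx_t-\hat\vv$ in $O(\nnz(\A))$ time, invoke $\tilde{\mathbf{g}}_t\defeq\RidgeSquare(\A,c,\mu^2,\mathbf{r}_t,\eps',\delta')$ (one call to $\AsySVRG$, which approximately applies the preconditioner $(\M+\mu^2\I)^{-1}$), and set $\xx_{t+1}=\xx_t-\tilde{\mathbf{g}}_t$, initialized at the given $\xx_0$.

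For the analysis, write $\mathbf{e}_t\defeq\xx_t-\xx^*$ and decompose $\tilde{\mathbf{g}}_t=(\M+\mu^2\I)^{-1}\mathbf{r}_t+\mathbf{f}_t$, where \cref{dfn:square} guarantees $\|\mathbf{f}_t\|\le\eps'\|\mathbf{r}_t\|$. Since $\mathbf{r}_t=\M\xx_t-\M\xx^*=\M\mathbf{e}_t$,
\[
\mathbf{e}_{t+1}=\mathbf{e}_t-(\M+\mu^2\I)^{-1}\M\mathbf{e}_t-\mathbf{f}_t=\mu^2(\M+\mu^2\I)^{-1}\mathbf{e}_t-\mathbf{f}_t.
\]
Now $\M\succeq\mu^2\I$ implies $\M+\mu^2\I\succeq2\mu^2\I$, hence $\mu^2(\M+\mu^2\I)^{-1}\preceq\tfrac12\I$, so the first term has norm at most $\tfrac12\|\mathbf{e}_t\|$; and $\|\mathbf{f}_t\|\le\eps'\|\M\mathbf{e}_t\|\le\eps'\lambda_1^2\|\mathbf{e}_t\|\le\eps'\|\mathbf{e}_t\|$. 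Thus $\|\mathbf{e}_{t+1}\|\le(\tfrac12+\eps')\|\mathbf{e}_t\|$, so fixing $\eps'$ a small constant (say $1/6$) makes the iteration a genuine geometric contraction of ratio $2/3$ with no error floor. Since $\|\mathbf{e}_0\|\le\|\xx_0\|+\|\M^{-1}\hat\vv\|\le\|\xx_0\|+\|\vv\|/\mu^2$, after $T=O\!\bigl(\log\tfrac{\|\xx_0\|+\|\vv\|/\mu^2}{\eps\|\vv\|}\bigr)=\tilde O(1)$ iterations we obtain $\|\xx_T-\xx^*\|\le\eps\|\vv\|$, which is exactly the claimed guarantee.

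It remains to tally the cost: each of the $T=\tilde O(1)$ iterations costs $O(\nnz(\A))$ for the residual plus one call to $\AsySVRG$ at the \emph{constant} accuracy $\eps'$, which by \cref{thm:square_solver_main} costs $\tilde O\!\bigl(\nnz(\A)+\sqrt{\nnz(\A)d\cdot\sr(\A)}\,\lambda_1/\mu\bigr)$; their product is the stated runtime. Running each inner solve with failure probability $\delta/T$ and applying a union bound gives overall success probability $1-\delta$. The step I expect to need the most care is the perturbation bookkeeping in the second paragraph --- specifically, observing that because $\RidgeSquare$ guarantees accuracy \emph{relative to its input} $\mathbf{r}_t=\M\mathbf{e}_t$, whose norm itself decays with $\mathbf{e}_t$, the inner solves never need to be more than constantly accurate; this is precisely what prevents the number of inner solves, and hence the overhead beyond a single squared solve, from scaling with $1/\eps$.
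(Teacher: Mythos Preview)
Your proposal is correct and follows essentially the same route as the paper: reduce to the PSD system $(\A^\top\A-c\I)^2\xx=(\A^\top\A-c\I)\vv$, precondition by $(\A^\top\A-c\I)^2+\mu^2\I$ via $\RidgeSquare$, and run $\tilde O(1)$ Richardson steps. The only cosmetic differences are that the paper uses step size $\eta=1/2$ (yielding contraction $\le 3/4$) whereas you take $\eta=1$ (yielding contraction $\le 1/2+\eps'$), and your treatment of the inexact preconditioner---observing that the inner error scales with $\|\mathbf r_t\|=\|\M\mathbf e_t\|$ so constant inner accuracy suffices---is spelled out more carefully than in the paper.
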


Another byproduct of the rational approximation used in the paper is a nearly-linear runtime for computing an $\eps$-approximate square-root of PSD matrix  $\M\succeq \0$ applied to an arbitrary vector.%

\begin{theorem}[Square-root Computation]
	\label{thm:square_root_solver_main}
For any $\eps\in(0,1)$, given $\mu\I\preceq\M\preceq\lambda\I$, there is an $\eps$-approximate square-root solver (see \cref{dfn:square-root}) $\SR(\M,\vv,\eps,\delta)$ that runs in time 
$$\tilde{O}(\nnz(\M)+\mathcal{T})$$ 
where $\mathcal{T}$ is the runtime for solving $(\M+\kappa\I)\xx=\vv$ for arbitrary $\ \vv\in\R^n$ and $\kappa\in[\tilde{\Omega}(\mu),\tilde{O}(\lambda)]$.
\end{theorem}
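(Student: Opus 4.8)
The plan is to reduce computing $\M^{1/2}\vv$ to a $\tilde O(1)$ number of shifted positive-definite solves $(\M+\tau_j\I)\xx=\vv$, with shifts $\tau_j$ in the range $[\tilde\Omega(\mu),\tilde O(\lambda)]$ for which the hypothesized solver is available, by way of a logarithmic-degree rational approximation of $\sqrt x$ on the spectral interval $[\mu,\lambda]$. The starting point is the scalar identity $x^{-1/2}=\tfrac2\pi\int_0^\infty (x+t^2)^{-1}dt$, which by the spectral theorem lifts to $\M^{1/2}\vv=\tfrac2\pi\,\M\!\int_0^\infty (\M+t^2\I)^{-1}\vv\,dt$. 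I would (i) truncate to a window $t\in[\ell,T]$ with $\ell=\tilde\Theta(\sqrt\mu)$ and $T=\tilde\Theta(\sqrt\lambda)$ (hiding polynomial factors in $\eps$): after multiplication by $x$ the left tail is $\le\tfrac2\pi x\int_0^\ell x^{-1}dt=\tfrac2\pi\ell\le\eps\sqrt x$ once $\ell\le\tfrac{\pi\eps}{2}\sqrt\mu$, and the right tail is $\le\tfrac2\pi x\int_T^\infty t^{-2}dt=\tfrac{2x}{\pi T}\le\eps\sqrt x$ once $T\ge\tfrac{2}{\pi\eps}\sqrt\lambda$, both uniformly over $x\in[\mu,\lambda]$; and (ii) apply the trapezoidal rule after the substitution $t=e^u$. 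The transformed integrand $u\mapsto e^u\,x(x+e^{2u})^{-1}$ is analytic in a horizontal strip and, together with its derivatives, exponentially small at the ends of the (length-$\tilde O(\log(\lambda/\mu))$) window, so the trapezoidal rule with spacing $h=\tilde\Omega(1)$ converges geometrically and $N=\tilde O(1)$ nodes suffice; alternatively one may directly invoke Zolotarev's optimal rational approximation to $\sqrt{\cdot}$~\cite{Z77,NF16}, which likewise has logarithmic degree, the quadrature route having the advantage that its poles and weights are transparently localized. This yields
$$q(x)=\sum_{j=1}^{N}\frac{w_j}{x+\tau_j},\qquad w_j>0,\quad \sum_{j}w_j=\tilde O(\sqrt\lambda),\quad \tau_j\in[\tilde\Omega(\mu),\tilde O(\lambda)],$$
with $|x\,q(x)-\sqrt x|\le\eps\sqrt x$ for all $x\in[\mu,\lambda]$.

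Given $q$, the algorithm is: for $j=1,\dots,N$ call the assumed solver on $(\M+\tau_j\I)\xx=\vv$ (legitimate since $\tau_j$ lies in the required range) to accuracy $\eps''$ and failure probability $\delta/N$, obtaining $\hat\xx_j$; then form $\hat\zz=\sum_j w_j\hat\xx_j$ and output $\hat\xx=\M\hat\zz$. Writing $\M=\sum_i\sigma_i\nnu_i\nnu_i^\top$ with $\sigma_i\in[\mu,\lambda]$ and $\vv=\sum_i\beta_i\nnu_i$, the spectral theorem gives $\norm{\M q(\M)\vv-\M^{1/2}\vv}^2=\sum_i(\sigma_i q(\sigma_i)-\sqrt{\sigma_i})^2\beta_i^2\le\eps^2\sum_i\sigma_i\beta_i^2=\eps^2\norm{\M^{1/2}\vv}^2$, so the exact rational surrogate already meets the relative-error target. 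For the inexactness, $\norm{\hat\zz-q(\M)\vv}\le(\sum_j w_j)\,\eps''\norm{\vv}=\tilde O(\sqrt\lambda)\,\eps''\norm{\vv}$, hence $\norm{\hat\xx-\M q(\M)\vv}\le\norm{\M}\cdot\tilde O(\sqrt\lambda)\,\eps''\norm{\vv}=\tilde O(\lambda^{3/2})\,\eps''\norm{\vv}$; using $\norm{\M^{1/2}\vv}\ge\sqrt\mu\,\norm{\vv}$ and choosing $\eps''=\tilde\Theta(\eps\sqrt\mu/\lambda^{3/2})$ makes this $\le\eps\norm{\M^{1/2}\vv}$ as well, so by the triangle inequality $\norm{\hat\xx-\M^{1/2}\vv}\le 2\eps\norm{\M^{1/2}\vv}$, which is the required guarantee after rescaling $\eps$ by a constant. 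A union bound over the $N$ solves gives overall success probability $1-\delta$.

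For the runtime, $\log(1/\eps'')=\tilde O(1)$ and $\log(N/\delta)=\tilde O(1)$, so each of the $N=\tilde O(1)$ solves costs $\tilde O(\mathcal T)$ by definition of $\mathcal T$, while the remaining work is $O(N)$ scaled vector additions together with one multiplication by $\M$, i.e. $\tilde O(\nnz(\M))$. The total is $\tilde O(\nnz(\M)+\mathcal T)$, as claimed.

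I expect the main obstacle to be the quantitative control of the rational approximation: one must simultaneously ensure logarithmic degree $N=\tilde O(1)$, all poles confined to the range $[\tilde\Omega(\mu),\tilde O(\lambda)]$ so that the hypothesized shifted solver applies, and polynomially bounded weights so that inexact solves do not amplify the error. The quadrature-of-$\int_0^\infty(x+t^2)^{-1}dt$ route makes all three hold essentially by construction — the poles are nodes squared and the weights are quadrature weights, all localized by the change of variables — whereas deriving the same facts from Zolotarev's closed form would require separately locating its poles; in either case the delicate point is justifying the near-exponential accuracy of the quadrature after the logarithmic substitution, equivalently pinning down the $\tilde O(1)$ degree bound and tracking its $\eps$- and $(\lambda/\mu)$-dependence so that the shifts indeed stay in the solver's admissible interval.
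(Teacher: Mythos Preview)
Your proposal is correct but follows a genuinely different route from the paper. The paper uses a modified Zolotarev rational $\hat r^\mu_k(x)=Cx\prod_{i=1}^k\frac{x+c_{2i}}{x+c_{2i-1}}$ in \emph{product} form (obtained from the sign-function approximation via $\hat r^\mu_k(x^2)=x\cdot r^{\sqrt\mu}_k(x)$), applies the factors \emph{sequentially}---multiply by $(\M+c_{2i}\I)$, then solve against $(\M+c_{2i-1}\I)$---and controls the accumulated error through \cref{lem:stable} together with the coefficient bounds of \cref{lem:bound}. You instead take the Cauchy-type identity $x^{-1/2}=\tfrac{2}{\pi}\int_0^\infty(x+t^2)^{-1}\,dt$, truncate, and apply trapezoidal quadrature after $t=e^u$ to obtain a \emph{partial-fraction} form $q(x)=\sum_j w_j/(x+\tau_j)$; the shifted solves are then independent and the error analysis is a single triangle inequality over a sum. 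Your approach is more elementary (no elliptic-function machinery), naturally parallelizable, and has a cleaner stability argument; the paper's Zolotarev route buys tighter control on the pole locations, since its shifts $c_{2i-1}$ sit in $[\tilde\Omega(\mu),\tilde O(\lambda)]$ with only polylogarithmic slack, whereas your truncation at $\ell=\Theta(\eps\sqrt\mu)$ and $T=\Theta(\sqrt\lambda/\eps)$ puts the extreme poles at $\Theta(\eps^2\mu)$ and $\Theta(\lambda/\eps^2)$---polynomial in $\eps$ outside the endpoints---which under the paper's $\tilde O$ convention (polylog hiding only) slightly overreaches the solver's stated range. This is a minor nuisance rather than a gap: you could either absorb it by widening the hypothesis on $\mathcal T$ by $\mathrm{poly}(1/\eps)$ in the shift, or switch to Zolotarev (which you mention) to recover the tighter localization.
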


\subsection{Comparison to Previous Work}
\label{sec:comparison}

\textbf{PCA and Eigenvector Computation}: PCA and eigenvector computation are well-known to be solvable using the classic power method, Chebyshev iteration, and Lanczos methods (see e.g.~\citet{golub2012matrix}). These deterministic methods all have an $\Omega(k \cdot \nnz(A))$ running time for computing the top $k$-eigenvectors, and are suitable for the high-accuracy approximation since they only depend logarithmically on accuracy $\eps$. On the other hand, there have also been several stochastic works which improve the running time for obtaining the top eigenvector~\cite{oja1982simplified,oja1985stochastic,clarkson2013low} at the cost of having a polynomial dependence on the desired accuracy $\epsilon$. This is a stochastic method designed in~\citet{allen2016lazysvd}, which has a $\Omega(k\cdot nd)$ running time but depend only logarithmically on $\eps$.  This method was built of  \cite{GHJ+16}, which provided the first stochastic algorithm that computes the top eigenvector in nearly-linear runtime.  The shift-and-invert method discussed in \cite{GHJ+16} broadly inspired the squared system solving in this paper, which can be viewed as shift-and-inverting at threshold $\lambda$ for the squared system $(\A^\top\A-\lambda\I)^2$. Note these methods all have an intrinsic linear dependence in $k$ for explicitly expressing the top-$k$ eigenvectors.

We remark that one can also use fast-matrix multiplication (FMM) to compute $\A^\top \A$ in $O(n d^\omega)$ time and compute SVD of this matrix in an additional $O\left(d^{\omega}\right)$ time, where $\omega < 2.379$~\cite{Williams12}  is the matrix multiplication constant, to speed up the time for PCA, PCR, and PCP. Given the well known practicality concerns of methods using fast matrix multiplication we focus much of our comparison on methods that do not use FMM.

\textbf{PCP and PCR}: The starting point for our paper is the work of \citet{FMMS16}, which provided the first nearly linear time algorithm for these problems when the eigengap is constant, by reducing the problem to finding the best polynomial approximation to sign function and solving a sequence of regression problems. This dependence of these methods on the eigengap was improved by \citet{AL17} through Chebyshev polynomials and then in \citet{musco2018stability} by Lanczos method. %
 These algorithms were first to achieve input sparsity for eigenspaces of any non-trivial size, but with super-linear running times whenever the eigenvalue-gap is super-constant. Instead of applying polynomial approximation, we use rational function approximations and reduce to different subproblems to get new algorithms with better running time guarantee in some regimes. See \cref{tab:Cp} for a comparison of results. 
\begin{table}[h]
\centering
\caption{Comparison with previous PCP/PCR runtimes. (See \cref{sec:results} for notation.)}
\label{tab:Cp}
\begin{tabular}{ c | c }
 \toprule
 {\bf Algorithm} & {\bf Runtime} \\
  \midrule\midrule
 FMMS16 \cite{FMMS16} & $\tilde{O}\left(\frac{1}{\gamma^2}\left(\nnz(\A)+d\cdot\mathrm{sr}(\A)\kappa\right)\right)$ \\ 
 \midrule
 AL17 \cite{AL17}, MMS18 \cite{musco2018stability}  & $\tilde{O}\left(\frac{1}{\gamma}\left(\nnz(\A)+d\cdot\mathrm{sr}(\A)\kappa\right)\right)$ \\
 \midrule
 \cref{thm:pcp_main,thm:pcr_main} & $\tilde{O}\left(\nnz(\A)+\sqrt{\nnz(\A)\cdot d\cdot\mathrm{sr}(\A)}\kappa/\gamma\right)$\\
 \bottomrule
\end{tabular}
\end{table}

\textbf{Asymmetric SVRG and Iterative Methods for Solving Linear Systems}: Variance reduction or varianced reduced iterative methods (e.g. SVRG \cite{johnson2013accelerating} is a powerful tool for improving convergence of stochastic methods. There has been work that used SVRG to develop primal-dual algorithms for solving saddle-point problems and extended it to monotone operators~\cite{pal16}. Our asymmetric SVRG solver can be viewed as an instance of their algorithm. We obtain improved running time analysis by performing a more fine-grained analysis exploiting problem structure.
For solving non-PSD system $(\A^\top\A-c\I)\xx=\vv$ satisfying $(\A^\top\A-c\I)^2\succeq\mu^2\I,\mu>0$, we provide~\cref{tab:Cp-2} to comparing the effectiveness of our asymmetric SVRG solver with some classic optimization methods for solving the problem (full discussion in~\cref{sec:SVRG} and~\cref{App:direct}). 
Note that when the condition number (i.e. $(\lambda_1/\mu)^2$) of the squared PSD system ($(\A^\top\A-c\I)^2$) is sufficiently large, our method (see~\cref{cor:square_solver_main}) can yield substantially improved running times.

\begin{table}[h]
\centering
\caption{Comparison for runtimes of solving non-PSD system $(\A^\top\A-c\I)\xx=\vv$.}
\label{tab:Cp-2}
\begin{tabular}{ c | c }
 \toprule
 {\bf Method} & {\bf Runtime} \\
  \midrule\midrule
AGD applied to squared counterpart & $\tilde{O}(\nnz(\A)\lambda_1/\mu)$ \\ 
 \midrule
SVRG applied to squared counterpart & $\tilde{O}(\nnz(\A)+\nnz(\A)^{3/4}d^{1/4}\sr(\A)^{1/2}\lambda_1/\mu)$ \\
 \midrule
Asymmetric SVRG (\cref{cor:square_solver_main}) & $\tilde{O}\bigl(\nnz(\A)+\sqrt{\nnz(\A)d\cdot\mathrm{sr}(\A)}\lambda_1/\mu\bigr)$\\
 \bottomrule
\end{tabular}
\end{table}

\subsection{Paper Organization}

The remainder of the paper is organized as follows. In \cref{sec:PCP}, we reduce the PCP problem\footnote{We refer reader to \cref{ssec:PCR} for the known reduction from PCR to PCP.} to matrix sign approximation and study the property of Zolotarev rational function used in approximation. In \cref{sec:SVRG}, we develop the asymmetric and squared linear system solvers using variance reduction and show the theoretical guarantee to prove \cref{thm:square_solver_main}, and correspondingly \cref{cor:square_solver_main}. In \cref{App:main}, we give the pseudocode of algorithms and formal proofs of main results in the paper, namely \cref{thm:pcp_main,thm:pcr_main,thm:square_root_solver_main}. In \cref{sec:experiment}, we conduct experiments and compare with previous methods to show efficacy of proposed algorithms. We conclude the paper in \cref{sec:conclusion}.

\section{PCP through Matrix Sign Approximation}
\label{sec:PCP}

Here we provide our reductions from PCP
to sign function approximation. We consider the rational approximation $r(x)$ found by~\citet{Z77} and study its properties for efficient (\cref{cor:approx}) and stable (\cref{lem:bound}) algorithm design to reduce the problem to solving squared ridge regressions. 

Throughout the section, we denote the sign function as $\sgn(x):\R\rightarrow\R$, where $\sgn(x) = 1$ whenever $ x> 0 $, $\sgn(x) = -1$ whenever $ x< 0 $, and $\sgn(0) = 0$. We also let $\mathcal{P}_k\defeq\{a_kx^k+\cdots+a_1x+a_0|a_k\neq0\}$ denote the class of degree-$k$ polynomials and $\mathcal{R}_{m,n}\defeq\{r_{m,n}|r_{m,n}=p_m/q_n,p_m\in\mathcal{P}_m$, $q_n\in\mathcal{P}_n\}$ denote the class of $(m,n)$-degree (or referred to as $\max\{m,n\}$-degree) rational functions.

For the PCP problem (see \cref{dfn:PCP}), we need an efficient algorithm that can approximately apply $\PP_\lambda$ to any given vector $\vv\in\R^d$. Consider the shifted matrix $\A^\top\A-\lambda\I$ so that its eigenvalues are shifted to $[-1,1]$ with $\lambda$ mapping to 0. Previous work has shown~\cite{FMMS16,AL17} solving PCP can be reduced to finding $f(x)$ that approximates sign function $\sgn(x)$ on $[-1,1]$, formally through the following reduction.

\begin{lemma}[Reduction: from PCP to Matrix Sign Approximation]\label{red:sign}
	Given a function $f(x)$ that $2\epsilon$-approximates
	 $\sgn(x)$:
	\begin{align}
		|f(x)-\sgn(x)|\le2\eps,\forall|x|\in[\lambda\gamma,1]
	\quad\text{and}\quad
	|f(x)|\le1,\forall x\in[-1,1],
	\label{cond:red}
	\end{align}
	then $\widetilde{\vv}=\frac{1}{2}\left(f(\A^\top\A-\lambda\I)+\I\right)\vv$ is an $\eps$-approximate PCP solution satisfying~(\ref{cond:PCP}).
	\end{lemma}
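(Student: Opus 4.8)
The plan is to diagonalize and reduce the whole statement to scalar estimates. Write $\A^\top\A=\V\LLambda\V^\top$ and expand the input as $\vv=\sum_{i}\alpha_i\nnu_i$, so that $\|\vv\|^2=\sum_i\alpha_i^2$. Under the normalization $\lambda_1\in[1/2,1]$ and $\lambda\in(0,\lambda_1)$, every eigenvalue of the shifted matrix $\A^\top\A-\lambda\I$ lies in $[-\lambda,\lambda_1-\lambda]\subseteq[-1,1]$, so $f(\A^\top\A-\lambda\I)$ is well defined by the spectral theorem and, setting $g(x)\defeq\tfrac12(f(x)+1)$, we have $\widetilde{\vv}=g(\A^\top\A-\lambda\I)\vv=\sum_i g(\lambda_i-\lambda)\alpha_i\nnu_i$ and $\widetilde{\vv}-\vv=\sum_i\bigl(g(\lambda_i-\lambda)-1\bigr)\alpha_i\nnu_i$. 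Since each $\PP_{(1\pm\gamma)\lambda}$ is a spectral projection of $\A^\top\A$ onto the eigenvectors $\nnu_i$ with $\lambda_i\ge(1\pm\gamma)\lambda$, it commutes with $g(\A^\top\A-\lambda\I)$ and simply restricts these sums to the corresponding index set; by Parseval the squared norm of such a restricted sum is the sum of $|{\cdot}|^2\alpha_i^2$ over that set.

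First I would convert the hypothesis (\ref{cond:red}) into pointwise control of $g$ on the three relevant ranges. For $\lambda_i\ge(1+\gamma)\lambda$ we have $\lambda_i-\lambda\in[\gamma\lambda,1]$, hence $\sgn(\lambda_i-\lambda)=1$ and $|g(\lambda_i-\lambda)-1|=\tfrac12|f(\lambda_i-\lambda)-1|\le\eps$. For $\lambda_i<(1-\gamma)\lambda$ we have $\lambda_i-\lambda\in[-1,-\gamma\lambda)$, hence $\sgn(\lambda_i-\lambda)=-1$ and $|g(\lambda_i-\lambda)|=\tfrac12|f(\lambda_i-\lambda)+1|\le\eps$. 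For $\lambda_i$ strictly between $(1-\gamma)\lambda$ and $(1+\gamma)\lambda$ we only use $|f|\le1$, giving $g(\lambda_i-\lambda)\in[0,1]$ and in particular $|g(\lambda_i-\lambda)-1|\le1$. The key point here is that in the first two ranges the shifted eigenvalue $\lambda_i-\lambda$ always has absolute value in $[\gamma\lambda,1]$, which is exactly where (\ref{cond:red}) gives a guarantee; this is the only place the normalization $\lambda_1\le1$ (together with $\lambda_i\ge0$) is needed.

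Then I would verify the three conditions of (\ref{cond:PCP}) one at a time. For the first, $\PP_{(1+\gamma)\lambda}(\widetilde{\vv}-\vv)$ is supported on indices with $\lambda_i\ge(1+\gamma)\lambda$, so $\|\PP_{(1+\gamma)\lambda}(\widetilde{\vv}-\vv)\|^2=\sum_{\lambda_i\ge(1+\gamma)\lambda}|g(\lambda_i-\lambda)-1|^2\alpha_i^2\le\eps^2\sum_i\alpha_i^2=\eps^2\|\vv\|^2$. For the second, $(\I-\PP_{(1-\gamma)\lambda})\widetilde{\vv}$ is supported on $\lambda_i<(1-\gamma)\lambda$, so $\|(\I-\PP_{(1-\gamma)\lambda})\widetilde{\vv}\|^2=\sum_{\lambda_i<(1-\gamma)\lambda}|g(\lambda_i-\lambda)|^2\alpha_i^2\le\eps^2\|\vv\|^2$. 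For the third, $\PP_{(1+\gamma)\lambda}-\PP_{(1-\gamma)\lambda}$ is, up to sign, the projection onto the ``gap'' indices with $(1-\gamma)\lambda\le\lambda_i<(1+\gamma)\lambda$, on which $|g(\lambda_i-\lambda)-1|\le1$, so $\|(\PP_{(1+\gamma)\lambda}-\PP_{(1-\gamma)\lambda})(\widetilde{\vv}-\vv)\|^2\le\sum_{\text{gap}}\alpha_i^2=\|(\PP_{(1+\gamma)\lambda}-\PP_{(1-\gamma)\lambda})\vv\|^2$, which is even stronger than the stated bound (the $\eps\|\vv\|$ slack is not needed). Taking square roots in all three estimates completes the proof.

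I do not expect a real obstacle: the argument is essentially bookkeeping with commuting spectral projections. The only points requiring care are the boundary conventions — checking that eigenvalues exactly at the thresholds $(1\pm\gamma)\lambda$ are assigned consistently with the $\ge$ convention in the definition of $\PP_{(1\pm\gamma)\lambda}$, and confirming that in the outer two ranges the shifted eigenvalue really lands in $[\gamma\lambda,1]$ in absolute value so that (\ref{cond:red}) is applicable — both of which follow immediately from $0\le\lambda_i\le\lambda_1\le1$ and $\lambda>0$.
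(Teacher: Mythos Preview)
Your proof is correct and follows essentially the same approach as the paper: diagonalize $\A^\top\A$, reduce to scalar estimates on each eigencoordinate, and verify the three conditions of (\ref{cond:PCP}) by splitting the eigenvalues into the ``above'', ``below'', and ``gap'' regions. The only cosmetic difference is that you introduce the auxiliary function $g(x)=\tfrac12(f(x)+1)$, whereas the paper works directly with $\tfrac12(f(\LLambda-\lambda\I)-\I)$ and $\tfrac12(f(\LLambda-\lambda\I)+\I)$; the underlying computations are identical, and like the paper you in fact obtain the third bound without needing the $\eps\|\vv\|$ slack.
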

	
\begin{proof}%

Note that $\A^\top\A=\V\LLambda\V^\top$ where $\LLambda=\diag(\lambda_1,\cdots,\lambda_d)$ and each column of $\V$ is $\nnu_i,i\in[d]$. We can write $\vv= \sum_{i\in[d]}{\alpha_i\nnu_i}$ %
, and therefore $\PP_\lambda\vv=\sum_{i\in[k]} \alpha_i\nnu_i$. This also implies $f(\A^\top\A-\lambda\I)=\V f(\LLambda-\lambda\I)\V^\top$ where $f(\LLambda-\lambda\I) \defeq \diag(f(\lambda_1-\lambda),\cdots,f(\lambda_d-\lambda))$.

Now we define $k_1, k_2 \in [n]$ to divide the eigenvalues $\lambda_i$ and corresponding $\nnu_i$ into three settings, (1) $\lambda_i\ge\lambda(1+\gamma)$ for all $i\le k_1$, (2) $\lambda_i\in(\lambda(1-\gamma),\lambda(1+\gamma))$ for all $k_1<i\le k_2$, and (3) $\lambda_i\le\lambda(1-\gamma)$ for all $i > k_2$. Since by assumption $0<\lambda<\lambda_1\in[1/2,1]$ and $\gamma\in(0,1)$, it holds that  $\lambda_i-\lambda\in[\lambda\gamma,1]$
when $i\le k_1$)%
. Similarly, $\lambda_i-\lambda\in[-1,-\lambda\gamma]$ when $i< k_2$ and $\lambda_i-\lambda\in(-\lambda\gamma,\lambda\gamma)$ when $k_1<i\le k_2$. Consequently,  $|f(\lambda_i-\lambda)-\sgn(\lambda_i-\lambda)|\le 2\eps,\forall \lambda_i\notin(\lambda(1-\gamma),\lambda(1+\gamma))$ and we have that $|f(\lambda_i-\lambda)-\sgn(\lambda_i-\lambda)|\le 2,\forall \lambda_i\in(\lambda(1-\gamma),\lambda(1+\gamma))$.

 Noticing $\widetilde{\vv} = \frac{1}{2}\V(f(\LLambda-\lambda\I)+\I)\V^\top\vv$, and thus we have
 \begin{align*}
 \widetilde{\vv} - \vv  = \frac{1}{2}\V(f(\LLambda-\lambda\I)+\I)\V^\top\vv-\V\V^\top\vv
  = \frac{1}{2}\V(f(\LLambda-\lambda\I)-\I)\V^\top\vv 
 \end{align*}
 The result then follows by the following, which shows that  all conditions in~(\ref{cond:PCP}) are satisfied 
 \begin{align*}
 \|\PP_{(1+\gamma)\lambda}(\widetilde{\vv}-\vv)\|
  & \le\|\eps\V[\alpha_1,\cdots,\alpha_{k_1},0,\cdots,0]^\top\|\le\eps\|\vv\|,\\
 \|(\I-\PP_{(1-\gamma)\lambda})\tilde{\vv})\|
 & \le \|\eps\V[0,\cdots,0,\alpha_{k_2+1},\cdots,\alpha_d]^\top\|\le\eps\|\vv\|,\\
 \|(\PP_{(1+\gamma)} - \PP_{(1-\gamma)\lambda}) (\widetilde{\vv} - \vv)
\|
& \le
\|\V[\cdots,0,\alpha_{k_1+1},\cdots,\alpha_{k_2},0,\cdots]^\top\| =
\|(\PP_{(1+\gamma)} - \PP_{(1-\gamma)\lambda}) \vv \|.
 \end{align*}
 \end{proof}

In previous work~\cite{FMMS16,AL17}, the function $f(x)$ used to approximate $\sgn(x)$ are polynomials applied to the shifted-and-rescaled matrix $x\leftarrow(\A^\top\A+\lambda\I)^{-1}(\A^\top\A-\lambda\I)$.  Here the shifted-and-rescaled matrix $(\A^\top\A+\lambda\I)^{-1}(\A^\top\A-\lambda\I)$ is used instead of simply considering $\A^\top\A-\lambda\I$ to alleviate a multiplicative dependence on $1/\lambda$ in the runtime. These works noted that the optimal degree for achieving $|f(x)-\sgn(x)|\le2\eps,\forall|x|\in[\gamma,1]$
 is known by~\citet{eremenko2007uniform} to be $\tilde{O}(1/\gamma)$. Consequently, solving PCP requires solving at least $\tilde{O}(1/\gamma)$ ridge regressions in this framework.
 
 Departing from such an approach, we use Zolotarev rational function to directly approximate $\sgn( \A^\top\A-\lambda\I )$. This reduces the rational  degree to $O(\log(1/\lambda\gamma)\log(1/\eps))$, leading to the nearly input sparsity runtime improvement in the paper. 

Formally, Zolotarev rationals are defined as the optimal solution $r_k^\gamma(x)=x\cdot p(x^2)/q(x^2)\in\mathcal{R}_{2k+1,2k}$ for the optimization problem:
\begin{equation}
\begin{aligned}
& \underset{p,q\in P_k}{\text{max}}\underset{\gamma\le x\le1}{\text{min}}
& & x\frac{p(x^2)}{q(x^2)} \\
& \text{s.t.}
& & x\frac{p(x^2)}{q(x^2)}\le 1,\forall x\in[0,1]
\end{aligned}
\end{equation}
Zolotarev \cite{Z77} showed this optimization problem (up to scaling) is equivalent to solving
\begin{align*}
    \min\limits_{r\in \mathcal{R}_{2k+1,2k}}\max\limits_{|x|\in[\gamma,1]} |\mathrm{sgn}(x)-r(x)|
    ~.
\end{align*}
Further Zolotarev \cite{NF16}  showed that the analytic formula of $r^\gamma_k$ is given by 
\begin{align}
    r^\gamma_k(x) = C x\prod_{i\in[k]}\frac{x^2+c_{2i}}{x^2+c_{2i-1}} \label{eqn:Zolo}
    \text{ with }
c_i \defeq \gamma^2\frac{\mathrm{sn}^2(\frac{i K'}{2k+1};\gamma')}{\mathrm{cn}^2(\frac{i K'}{2k+1};\gamma')}, i\in[2k].
\end{align}
Here, all the constants depend on the explicit range $|x|\in[\gamma,1]$ we want to approximate uniformly. $C$ is computed through solving $ 1-r^\gamma_k(\gamma) = -(1-r^\gamma_k(1))$ %
, and coefficients $\{c_i\}_{i=1}^{2k}$ are computed through Jacobi elliptic coefficients, all as follows:
\begin{align*}
\text{coefficients} & 
\begin{cases}
	C & \defeq \frac{2}{(\gamma\prod_{i\in[k]}\frac{\gamma^2+c_{2i}}{\gamma^2+c_{2i-1}})+(\prod_{i\in[k]}\frac{1+c_{2i}}{1+c_{2i-1}})},\\
c_i & \defeq \gamma^2\frac{\mathrm{sn}^2(\frac{i K'}{2k+1};\gamma')}{\mathrm{cn}^2(\frac{i K'}{2k+1};\gamma')}, \forall i \in \{1, 2, \cdots,2k \}.
\end{cases}\\
\text{numerical constants} & 
\begin{cases}
	\gamma' & \defeq \sqrt{1-\gamma^2},\\
     K' & \defeq \int_0^{\pi/2}\frac{d\theta}{\sqrt{1-\gamma'^2\mathrm{sin}^2\theta}},\\
     u & \defeq F(\phi;\gamma')\defeq\int_0^\phi\frac{d\theta}{\sqrt{1-\gamma'^2\sin^2\theta}},\\ 
     \mathrm{sn}(u;\gamma') & \defeq\sin(F^{-1}(u;\gamma'));\ \mathrm{cn}(u;\gamma')\defeq\cos(F^{-1}(u;\gamma')).
\end{cases}
\end{align*}

This rational polynomial approximates $\sgn(x)$ on range $|x|\in[\gamma,1]$ with error decaying exponentially with degree, as formally characterized by the following theorem.

\begin{theorem}[Rational Approximation Error]
\label{cor:approx}
For any given $ \eps\in(0,1)$, when $k\ge\Omega(\log(1/\eps)\log(1/\gamma))$, it holds that
$$
\max_{|x|\in[\gamma,1]} |\mathrm{sgn}(x)-r^\gamma_k(x)| \le2\epsilon.$$
\end{theorem}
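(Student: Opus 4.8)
The plan is to identify the left-hand side with the extremal error $E_k$ of Zolotarev's rational approximation problem, record $E_k$ in the known closed form coming from Jacobi elliptic functions, and then show that this closed form decays geometrically in $k$ at the rate $\exp(-\Theta(1/\log(1/\gamma)))$, so that $k=\Theta(\log(1/\epsilon)\log(1/\gamma))$ already forces it below $2\epsilon$.

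\emph{Step 1: reduce the sup-norm error to an endpoint value.} By construction $r^\gamma_k$, with the constant $C$ fixed as above, is the optimal solution of $\min_{r\in\mathcal{R}_{2k+1,2k}}\max_{|x|\in[\gamma,1]}|\mathrm{sgn}(x)-r(x)|$, and $C$ is precisely the value making the error equioscillate with its two endpoint deviations balanced, $1-r^\gamma_k(\gamma)=r^\gamma_k(1)-1$. By the equioscillation characterization of best rational approximation the minimal error equals this common deviation, so $\max_{|x|\in[\gamma,1]}|\mathrm{sgn}(x)-r^\gamma_k(x)|=E_k$, where $E_k\defeq 1-r^\gamma_k(\gamma)>0$. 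Writing $a\defeq\gamma\prod_{i\in[k]}\frac{\gamma^2+c_{2i}}{\gamma^2+c_{2i-1}}$ and $b\defeq\prod_{i\in[k]}\frac{1+c_{2i}}{1+c_{2i-1}}$, one has $r^\gamma_k(\gamma)=Ca$, $r^\gamma_k(1)=Cb$ and $C=2/(a+b)$, hence $E_k=(b-a)/(a+b)$; so it remains only to bound this ratio from above by $2\epsilon$.

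\emph{Step 2: closed form for $E_k$ and choice of $k$.} Substituting $c_i=\gamma^2\,\mathrm{sn}^2(iK'/(2k+1);\gamma')/\mathrm{cn}^2(iK'/(2k+1);\gamma')$ into $a$ and $b$ and repeatedly applying the addition and complement identities for $\mathrm{sn},\mathrm{cn},\mathrm{dn}$, the ratio $a/b$ telescopes to a closed expression in the quantity $q\defeq\exp(-\pi K/K')$, where $K\defeq\int_0^{\pi/2}d\theta/\sqrt{1-\gamma^2\sin^2\theta}$ and $K'$ is as defined above; this is exactly Zolotarev's evaluation of the extremal error, stated in~\cite{Z77} and, in the matrix-sign setting, in~\cite{NF16}, and it yields a bound of the form $E_k\le C_0\,q^{ck}$ for absolute constants $C_0,c>0$. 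In the write-up I would quote this identity rather than rederive the telescoping. It then remains to estimate $q$: standard two-sided bounds for complete elliptic integrals give $K\ge\pi/2$ and $K'=K(\gamma')\le c_1\log(4/\gamma)$ for an absolute constant $c_1$, so $\pi K/K'\ge c_2/\log(4/\gamma)$ with $c_2\defeq\pi^2/(2c_1)$, whence $E_k\le C_0\exp(-cc_2\,k/\log(4/\gamma))$. Choosing $k\ge (cc_2)^{-1}\log(4/\gamma)\log(C_0/(2\epsilon))=\Omega(\log(1/\gamma)\log(1/\epsilon))$ makes $E_k\le 2\epsilon$, which is the claim.

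\emph{Expected main obstacle.} The only non-routine ingredient is the telescoping in Step 2, i.e. verifying that the rational product $a/b$ collapses to the clean expression in $q$; this is the technical core of Zolotarev's classical theorem. For a self-contained treatment one establishes it through the elliptic-function parametrization directly, or via a Landen-transformation / composition argument; this is also the place where the $\log(1/\gamma)$ (rather than $1/\gamma$) dependence on the gap originates, since it is the ``distance'' $K/K'\asymp 1/\log(1/\gamma)$ between the two intervals $[\gamma,1]$ and $[-1,-\gamma]$ that governs the geometric rate. Everything else --- the equioscillation normalization of Step 1 and the textbook estimates for $K$ and $K'$ in Step 2 --- is routine.
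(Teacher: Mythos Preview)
Your proposal is correct and follows essentially the same route as the paper: both invoke the classical closed-form expression for the Zolotarev extremal error in terms of a ratio of complete elliptic integrals (the paper quotes it as \cref{lem:approx} from~\cite{gonchar1969zolotarev}, you quote it from~\cite{Z77,NF16}), and both then finish by combining the trivial lower bound $K\ge\pi/2$ with the standard logarithmic upper bound on the complementary integral (the paper proves $K(\mu)\le\beta_1(\log(1/\gamma)+1)$ in \cref{lem:bound}(1) via the Taylor series and the exponential integral, you simply invoke the textbook estimate $K'\le c_1\log(4/\gamma)$). The only cosmetic difference is the choice of elliptic modulus --- the paper uses the Landen-transformed parameter $\mu=(1-\sqrt{\gamma})/(1+\sqrt{\gamma})$ from Gonchar's formulation, you work directly with $\gamma$ and $\gamma'$ --- but the resulting rate $\exp(-\Theta(k/\log(1/\gamma)))$ is the same.
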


To prove \cref{cor:approx}, we first state the following classic result about the approximation error of $r_k^\gamma$ in \cref{lem:approx}. 

\begin{lemma}[Approximation Error]
\label{lem:approx}
The approximation error of $r^\gamma_k$ satisfies:
$$
\max\limits_{|x|\in[\gamma,1]} |\mathrm{sgn}(x)-r^\gamma_k(x)| = C_k \rho^{-2k+1}
\text{ for some  } 
C_{k}\in \left[\frac{2}{1+\rho^{-(2k+1)}},\frac{2}{1-\rho^{-(2k+1)}}\right]\subseteq [2,\frac{2}{1-\rho^{-1}}] ~.
$$
where $\rho\defeq\mathrm{exp}(\frac{\pi K(\mu')}{4K(\mu)})>1$,$K(\mu)\defeq\int_0^1\frac{dt}{\sqrt{(1-t^2)(1-\mu^2t^2)}}$, $\mu\defeq\frac{1-\sqrt{\gamma}}{1+\sqrt{\gamma}},\mu'\defeq\sqrt{1-\mu^2}$.
\end{lemma}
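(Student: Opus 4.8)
The plan is to recognize the statement as the classical Zolotarev minimax bound and reconstruct it in three moves: a symmetry reduction to a one-interval problem, an elliptic-function change of variables that simultaneously certifies $r^\gamma_k$ as the best approximant and exposes its equioscillation, and a theta-function evaluation of the common extremal value.

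First I would use symmetry. Since $\sgn$ is odd and the set $\{|x|\in[\gamma,1]\}$ is symmetric about the origin, uniqueness of the best rational approximant in $\mathcal{R}_{2k+1,2k}$ forces it to be an odd function, hence of the form $x\,p(x^2)/q(x^2)$ with $p,q\in\mathcal{P}_k$ — exactly the shape of $r^\gamma_k$ in (\ref{eqn:Zolo}). Substituting $s=x^2$ collapses the task to minimizing $\max_{s\in[\gamma^2,1]}\bigl|1-\sqrt{s}\,(p/q)(s)\bigr|$ over $p/q\in\mathcal{R}_{k,k}$, i.e.\ best relative approximation of $1/\sqrt{s}$ on a single interval.

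Next I would linearize via Jacobi elliptic functions. Parametrizing $s$ (equivalently $x$) so that $[\gamma^2,1]$ maps onto a real period segment, and applying a Landen transformation — which is precisely what introduces the modulus $\mu=\tfrac{1-\sqrt{\gamma}}{1+\sqrt{\gamma}}$, the complementary modulus $\mu'=\sqrt{1-\mu^2}$, and the associated nome $\exp(-\pi K(\mu')/K(\mu))$ underlying $\rho$ — turns the finite product in (\ref{eqn:Zolo}) into a ratio of Jacobi theta functions and turns the error $1-r^\gamma_k$ into a genuine real elliptic function on the segment, oscillating between values $\pm L$. Counting its sign changes produces the $2k+2$ alternation points that Chebyshev's criterion requires of a nondegenerate degree-$(k,k)$ minimax approximant, which simultaneously certifies $r^\gamma_k$ as optimal and identifies $L$ as the exact value of $\max_{|x|\in[\gamma,1]}|\sgn(x)-r^\gamma_k(x)|$. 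Finally, to evaluate $L$ I would use that the scaling constant $C$ in (\ref{eqn:Zolo}) is pinned down by the centering condition $1-r^\gamma_k(\gamma)=-(1-r^\gamma_k(1))$, so that $L$ emerges as a ratio of the two theta series evaluated at the endpoints; reading off the leading term gives $L=C_k\rho^{-2k+1}$ with $\rho=\exp(\pi K(\mu')/(4K(\mu)))>1$, and keeping the first two terms of each theta product yields the sandwich $C_k\in[\tfrac{2}{1+\rho^{-(2k+1)}},\tfrac{2}{1-\rho^{-(2k+1)}}]$, whence the stated coarser enclosure since $\rho^{-(2k+1)}\le\rho^{-1}<1$.

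The main obstacle is the middle step: checking that the explicit product (\ref{eqn:Zolo}) really coincides with the elliptic/theta object whose equioscillation is transparent, and carrying out the Landen bookkeeping so that the modulus is exactly $\mu=\tfrac{1-\sqrt{\gamma}}{1+\sqrt{\gamma}}$ and the rate is exactly $\rho=\exp(\pi K(\mu')/(4K(\mu)))$ — the factor $4$ arising from combining the squaring $x\mapsto x^2$ with one Landen step at degree $2k+1$. Everything preceding it (odd symmetry, the substitution) and following it (the Chebyshev alternation count, reading off endpoint values of theta series) is routine once these elliptic identities are in hand, and in a full write-up I would invoke the classical derivations in \cite{Z77,NF16} for them rather than reprove them.
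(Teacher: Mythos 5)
Your sketch is a correct high-level reconstruction of the classical Zolotarev/Gonchar argument, but you should be aware that the paper itself does not prove this lemma at all: the proof paragraph is a single sentence stating that the lemma is a restatement of equation (32) in Gonchar (1969) and directing the reader there. So there is no "paper's proof" to compare against in any substantive way — the paper defers wholesale to the classical reference, and your closing line ("in a full write-up I would invoke the classical derivations in \cite{Z77,NF16}\ldots rather than reprove them") lands you in essentially the same place. The added value of your outline is that it explains where the pieces come from: the odd-symmetry reduction to best $(k,k)$ relative approximation of $1/\sqrt{s}$ on $[\gamma^2,1]$, the Landen step that produces the modulus $\mu=(1-\sqrt{\gamma})/(1+\sqrt{\gamma})$, the equioscillation count certifying optimality via the Chebyshev criterion, and the theta-series expansion that yields the sandwich for $C_k$ (and the factor of $4$ in $\rho = \exp(\pi K(\mu')/(4K(\mu)))$). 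This matches the structure of Gonchar's derivation, so your plan is sound; but since you explicitly defer the Landen and theta-function bookkeeping — the only genuinely nontrivial part — your proposal, like the paper's, is ultimately a pointer to the classical literature rather than a self-contained proof. One small note: the exponent in the paper's display, $\rho^{-2k+1}$, is almost certainly a typo for $\rho^{-(2k+1)}$ (as the bracketing of $C_k$ and the use in the proof of \cref{cor:approx} both confirm), and your sketch inherits that typo; worth cleaning up in any write-up.
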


This lemma is simply a restatement of equation (32) in~\citet{gonchar1969zolotarev}. We refer interested readers to detailed derivation and proof there. 

It is crucial to our derivation of \cref{cor:approx} 
and a stable and efficient PCP algorithm that we bound the coefficients in $r^\gamma_k$ and $\rho$ in \cref{lem:approx}. The following lemma provides the key bounds we use for this purpose.%

\begin{lemma}[Bounding Coefficients]
\label{lem:bound}
The coefficients defined above have the following order / bounds (all constant are independent of $\lambda,\gamma,k$ and any other problem parameters):\\
(1) There exists constant $ 0<\beta_1<\infty$, such that $K(\mu)\le \beta_1(\log(1/\gamma)+1)$,%
(2) Coefficients $c_i$ are nondecreasing in $i$, $\forall i\in[2k]$. Also, there exists 
some constants $\beta_2>0,\beta_3<\infty$, such that $c_1\ge\beta_2 \frac{\gamma^2}{k^2}$, $c_{2k}\le \beta_3 k^2, \forall i\in[2k]$.
\end{lemma}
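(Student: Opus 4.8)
The plan is to work directly from the closed-form expressions for $\rho$, $K(\mu)$, and the $c_i$, and to reduce every bound to elementary estimates on the complete elliptic integral $K(\cdot)$ and on the Jacobi functions $\mathrm{sn},\mathrm{cn}$. For part (1), I would recall the standard fact that as the modulus $\mu\to 1^-$ (equivalently $\mu'\to 0^+$, which is the regime here since $\mu = \frac{1-\sqrt\gamma}{1+\sqrt\gamma}\to 1$ as $\gamma\to 0$), the complete elliptic integral has the logarithmic asymptotic $K(\mu) = \log\frac{4}{\mu'} + o(1)$, and more usefully the two-sided bound $\log\frac{4}{\mu'}\le K(\mu)\le \log\frac{4}{\mu'} + \frac{1}{4}$ or some similar clean inequality valid for all $\mu\in[0,1)$. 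Then I would estimate $\mu' = \sqrt{1-\mu^2}$ in terms of $\gamma$: a short computation gives $1-\mu^2 = \frac{4\sqrt\gamma}{(1+\sqrt\gamma)^2}$, so $\mu' = \frac{2\gamma^{1/4}}{1+\sqrt\gamma}\ge \gamma^{1/4}$ (for $\gamma\le 1$), hence $\log(4/\mu')\le \log 4 + \tfrac14\log(1/\gamma)$, which is $\le \beta_1(\log(1/\gamma)+1)$ for a suitable absolute constant $\beta_1$. This also immediately gives (via $K(\mu') = K(\sqrt{1-\mu^2})$ and $K(\mu')\ge K(0) = \pi/2$) a lower bound $\rho = \exp\!\big(\tfrac{\pi K(\mu')}{4K(\mu)}\big)\ge 1 + \Omega(1/\log(1/\gamma))$, which is the quantitative statement needed to feed into \cref{lem:approx} and prove \cref{cor:approx}: a degree $k = \Theta(\log(1/\epsilon)\log(1/\gamma))$ makes $\rho^{-2k+1}\le \epsilon$.

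For part (2), monotonicity of $c_i = \gamma^2\,\mathrm{sn}^2/\mathrm{cn}^2$ evaluated at $u_i = \tfrac{iK'}{2k+1}$ follows because the function $t\mapsto \mathrm{sn}^2(t;\gamma')/\mathrm{cn}^2(t;\gamma')$ is increasing on the relevant interval $[0,K']$ (its derivative is $2\,\mathrm{sn}\,\mathrm{dn}/\mathrm{cn}^3\ge 0$ there, since $\mathrm{sn},\mathrm{cn},\mathrm{dn}$ are all positive on $(0,K')$), and the arguments $u_1<u_2<\dots<u_{2k}$ are increasing and all lie in $(0, K')$ because $u_{2k} = \tfrac{2k}{2k+1}K' < K'$. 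For the extremal bounds I would use small/large argument expansions of $\mathrm{sn},\mathrm{cn}$: near $0$, $\mathrm{sn}(t;\gamma')\approx t$ and $\mathrm{cn}(t;\gamma')\approx 1$, so $c_1\approx \gamma^2 u_1^2 = \gamma^2\big(\tfrac{K'}{2k+1}\big)^2$; since $K' = K(\gamma')\ge K(0) = \pi/2$ is bounded below by an absolute constant, this gives $c_1\ge \beta_2\gamma^2/k^2$. For the top coefficient, as $t\uparrow K'$ we have $\mathrm{cn}(t;\gamma')\to 0$, so I need a lower bound on $\mathrm{cn}(u_{2k};\gamma')$ with $u_{2k} = K' - \tfrac{K'}{2k+1}$; using $\mathrm{cn}(K'-s;\gamma') = \gamma'\,\mathrm{sn}(s;\gamma')/\mathrm{dn}(s;\gamma')$ (a quarter-period shift identity) together with $\mathrm{sn}(s;\gamma')\ge c\,s$ for small $s$ and $s = \tfrac{K'}{2k+1}$, I get $\mathrm{cn}(u_{2k};\gamma')\ge \Omega(\gamma' K'/k)$, while $\mathrm{sn}(u_{2k};\gamma')\le 1$; hence $c_{2k}\le \gamma^2/(\mathrm{cn}(u_{2k};\gamma'))^2 = O(\gamma^2 k^2/(\gamma'^2 K'^2))$. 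Here I need $\gamma'$ and $K'$ bounded below: $\gamma' = \sqrt{1-\gamma^2}\ge \sqrt{1-1} $... wait, for $\gamma\le 2/3$ we have $\gamma'\ge\sqrt{5}/3$, an absolute constant, and $K'\ge\pi/2$; absorbing $\gamma^2\le 1$ gives $c_{2k}\le \beta_3 k^2$.

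The main obstacle is getting the constants genuinely \emph{independent} of $\gamma$ and $k$ in the two extremal bounds of part (2): the naive expansions $\mathrm{sn}(t)\approx t$, $\mathrm{cn}(t)\approx 1$ are only asymptotic, so I must replace them with honest two-sided inequalities valid uniformly in the modulus $\gamma'\in[\sqrt5/3,1)$ and uniformly for arguments in the closed sub-interval $[\,\tfrac{K'}{2k+1},\,K'-\tfrac{K'}{2k+1}\,]$ of $(0,K')$. Concretely I would establish (or cite) that for all moduli $m\in[0,1)$ and all $t\in[0,K(m)]$ one has $\tfrac{2}{\pi}t\le \mathrm{sn}(t;m)\le t$ (the lower bound from concavity of $\mathrm{sn}$ on $[0,K]$, the upper bound from $\mathrm{sn}' = \mathrm{cn}\,\mathrm{dn}\le 1$), and similarly $\mathrm{dn}(t;m)\in[\,m',1\,]$ with $m' = \sqrt{1-m^2}$; plugging these in makes the constants explicit. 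Once these uniform elliptic-function inequalities are in hand, parts (1) and (2) are short, and \cref{cor:approx} drops out of \cref{lem:approx} by choosing $k\ge \frac{\log(2C_k/\epsilon)+1}{2\log\rho}$ and using $C_k\le \frac{2}{1-\rho^{-1}} = O(\log(1/\gamma))$ together with the lower bound $\log\rho = \Omega(1/\log(1/\gamma))$ from part (1), giving $k = O(\log(1/\epsilon)\log(1/\gamma))$ as claimed.
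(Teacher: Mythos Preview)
Your overall strategy is sound and would go through, but two of your stated inequalities in part~(2) are wrong as written and need repair; once fixed, the argument is close in spirit to the paper's, though the paper takes a slightly cleaner route.

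\textbf{Part (1).} Your approach is correct and more direct than the paper's. The paper instead expands $K(\mu)$ as its hypergeometric Taylor series, uses Stirling to bound the coefficients by $O(1/n)$, compares $\sum_n \mu^{2n}/n$ to an integral, and recognizes the exponential integral $E_1(\sqrt\gamma)\sim \log(1/\sqrt\gamma)$. Citing the classical bound $K(\mu)\le \log(4/\mu')+O(1)$ and computing $\mu' = 2\gamma^{1/4}/(1+\sqrt\gamma)$, as you do, gets there in one line.

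\textbf{Part (2): two bugs.} First, the inequality $\mathrm{sn}(t;m)\ge \tfrac{2}{\pi}t$ is false: it would force $\mathrm{sn}(K;m)\ge \tfrac{2}{\pi}K(m)$, which exceeds $1$ once $m$ is close to $1$ (exactly your regime, since $m=\gamma'\to 1$). Concavity of $\mathrm{sn}$ on $[0,K]$ together with $\mathrm{sn}(0)=0$, $\mathrm{sn}(K)=1$ gives only $\mathrm{sn}(t;m)\ge t/K(m)$. Fortunately that is precisely what you need: with $u_1=K'/(2k+1)$ you get $\mathrm{sn}(u_1;\gamma')\ge 1/(2k+1)$ with no dependence on $K'$, hence $c_1\ge \gamma^2/(2k+1)^2$. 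Second, your quarter-period identity has the wrong coefficient: for modulus $m$ one has $\mathrm{cn}(K-s;m)=m'\,\mathrm{sn}(s;m)/\mathrm{dn}(s;m)$ with $m'=\sqrt{1-m^2}$, so here $m=\gamma'$ forces the coefficient to be $\gamma$, not $\gamma'$. With the correct identity and $\mathrm{sn}(s)\ge s/K'$, $\mathrm{dn}(s)\le 1$, you get $\mathrm{cn}(u_{2k};\gamma')\ge \gamma/(2k+1)$ and hence $c_{2k}\le \gamma^2\cdot(2k+1)^2/\gamma^2=(2k+1)^2$, with no appeal to a lower bound on $\gamma'$ or $K'$.

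\textbf{Comparison with the paper.} For part~(2) the paper avoids the shift identity entirely by working with the Jacobi amplitude: writing $\mathrm{sn}(u;\gamma')/\mathrm{cn}(u;\gamma')=\tan\phi$ where $u=F(\phi;\gamma')=\int_0^\phi (1-\gamma'^2\sin^2\theta)^{-1/2}\,d\theta$, one has $c_i=\gamma^2\tan^2\phi_i$ with $\phi_i$ determined by $F(\phi_i;\gamma')=\tfrac{iK'}{2k+1}$. Monotonicity of $c_i$ is then immediate. Since the integrand of $F$ is increasing in $\theta$, $F$ is convex, which gives $\phi_1\ge \tfrac{\pi}{2(2k+1)}$; and since the integrand is at most $1/\gamma$, one gets $\pi/2-\phi_{2k}\ge \gamma K'/(2k+1)$. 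Both extremal bounds on $c_1,c_{2k}$ then follow from $\tan\phi\ge\phi$ and $\cos\phi\ge \tfrac12(\pi/2-\phi)$. Your route via concavity of $\mathrm{sn}$ and the shift identity is equivalent in strength but requires keeping track of the modulus/comodulus carefully; the amplitude formulation packages this bookkeeping automatically.
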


The proof of \cref{lem:bound} can be found in \cref{App:prop}. Now we use this to prove \cref{cor:approx}.

\begin{proof}[Proof of \cref{cor:approx}]
We apply \cref{lem:approx} and simply need to show that $C_k \rho^{-2k+1} \leq 2\epsilon$. Since $\mu'\in[0,1]$ we have $K(\mu')\ge\int_0^1  (1-t^2)^{-1/2} dt = \pi/2$.  Therefore, by (1) in \cref{lem:bound} we have that for some constant $\beta > 0$ we have $\rho \geq \exp(\beta / (\log(1/\gamma)+1))$ and \cref{lem:approx} then yields that
\[
C_k \rho^{-2k+1} \leq \frac{2}{\rho^{2k + 1} - 1}
\leq \frac{2}{\exp(\beta(2k +1)) / (\log(1/\gamma)+1)) -1} ~.
\]
The result follows as $k\ge\Omega(\log(1/\eps) \log(1/\gamma))$.
\end{proof}

As a quick illustration,
\cref{fig:Zooo} shows a comparison between the approximation errors of Zolotarev rational function, the Taylor expansion based polynomial used in~\cite{FMMS16} and Chebyshev polynomial used in~\cite{AL17} all with same degree.

\begin{figure}[htb]%
\centering
\subfigure[$\gamma$=0.1]{%
\includegraphics[width=0.33\textwidth]{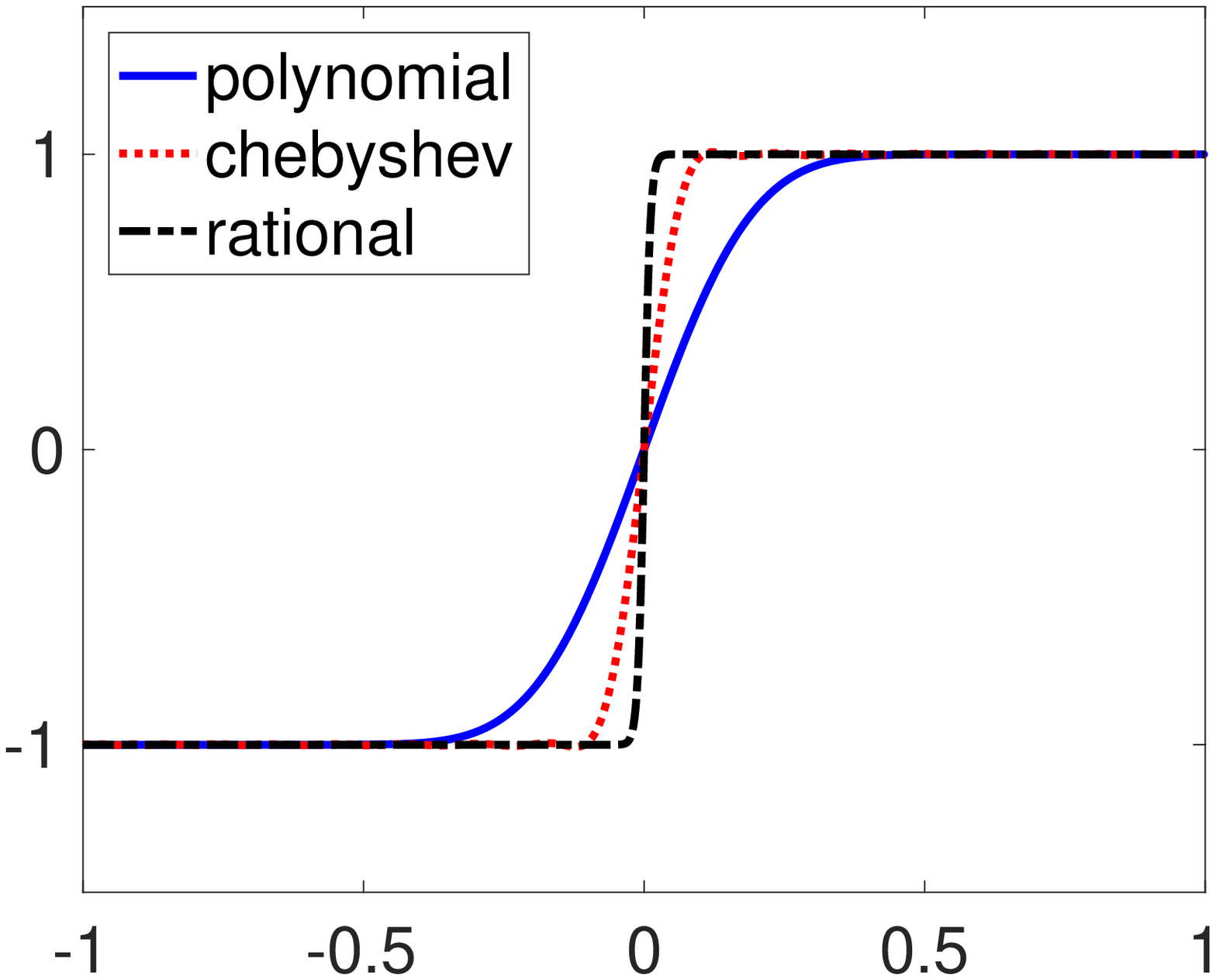}}%
\hfill
\subfigure[$\gamma$=0.05]{%
\includegraphics[width=0.33\textwidth]{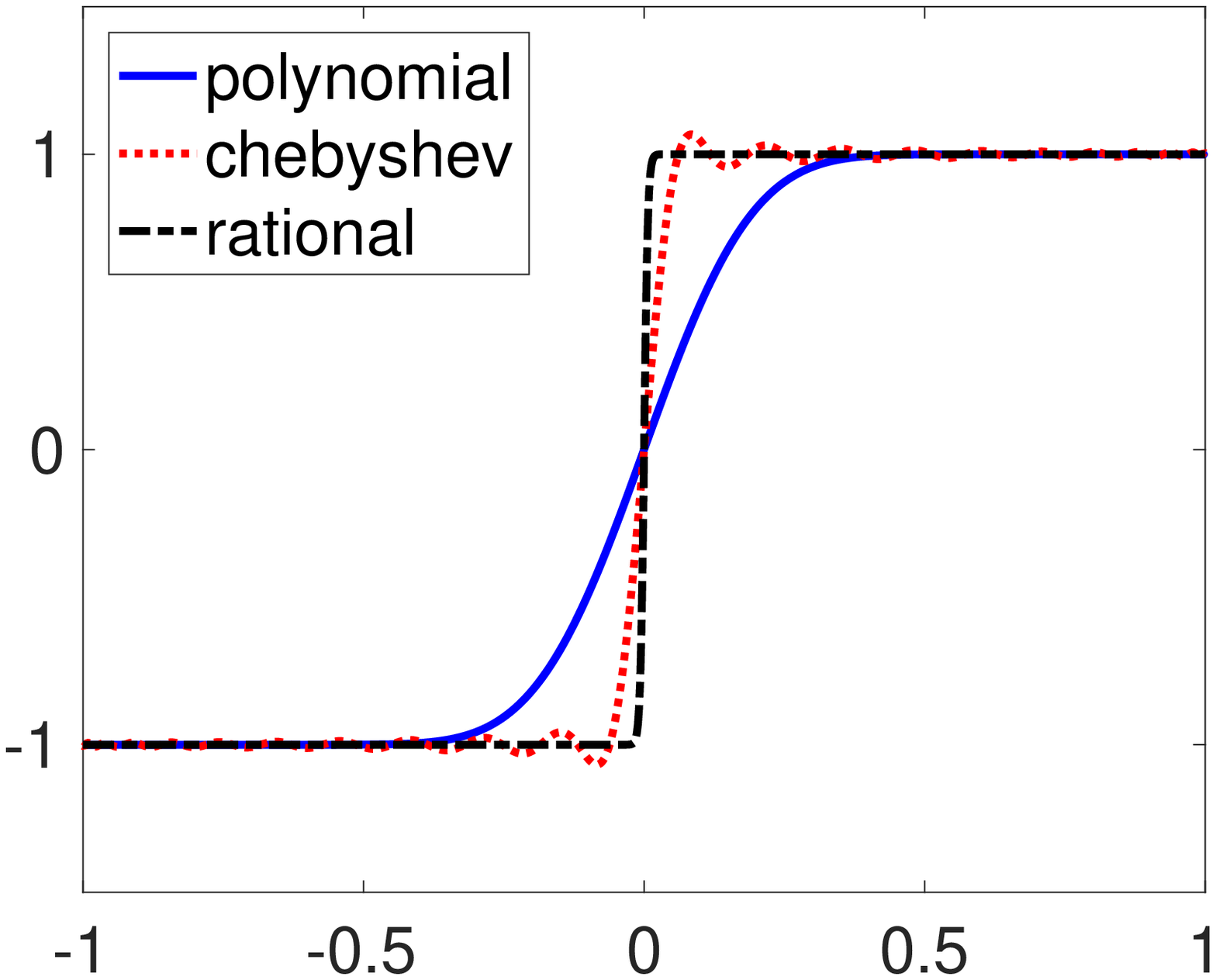}}%
\hfill
\subfigure[$\gamma$=0.01]{%
\includegraphics[width=0.33\textwidth]{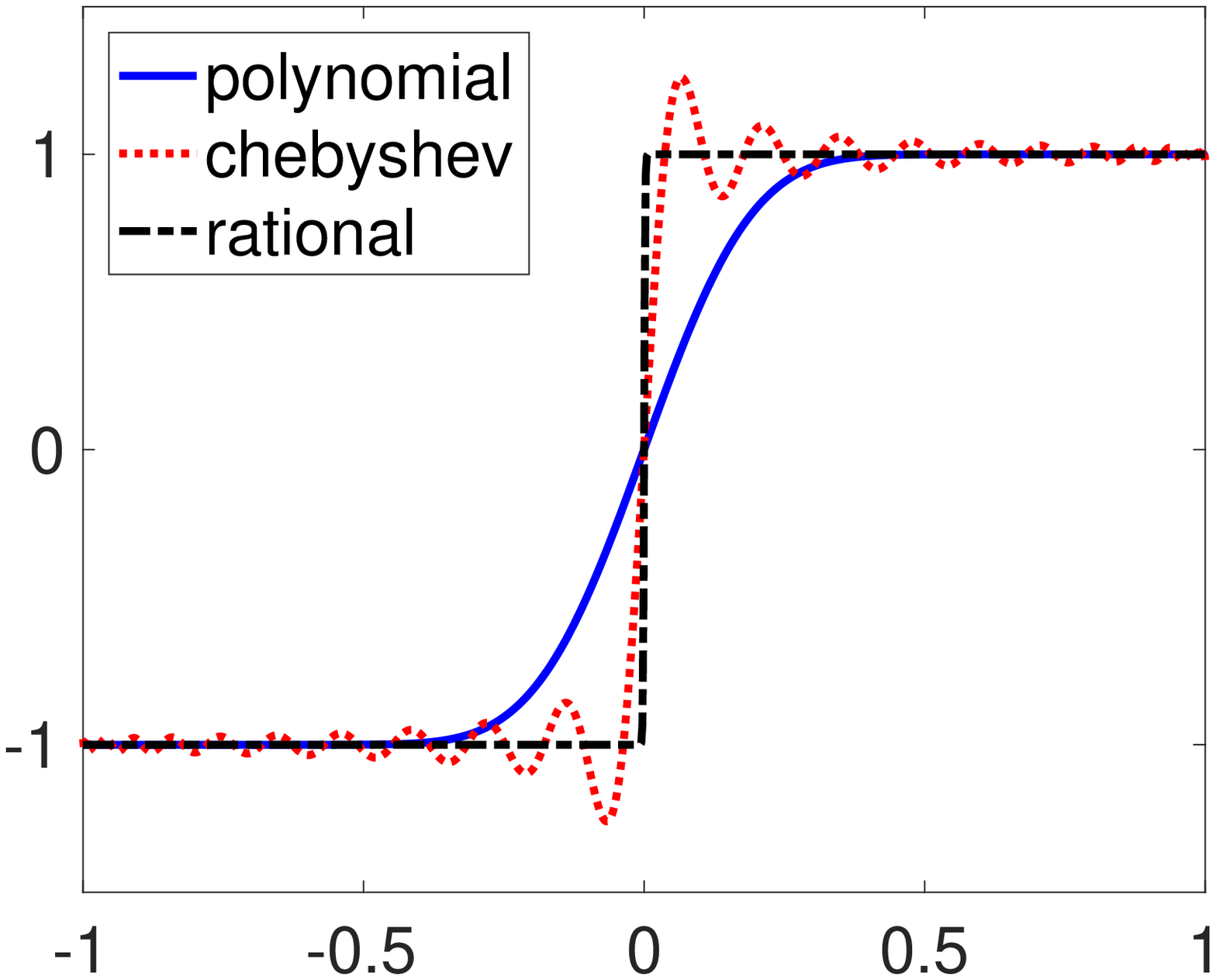}}
\caption{same degree = 21, different $\gamma$}
\label{fig:Zooo}
\vspace{-0.3cm}
\end{figure}
Treating $r^{\lambda\gamma}_k$ with $k\ge\Omega(\log(1/\eps)\log(1/\lambda\gamma))$ as the desired $f$ in \cref{red:sign}, it suffices to compute
\[ r^{\lambda\gamma}_k((\A^\top\A-\lambda\I))\vv= C (\A^\top\A-\lambda\I)\prod_{i=1}^k\frac{(\A^\top\A-\lambda\I)^2+c_{2i}\I}{(\A^\top\A-\lambda\I)^2+c_{2i-1}\I}\vv.
\]
To compute this formula approximately, we need to solve squared linear systems of the form $((\A^\top\A-\lambda\I)^2+c_{2i-1}\I)\xx=\vv$, the hardness of which is determined by the size of $c_{2i-1}(>0)$. The larger $c_{2i-1}$ is, the more positive-definite (PD) the system becomes, and the faster we can solve it. However, (2) in \cref{lem:bound} shows that, the $r_k^{\lambda\gamma}$ to use has coefficients $c_{i}=\tilde{\Omega}(1/\lambda^2\gamma^2)$ since $k=\Theta(\log(1/\eps)\log(1/\lambda\gamma))$. In the next sections we use this to bound the complexity of solving these squared systems.

\section{SVRG for Solving Asymmetric / Squared Systems}
\label{sec:SVRG}

In this section, we reduce solving squared systems to solving asymmetric systems (\cref{red:asySVRG}) and develop SVRG-inspired solvers (\cref{alg:AsySVRG}) to solve them efficiently. We study the theoretical guarantees of the resulting asymmetric SVRG method in both general settings (\cref{thm:asymmetric_main,thm:accel_asymmetric_main}) and our specific case (\cref{thm:main-asySVRG}).  Certain proofs in this section are deferred to \cref{App:SVRG}.

In \cref{sec:PCP}, we obtained low-degree rational function approximation $r^{\lambda\gamma}_k(x)$ of $\sgn(x)$ for $|x|\in[\lambda\gamma,1]$. In contrast to previous polynomial approximations to $\sgn(x)$, which \cite{FMMS16,AL17} used to reduce PCP to solving ridge-regression-type subproblems $(\A^\top\A+\lambda\I)\xx=\vv$ , our rational polynomial approximations reduce PCP to more complicated problem of solving the following squared systems:
\begin{align}
	[(\A^\top\A-c\I)^2+\mu^2\I]\xx=\vv,\text{ with }\A\in\R^{n\times d},\vv\in\R^d,\mu>0,c\in[0,\lambda_1].\label{eqn:square}
\end{align}

When the squared system is ill-conditioned (i.e. when $\lambda_1/\mu\gg0$), previous state-of-the-art methods can have fairly large running times. As shown in \cref{App:direct}, accelerated gradient descent~\cite{Nes83} applied to solving \eqref{eqn:square} has a runtime of $\tilde{O}(\nnz(\A)\lambda_1/\mu)$, which is not nearly linear in $\nnz(\A)$ in this regime. Applying the standard SVRG~\cite{johnson2013accelerating} techniques to the same system leads to a runtime $\tilde{O}(\nnz(\A)+d\cdot \mathrm{sr^2}(\A)\lambda_1^4/\mu^4)$, where $\mathrm{sr^2}(\A)\lambda_1^4/\mu^4$ comes from the high variance in sampling $\aaa_i\aaa_i^\top\aaa_j\aaa_j^\top$ from $(\A^\top\A)^2$ independently. Combining this with acceleration techniques \cite{allen2017katyusha,nesterov2017efficiency}, the best accelerated guarantee we known $\tilde{O}(\nnz(\A)+\nnz(\A)^{3/4}d^{1/4}\sr(\A)^{1/2}\lambda_1/\mu)$. (See~\cref{tab:Cp-2} for a comparison of the guarantee between these methods and our algorithm.)

To obtain more computationally efficient methods, rather than working with the squared system directly, we instead consider an equivalent formulation of the problem in a larger dimensional space. In this new formulation we show that we can develop lower variance stochastic estimators at the cost of working with an asymmetric (as opposed to symmetric) linear system which comes from the Schur complement decomposition of block matrices: %
\begin{equation}
\label{eq:SVRG-crude}
	\begin{aligned}
  \M  &\defeq \begin{pmatrix}
   \I & -\frac{1}{\mu}(\A^\top\A-c \I) \\
   \frac{1}{\mu}(\A^\top\A-c \I) & \I
  \end{pmatrix}\\
  & =
  \begin{pmatrix}
   \I & \0 \\
   \frac{1}{\mu}(\A^\top\A-c \I) & \I
  \end{pmatrix}
  \begin{pmatrix}
   \I & \0 \\
   \0 & \I+\frac{1}{\mu^2}(\A^\top\A-c \I)^2
  \end{pmatrix}
  \begin{pmatrix}
   \I & -\frac{1}{\mu}(\A^\top\A-c \I) \\
   \0 & \I
  \end{pmatrix}.
\end{aligned}
\end{equation}

\begin{lemma}[Reducing Squared Systems to Asymmetric Systems]
\label{red:asySVRG}
Define $\zz^*$ as the solution to the following asymmetric linear system:
	\begin{align}
	\begin{pmatrix}
   \I & -\frac{1}{\mu}(\A^\top\A-c \I) \\
   \frac{1}{\mu}(\A^\top\A-c \I) & \I
  \end{pmatrix}
  \zz
  =
    \begin{pmatrix}
   \0 \\
   \vv/\mu^2
  \end{pmatrix}.
  \label{eqn:aym}
  \end{align}
If we are given a solver that returns with probability $1-\delta$ a solution $\widetilde{\zz}$ satisfying 
  $\|\widetilde{\zz}-\zz^*\|_2\le\eps$ within runtime $\mathcal{T}(\eps,\delta)$,
  then we can use it to get an $\eps$-approximate squared ridge regression solver (see \cref{dfn:square}) with runtime $\mathcal{T}(\eps\|\vv\|,\delta)$ .
\end{lemma}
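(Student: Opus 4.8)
The plan is to unfold the block $LDU$ (Schur complement) factorization of $\M$ already recorded in~\eqref{eq:SVRG-crude}, invert it one factor at a time, and simply read off the bottom block of $\zz^*$. Write $\M = L D U$ where
$L \defeq \bigl(\begin{smallmatrix}\I & \0\\ \frac{1}{\mu}(\A^\top\A-c\I) & \I\end{smallmatrix}\bigr)$, $D \defeq \bigl(\begin{smallmatrix}\I & \0\\ \0 & \I+\frac{1}{\mu^2}(\A^\top\A-c\I)^2\end{smallmatrix}\bigr)$, and $U \defeq \bigl(\begin{smallmatrix}\I & -\frac{1}{\mu}(\A^\top\A-c\I)\\ \0 & \I\end{smallmatrix}\bigr)$; one checks directly that the product equals the coefficient matrix in~\eqref{eqn:aym}, and that $\M$ is invertible since $\I+\frac{1}{\mu^2}(\A^\top\A-c\I)^2\succ\0$ for $\mu>0$, so $\zz^*$ is well defined. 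Then $\zz^* = U^{-1}D^{-1}L^{-1}(\0;\vv/\mu^2)$, and since $L,U$ are block-triangular with identity diagonal blocks, $L^{-1}$ and $U^{-1}$ are obtained by negating the off-diagonal block.

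Carrying out the three applications: $L^{-1}$ leaves $(\0;\vv/\mu^2)$ unchanged; $D^{-1}$ sends it to $(\0;\xx^*)$ with $\xx^* = \bigl((\A^\top\A-c\I)^2+\mu^2\I\bigr)^{-1}\vv$, i.e.\ exactly the exact solution of the squared ridge regression problem in \cref{dfn:square}; and $U^{-1}$ produces $\zz^* = \bigl(\tfrac{1}{\mu}(\A^\top\A-c\I)\xx^*;\ \xx^*\bigr)$. The key point is that the \emph{second} block of $\zz^*$ is precisely $\xx^*$. Hence, given a solver that returns $\widetilde{\zz} = (\widetilde{\zz}_1;\widetilde{\zz}_2)$ with $\|\widetilde{\zz}-\zz^*\|_2\le\eps'$, I would output $\widetilde{\xx}\defeq\widetilde{\zz}_2$; because a subvector has Euclidean norm at most that of the whole vector, $\|\widetilde{\xx}-\xx^*\|_2 = \|\widetilde{\zz}_2-\zz^*_2\|_2 \le \|\widetilde{\zz}-\zz^*\|_2 \le \eps'$. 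Choosing $\eps' = \eps\|\vv\|$ yields exactly the $\eps$-approximate squared ridge regression guarantee with the same failure probability $\delta$. The only overhead beyond the solver call is forming the right-hand side $(\0;\vv/\mu^2)$ (one vector scaling) and extracting the bottom half of $\widetilde{\zz}$, both $O(d)$, so the total runtime is $\mathcal{T}(\eps\|\vv\|,\delta)$ up to this negligible additive cost, proving the lemma.

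There is no real analytic obstacle here — the statement is a bookkeeping reduction powered by~\eqref{eq:SVRG-crude}. The two things one must get right are the orientation of the block-triangular inverses (note $U^{-1}$ carries $+\tfrac{1}{\mu}(\A^\top\A-c\I)$ while $\M$ carries $-\tfrac{1}{\mu}(\A^\top\A-c\I)$) and the fact that it is the second, not the first, coordinate block of $\zz^*$ that coincides with $\xx^*$; reversing either would break the reduction. Everything else is immediate.
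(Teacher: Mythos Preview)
Your proof is correct and follows essentially the same approach as the paper: both invoke the block $LDU$ factorization of $\M$ from~\eqref{eq:SVRG-crude}, identify that the second block of $\zz^*$ equals $\bigl((\A^\top\A-c\I)^2+\mu^2\I\bigr)^{-1}\vv$, and then pass from $\|\widetilde{\zz}-\zz^*\|_2\le\eps\|\vv\|$ to the squared-ridge-regression guarantee by extracting that block. Your version is slightly more explicit in carrying out $U^{-1}D^{-1}L^{-1}$ step by step, whereas the paper computes $U\zz^* = D^{-1}L^{-1}\hat{\vv}$ and reads off the second row, but the content is identical.
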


\begin{proof}%
By definition and the decomposition of $\M$ we have clearly $\M$ is invertible since both the upper and lower triangular matrix and the block diagonal matrix is invertible in~\eqref{eq:SVRG-crude}. Consequently, letting $\hat{\vv} \defeq [\0;\vv/\mu^2]$ we have that $\zz^* = \M^{-1}\hat{\vv}$ and therefore 
\begin{align*}
	\begin{pmatrix}
   \I & -\frac{1}{\mu}(\A^\top\A-c \I) \\
   \0 & \I
  \end{pmatrix}\zz^* & = \begin{pmatrix}
   \I^{-1} & \0 \\
   \0 & \left[\I+\frac{1}{\mu^2}(\A^\top\A-c \I)^2\right]^{-1}
  \end{pmatrix}
   \begin{pmatrix}
   \I & \0 \\
   -\frac{1}{\mu}(\A^\top\A-c \I) & \I
  \end{pmatrix}
\hat{\vv}\\
& = \begin{pmatrix}
   \0 \\
   \left[\I+\frac{1}{\mu^2}(\A^\top\A-c \I)^2\right]^{-1}\vv/\mu^2
  \end{pmatrix}
  ~.
\end{align*}
Taking the second half of the equations and write $\zz^*=[\xx^*;\yy^*]$ gives
 \begin{align*}
	\yy^* & =    \left[(\A^\top\A-c \I)^2+\mu^2\I\right]^{-1}\vv.
\end{align*}
This is to say optimal solution $\yy$ satisfies $\left((\A^\top \A-c \I)^2+\mu^2\I\right)\yy=\vv$.

As a result, if we have a solver that with high probability $\ge 1-\delta$ in time $\mathcal{T}(\eps',\delta)$ gives an $\eps'$-approximate solution $\widetilde{\zz}=[\widetilde{\xx},\widetilde{\yy}]$ of system $\M\zz=\hat{\vv}$ measured in $L_2$ norm, i.e. $\|\widetilde{\xx}-\xx^*\|_2^2+\|\widetilde{\yy}-\yy^*\|_2^2\le\eps'^2$, then as long as $\eps'^2\le\eps^2\|\vv\|^2$, we'll get 
\begin{align*}
\|\tilde{\yy}-\yy^*\|^2
\le \eps^2\|\vv\|^2,
\end{align*}
giving an $\eps$-approximation solution of squared ridge regression $\left((\A^\top\A-c\I)^2+\mu^2\I\right)\yy=\vv$ in time $\mathcal{T}(\eps\|\vv\|,\delta)$ with probability $1-\delta$.%
\end{proof}

Reducing solving the squared system to solving an asymmetric linear system $\M\zz=\hat{\vv}$ helps the development of our iterative methods as we can develop low-variance low-rank estimators. Though this advantage simply comes at the cost of asymmetry, this asymmetric matrix has the desirable property of having PSD symmetric part, meaning $\M^\top+\M\succeq2\I$. Formally, we define the asymmetric matrix $\M$ needed to develop our solver.

\begin{definition}
\label{def:PSD}
	We say that an  (asymmetric) matrix $\M$ is  \emph{$\mu$-positive-semidefinite ($\mu$-PSD)}, if 
	$\frac{1}{2}( \M^\top+\M )\succeq\mu\I$  for $\mu\ge0$.
	When the same condition holds for $\mu>0$, we also say it is \emph{$\mu$-strongly-positive-definite  ($\mu$-strongly-PD)}.\footnote{Usually PSD and PD matrices only refer to symmetric matrices. Here we abuse the notation through extending the definition to also asymmetric matrices by considering the symmetric part of the matrix $(\M^\top+\M)/2$.}.
\end{definition}

Those matrices enjoy the following properties:
	
\begin{lemma}[Properties of PSD matrices]
	\label{lem:PSD-prop}
	Here we list some properties that are used for 
	$\mu$-PSD $\M$.
	\begin{enumerate}
		\item Any $\mu$-strongly-PD $\M$ is invertible.
		\item For $\mu$-PSD $\M$, $\|(\I + \M)^{-1} \|_2 \leq \sqrt{\frac{1}{1+2\mu}}$. 
		\item For $\mu$-PSD $\M$, $\|(\tau \M^{-1} + \I)^{-1} \|_2 \leq 1$, $\forall \tau\ge0$. 
	\end{enumerate}
\end{lemma}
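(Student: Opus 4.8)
The plan is to derive all three items from one elementary observation: for every $\xx\in\R^n$ the scalar $\xx^\top\M\xx$ equals its own transpose, so $\xx^\top\M\xx=\tfrac12\xx^\top(\M+\M^\top)\xx\ge\mu\|\xx\|^2$ whenever $\M$ is $\mu$-PSD. Item~1 I would prove by contradiction: if a $\mu$-strongly-PD $\M$ (so $\mu>0$) had $\M\xx=\0$ for some $\xx\neq\0$, then $0=\xx^\top\M\xx\ge\mu\|\xx\|^2>0$, a contradiction; hence $\M$ has trivial kernel and, being square, is invertible.

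For item~2 I would first note that $\I+\M$ has symmetric part $\I+\tfrac12(\M+\M^\top)\succeq(1+\mu)\I\succ\0$, so it is $(1+\mu)$-strongly-PD and invertible by item~1. Then, writing $\xx=(\I+\M)^{-1}\yy$ for an arbitrary $\yy$, I would expand the Pythagorean-style identity
\[
\|\yy\|^2=\|(\I+\M)\xx\|^2=\|\xx\|^2+\xx^\top(\M+\M^\top)\xx+\|\M\xx\|^2\ge(1+2\mu)\|\xx\|^2,
\]
and conclude $\|(\I+\M)^{-1}\yy\|=\|\xx\|\le(1+2\mu)^{-1/2}\|\yy\|$ for all $\yy$, which is the claimed operator-norm bound.

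For item~3 (where $\M$ is understood to be invertible — automatic when $\mu>0$ by item~1, and true in every use in this paper, where $\M$ is $1$-PSD) I would dispose of $\tau=0$ immediately, and for $\tau>0$ and arbitrary $\bb$ set $\zz=(\tau\M^{-1}+\I)^{-1}\bb$ and $\sss\defeq\M^{-1}\zz$, so that $\zz=\M\sss$ and $\bb=\tau\M^{-1}\zz+\zz=(\tau\I+\M)\sss$. Then the analogous expansion
\[
\|\bb\|^2=\|(\tau\I+\M)\sss\|^2=\tau^2\|\sss\|^2+2\tau\,\sss^\top\M\sss+\|\M\sss\|^2\ge\|\M\sss\|^2=\|\zz\|^2,
\]
using $\sss^\top\M\sss\ge0$ and $\tau\ge0$, gives $\|\zz\|\le\|\bb\|$ for all $\bb$, hence operator norm at most $1$.

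I do not expect any genuine obstacle; this is routine. The only two points needing a moment of care are justifying that $\xx^\top\M\xx$ coincides with the symmetric-part quadratic form for asymmetric $\M$ (this is precisely what lets the $\mu$-PSD hypothesis enter), and noting that item~3 tacitly presumes $\M^{-1}$ exists, which is fine given the context in which the lemma is applied. Beyond that, the only "computation" is the two expansions of $\|(\I+\M)\xx\|^2$ and $\|(\tau\I+\M)\sss\|^2$ into a norm-squared plus a quadratic form plus a norm-squared, after which the inequalities are immediate from $\tau,\mu\ge0$.
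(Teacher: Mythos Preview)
Your proof is correct. Items~1 and~2 match the paper's argument exactly (the paper writes the same expansion $(\I+\M)^\top(\I+\M)=\I+(\M+\M^\top)+\M^\top\M\succeq(1+2\mu)\I$ in matrix form rather than vector form, but it is the identical computation).

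For item~3 your route differs slightly from the paper's. The paper first proves the intermediate structural fact that $\M^{-1}$ is itself $0$-PSD --- via $\M^\top\bigl([\M^{-1}]^\top+\M^{-1}\bigr)\M=\M+\M^\top\succeq 2\mu\I$, then conjugating by $\M^{-1}$ --- and then simply invokes item~2 with $\mu=0$ applied to $\tau\M^{-1}$. You instead substitute $\sss=\M^{-1}\zz$ and expand $\|(\tau\I+\M)\sss\|^2$ directly, using only $\sss^\top\M\sss\ge 0$. Your approach is a touch more self-contained (no need to establish that $\M^{-1}$ is $0$-PSD), while the paper's has the virtue of isolating that reusable fact and exhibiting item~3 as a corollary of item~2. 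Both are equally short and elementary.
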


\begin{proof}
For the first property, we note that if $\exists\  \vv\neq0$ such that $\M\vv=0$, then 
\[
\vv^\top\M\vv=0,
\]
which contradicts with the condition that 
\[
\vv^\top\M\vv=\frac{1}{2}\vv^\top\left(\M^\top+\M\right)\vv\ge\mu\norm{\vv}^2>0,\forall \vv\neq\0.
\]

For the second property, note that 
\begin{equation}
\label{eq:upper1}
(\I + \M)^\top (\I + \M)
= \I + (\M + \M^\top) + \M^\top \M
\succeq (1 + 2\mu) \I 
\end{equation}
and therefore
\[
\|  (\I + \M)^{-1}\|^2
= \lambda_{\max} ( [(\I +  \M)^{-1}]^\top \I (\I + \M)^{-1})
\leq \frac{1}{1 + 2\mu} ~.
\]
where we used that $\I \preceq (1+2\mu)^{-1} (\I + \M)^\top (\I + \M)$ by \eqref{eq:upper1}. 

For the third property, when $\tau=0$ it obviously holds, so it suffices to prove for $\tau>0$. Since $\M$ is $\mu$-PSD, we have
\[
\M^\top \left([\M^{-1}]^\top +\M^{-1} \right) \M
= \M + \M^\top \succeq 2\mu \I
\]
multiplying on the left by $[\M^{-1}]^\top$ and the right $\M^{-1} $ yields that 
\[
[\M^{-1}]^\top +\M^{-1} \succeq 2\mu[\M^{-1}]^\top \M^{-1} \succeq \0
\]
Now given $\M/\tau$ is $\mu/\tau$-strongly-PD, as a result $\tau \M^{-1} +  [\tau \M^{-1}]^\top \succeq \0$ and using property 2 for $\tau\M^{-1}$ gives the result.
\end{proof}

Leveraging this notation, below we give the following formal definitions of the stochastic asymmetric problems that we derive efficient solvers for.
\begin{definition}[General Asymmetric Linear System Solver]
\label{def:general-asym}
Given $\hat{\vv}\in\R^{a}$ and a $\mu$-strongly-PD matrix $\M\in\R^{a\times a}$ such that $\M = \sum_{i\in[n]}\M_i$, $\|\M_i\|\le L_i,\forall i\in[n]$, the general asymmetric linear system is $\M\zz=\hat{\vv}$. An $\eps$-approximate general asymmetric solver returns (with high probability $1-\delta$ of some $\delta$ as input of the algorithm) an approximate solution $\widetilde{\zz}$ satisfying 
	\begin{align*}
\|\widetilde{\zz}-\M^{-1}\hat{\vv}\|\le\epsilon.
\end{align*} 
\end{definition}

We denote $\mathcal{T}_{\mathrm{mv}}(\M_i)$ as the cost to compute $\M_i \xx$ for an arbitrary vector $\xx$ and $\mathcal{T}=\max_{i\in[n]}\mathcal{T}_{\mathrm{mv}}(\M_i)$. Using SVRG methods as discussed in \cref{ssec:SVRG-gen}, we show how to give an $\eps$-approximate general asymmetric solver that runs in time $\tilde{O}(\nnz(\M)+\sqrt{\nnz(\M)\mathcal{T}}(\sum_{i\in[n]}L_i)/\mu)$.

For the particular case of interest as in~\eqref{eqn:aym}, correspondingly we have

\begin{definition}[Particular Asymmetric Linear System Solver]
\label{def:particular-asym}
Given $c\in[0,\lambda_1]$, $\hat{\vv}\in\R^{a}$ and a $1$-strongly-PD matrix $\M\in\R^{a\times a}$ in form
\begin{equation}
\label{cond:spec}
\M=\begin{pmatrix}
   \I & -\frac{1}{\mu}(\A^\top\A-c \I) \\
   \frac{1}{\mu}(\A^\top\A-c \I) & \I
  \end{pmatrix},
\end{equation}
 the particular asymmetric linear system is $\M\zz=\hat{\vv}$. An $\eps$-approximate particular asymmetric solver returns (with high probability $1-\delta$) an approximate solution $\widetilde{\zz}$ satisfying 
	\begin{align*}
\|\widetilde{\zz}-\M^{-1}\hat{\vv}\|\le\epsilon.
\end{align*} 
\end{definition}

Leveraging the general asymmetric solver with a more fine-grained analysis,  we show in \cref{ssec:SVRG-spec} a provable runtime of $\tilde{O}\bigl(\nnz(\A)+\sqrt{d\cdot\mathrm{sr}(\A)\nnz(\A)}\lambda_1/\mu\bigr)$ for the $\eps$-approximate particular asymmetric solver we build.

\subsection{SVRG for General Asymmetric Linear System Solving}
\label{ssec:SVRG-gen}

The general goal for this section is to build fast $\eps$-approximate asymmetric solver (see~\cref{def:general-asym}). All results in this subsection can be viewed as a variant of~\citet{pal16} and can be recovered by their slightly different algorithm which used proximal methods.

For solving the system, we consider the following update using the idea of variance-reduced sampling~\cite{johnson2013accelerating}: At step $t$, we conduct 
\begin{equation}
\begin{aligned}
\text{sample }i_t\in[n]\text{ independently with }p_{i_t}=L_{i_t}/(\sum_{i\in[n]}L_{i}),\\
\text{update }\zz_{t+1} := \zz_t-\frac{\eta}{p_{i_t}}\bigl(\M_{i_t} \zz_t-\M_{i_t}\zz_0+p_{i_t}(\M\zz_0-\hat{\vv})\bigr).\label{eqn:SVRG-one-step}
\end{aligned}
\end{equation}

For the above update, when variance is bounded in a desirable way, we can show the following expected convergence guarantee starting at a given initial point $\zz_0$:

\begin{lemma}[Progress per epoch]
\label{lem:gen-per-iter}
When the following variance bound on sampling holds 
$$\sum_{i\in[n]}\frac{1}{p_{i}}\|\M_i\zz_t-\M_i\zz_0 +p_{i}\cdot (\M\zz_0-\hat{\vv})\|^2\le2S^2\left[\|\zz_t-\zz^*\|^2+\|\zz_0-\zz^*\|^2\right],
$$ 
the sampling method~\eqref{eqn:SVRG-one-step} with a fixed step-size $\eta$ gives, after $T$ iterations, gets
$$
    \mathbb{E} \left\|\frac{1}{T}\sum_{t=0}^{T-1} \zz_{t}-\zz^* \right\|^2\le\frac{\frac{1}{2T}+\eta^2S^2}{\eta\mu-\eta^2S^2}\|\zz_0-\zz^*\|^2.
$$
\end{lemma}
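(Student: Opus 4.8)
The plan is to analyze the one-step update $\zz_{t+1}=\zz_t-\frac{\eta}{p_{i_t}}g_t$ where $g_t\defeq \M_{i_t}\zz_t-\M_{i_t}\zz_0+p_{i_t}(\M\zz_0-\hat{\vv})$, using $\zz^*=\M^{-1}\hat{\vv}$ as the fixed point. First I would record the key identity that the estimator is unbiased conditioned on $\zz_0,\zz_t$: since $\E_{i_t}\frac{1}{p_{i_t}}\M_{i_t}(\zz_t-\zz_0)=\M(\zz_t-\zz_0)$, we get $\E_{i_t}\frac{1}{p_{i_t}}g_t=\M(\zz_t-\zz_0)+\M\zz_0-\hat{\vv}=\M\zz_t-\hat{\vv}=\M(\zz_t-\zz^*)$. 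The second ingredient is that the $\mu$-PSD property of $\M$ (i.e. $\M+\M^\top\succeq 2\mu\I$, per \cref{def:PSD}) gives the ``strong monotonicity'' bound $\inprod{\M(\zz_t-\zz^*)}{\zz_t-\zz^*}\ge\mu\|\zz_t-\zz^*\|^2$, which plays the role of strong convexity.

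Next I would expand the squared distance to the optimum. Writing $h_t\defeq\frac{1}{p_{i_t}}g_t$, we have
\begin{align*}
\E\|\zz_{t+1}-\zz^*\|^2
&=\|\zz_t-\zz^*\|^2-2\eta\,\E\inprod{h_t}{\zz_t-\zz^*}+\eta^2\,\E\|h_t\|^2\\
&=\|\zz_t-\zz^*\|^2-2\eta\inprod{\M(\zz_t-\zz^*)}{\zz_t-\zz^*}+\eta^2\,\E\|h_t\|^2\\
&\le \|\zz_t-\zz^*\|^2-2\eta\mu\|\zz_t-\zz^*\|^2+\eta^2\,\E\|h_t\|^2,
\end{align*}
where the middle step uses unbiasedness and the last uses $\mu$-PSD-ness. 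The hypothesized variance bound says $\E\|h_t\|^2=\sum_{i}\frac{1}{p_i}\|\M_i\zz_t-\M_i\zz_0+p_i(\M\zz_0-\hat{\vv})\|^2\le 2S^2(\|\zz_t-\zz^*\|^2+\|\zz_0-\zz^*\|^2)$. Substituting and rearranging yields, with all expectations taken over the whole history,
\begin{align*}
2\eta(\mu-\eta S^2)\,\E\|\zz_t-\zz^*\|^2 \le \E\|\zz_t-\zz^*\|^2-\E\|\zz_{t+1}-\zz^*\|^2+2\eta^2 S^2\,\E\|\zz_0-\zz^*\|^2.
\end{align*}

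Then I would telescope this inequality over $t=0,\dots,T-1$: the distance terms collapse to $\|\zz_0-\zz^*\|^2-\E\|\zz_T-\zz^*\|^2\le\|\zz_0-\zz^*\|^2$, giving $2\eta(\mu-\eta S^2)\sum_{t=0}^{T-1}\E\|\zz_t-\zz^*\|^2\le (1+2T\eta^2 S^2)\|\zz_0-\zz^*\|^2$. Finally, by convexity of $\|\cdot\|^2$ (Jensen applied to the average iterate), $\E\|\frac{1}{T}\sum_{t=0}^{T-1}\zz_t-\zz^*\|^2\le\frac{1}{T}\sum_{t=0}^{T-1}\E\|\zz_t-\zz^*\|^2$, and dividing through by $2\eta(\mu-\eta S^2)T$ produces the claimed bound $\frac{\frac{1}{2T}+\eta^2 S^2}{\eta\mu-\eta^2 S^2}\|\zz_0-\zz^*\|^2$. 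The main subtlety — the step I expect to need the most care — is handling the variance term honestly: $\E\|h_t\|^2$ is a full second moment (not a variance around the mean), so one must make sure the hypothesis is applied to $\|h_t\|^2$ directly rather than to the centered quantity, and one must track that $\zz_t$ is itself random (so the bound $\E\|h_t\|^2\le 2S^2(\E\|\zz_t-\zz^*\|^2+\|\zz_0-\zz^*\|^2)$ is what gets telescoped), while also implicitly assuming $\eta<\mu/S^2$ so the coefficient $\mu-\eta S^2$ is positive; everything else is routine manipulation.
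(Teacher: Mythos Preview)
Your proposal is correct and follows essentially the same approach as the paper's proof: expand $\|\zz_{t+1}-\zz^*\|^2$, use unbiasedness of the estimator together with the $\mu$-PSD inequality $\inprod{\M(\zz_t-\zz^*)}{\zz_t-\zz^*}\ge\mu\|\zz_t-\zz^*\|^2$ for the cross term, plug in the assumed second-moment bound for the squared term, rearrange, telescope over $t$, and finish with Jensen on the average iterate. Your remarks about needing $\eta<\mu/S^2$ and about the hypothesis bounding the full second moment (not a centered variance) are exactly the points the paper relies on implicitly.
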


\begin{proof}
For a single step at $t$ as in~\eqref{eqn:SVRG-one-step}, denote the index we drew as $i_t\in[n]$,
\begin{align*}
\|\zz_{t+1}-\zz^*\|^2= & \|\zz_t-\zz^*\|^2-2\frac{\eta}{p_{i_t}}\langle\M_{i_t}\zz_{t}-\M_{i_t}\zz_0 +p_{i_t}(\M\zz_0-\hat{\vv}),\zz_t-\zz^*\rangle\\
& +\frac{\eta^2}{p_{i_t}^2}\|\M_{i_t}\zz_{t}-\M_{i_t}\zz_0 +p_{i_t}(\M\zz_0-\hat{\vv}))\|^2.
\end{align*}
Taking expectation w.r.t $i_t$ we sample we get, %
\begin{equation}\label{eqn:SVRG-gen-sum}
\mathbb{E}_{i_t}\|\zz_{t+1}-\zz^*\|^2=\|\zz_t-\zz^*\|^2-2\eta\langle \M\zz_t -\hat{\vv},\zz_t-\zz^*\rangle+\eta^2\sum_{i\in[n]}\frac{1}{p_{i}}\|\M_i\zz_{t}-\M_i\zz_0 +p_i(\M\zz_0-\hat{\vv})\|^2.
\end{equation}
We bound the second and third terms on RHS respectively. For the second term we use the fact that $(\M+\M^\top)/2\succeq\mu\I$ by assumption and get
\begin{equation}
\begin{aligned}
    -\langle \M\zz_t-\hat{\vv} ,\zz_t-\zz^*\rangle
     = -\langle \M(\zz_t-\zz^*) ,\zz_t-\zz^*\rangle
     \le -\mu\|\zz_t-\zz^*\|^2.
\end{aligned}
\label{eqn:SVRG-gen-term2}
\end{equation}
For the third term using condition of bounded variance
\begin{equation}
\begin{aligned}
& \sum_{i\in[n]}\frac{1}{p_{i}}\|\M_i\zz_t-\M_i\zz_0 +p_{i}\cdot (\M\zz_0-\hat{\vv})\|^2
\le 2S^2 \left[\|\zz_t-\zz^*\|^2+\|\zz_0-\zz^*\|^2\right].	
\end{aligned}
\label{eqn:SVRG-gen-term3}	
\end{equation}
Combining~\eqref{eqn:SVRG-gen-sum},~\eqref{eqn:SVRG-gen-term2} and~\eqref{eqn:SVRG-gen-term3} we get,
\begin{align*}
\mathbb{E}_{i_t}\|\zz_{t+1}-\zz^*\|^2 & =\|\zz_t-\zz^*\|^2-2\eta\mu\|\zz_t-\zz^*\|^2+2\eta^2S^2[\|\zz_t-\zz^*\|^2+\|\zz_0-\zz^*\|^2]
\end{align*}
and equivalently,
\begin{align*}
(2\eta\mu-2\eta^2 S^2)\|\zz_t-\zz^*\|^2\le\|\zz_t-\zz^*\|^2-\mathbb{E}_{i_t}\|\zz_{t+1}-\zz^*\|^2+2\eta^2 S^2\|\zz_0-\zz^*\|^2.
\end{align*}
Taking expectation of $i_{t-1},\cdots,i_0$ respectively, averaging over $t=0,1,\cdots,T-1$ thus telescoping the first terms on RHS, and then rearranging terms, we get %
\begin{align*}
    \mathbb{E} \left\|\frac{1}{T}\sum\limits_{t=0}^{T-1} \zz_{t}-\zz^* \right\|^2\le\mathbb{E}\left[\frac{1}{T}\sum\limits_{t=0}^{T-1}\| \zz_{t}-\zz^*\|^2\right]\le\frac{\frac{1}{2T}+\eta^2S^2}{\eta\mu-\eta^2S^2}\|\zz_0-\zz^*\|^2.
\end{align*}

\end{proof}

In the general setting, the variance bound $S$ satisfies the following.
\begin{lemma}[Variance bound for general case]
\label{lem:var-gen}
When $\|\M_i\|\le L_{i},\forall i\in[n]$, the variance bound \eqref{eqn:SVRG-one-step} holds with $S=\sum_{i\in[n]} L_{i}$, i.e. 
$$\sum_{i\in[n]}\frac{1}{p_{i}}\|\M_i\zz_t-\M_i\zz_0 +p_{i}\cdot (\M\zz_0-\hat{\vv})\|^2
\le 2 \bigg(\sum_{i\in[n]}L_i\bigg)^2\left[\|\zz_t-\zz^*\|^2+\|\zz_t-\zz^*\|^2\right].$$
\end{lemma}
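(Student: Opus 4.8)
The plan is to re-center the stochastic estimator around the true solution $\zz^*$, using the identity $\hat{\vv}=\M\zz^*$, and then to split the resulting sum into a ``$\zz_t$-part'' handled by the crude operator bounds $\|\M_i\|\le L_i$ and a ``$\zz_0$-part'' handled by the elementary inequality $\mathbb{E}\|X-\mathbb{E}X\|^2\le\mathbb{E}\|X\|^2$. The intended right-hand side is $2(\sum_i L_i)^2\big[\|\zz_t-\zz^*\|^2+\|\zz_0-\zz^*\|^2\big]$, i.e.\ $S=\sum_i L_i$.

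First I would rewrite the summand. Since $\hat{\vv}=\M\zz^*$ we have $\M\zz_0-\hat{\vv}=\M(\zz_0-\zz^*)$ and $\M_i\zz_t-\M_i\zz_0=\M_i(\zz_t-\zz^*)-\M_i(\zz_0-\zz^*)$, so that
\[
\M_i\zz_t-\M_i\zz_0+p_i(\M\zz_0-\hat{\vv})=\M_i(\zz_t-\zz^*)+(p_i\M-\M_i)(\zz_0-\zz^*).
\]
Applying $\|a+b\|^2\le 2\|a\|^2+2\|b\|^2$ termwise, the left-hand side of the claimed bound is at most $2\sum_{i}\frac{1}{p_i}\|\M_i(\zz_t-\zz^*)\|^2+2\sum_{i}\frac{1}{p_i}\|(p_i\M-\M_i)(\zz_0-\zz^*)\|^2$. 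For the first sum I would use $\|\M_i(\zz_t-\zz^*)\|^2\le L_i^2\|\zz_t-\zz^*\|^2$ together with the choice $p_i=L_i/\sum_{j}L_j$, which gives $\sum_i L_i^2/p_i=(\sum_i L_i)^2$, so this sum contributes $2(\sum_i L_i)^2\|\zz_t-\zz^*\|^2$. For the second sum the key step is to pull the factor $p_i$ out of the norm and recognize that $\sum_i\frac{1}{p_i}\|(p_i\M-\M_i)(\zz_0-\zz^*)\|^2=\mathbb{E}_{i\sim p}\|(\M-\frac{1}{p_i}\M_i)(\zz_0-\zz^*)\|^2$, which is exactly the variance of $X_i\defeq\frac{1}{p_i}\M_i(\zz_0-\zz^*)$, a random vector with mean $\sum_i\M_i(\zz_0-\zz^*)=\M(\zz_0-\zz^*)$. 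Bounding this variance by the second moment $\mathbb{E}_{i\sim p}\|X_i\|^2=\sum_i\frac{1}{p_i}\|\M_i(\zz_0-\zz^*)\|^2\le(\sum_i L_i)^2\|\zz_0-\zz^*\|^2$ (again via $\|\M_i\|\le L_i$ and $\sum_i L_i^2/p_i=(\sum_i L_i)^2$) shows the second sum contributes at most $2(\sum_i L_i)^2\|\zz_0-\zz^*\|^2$. Adding the two contributions yields the result.

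The one point needing care — the main obstacle — is the constant. A naive triangle inequality $\|p_i\M-\M_i\|\le p_i\|\M\|+\|\M_i\|\le 2L_i$ would inflate $S$ by a factor of $2$; to land on exactly $S=\sum_i L_i$ it is essential to treat the $\zz_0$-term as a mean-zero fluctuation and route it through the variance-$\le$-second-moment inequality rather than bounding $\|p_i\M-\M_i\|$ directly. Everything else is routine bookkeeping of the hypotheses $\|\M_i\|\le L_i$ and the sampling weights $p_i$.
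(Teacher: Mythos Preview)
Your proof is correct and is essentially identical to the paper's: both re-center via $\hat{\vv}=\M\zz^*$, split with $\|a+b\|^2\le 2\|a\|^2+2\|b\|^2$ into a $\zz_t$-term bounded by $\|\M_i\|\le L_i$ and $\sum_i L_i^2/p_i=(\sum_i L_i)^2$, and handle the $\zz_0$-term via the variance-$\le$-second-moment inequality to avoid losing a factor of $2$. You also correctly read the intended right-hand side as $\|\zz_t-\zz^*\|^2+\|\zz_0-\zz^*\|^2$ despite the typo in the statement.
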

\begin{proof}
Note that
\begin{align*}
& \sum_{i\in[n]}\frac{1}{p_{i}}\|\M_i\zz_t-\M_i\zz_0 +p_{i}\cdot (\M\zz_0-\hat{\vv})\|^2 \\
& \stackrel{(i)}{=} \sum_{i\in[n]}\left(\frac{1}{p_{i}}\|(\M_i\zz_t-\M_i\zz^*)+(\M_i\zz^*-\M_i\zz_0) +p_{i}\cdot (\M\zz_0-\M\zz^*)\|^2\right)\\
& \stackrel{(ii)}{\le} \sum_{i\in[n]}\left(\frac{2}{p_{i}}\|\M_i(\zz_t-\zz^*)\|^2+2p_{i}\|\frac{1}{p_i}\M_i(\zz^*-\zz_0) + \M(\zz_0-\zz^*)\|^2\right)\\
& \stackrel{(iii)}{\le} \sum_{i\in[n]}\left(\frac{2}{p_{i}}\|\M_i(\zz_t-\zz^*)\|^2+\frac{2}{p_i}\|\M_i(\zz^*-\zz_0)\|^2\right)\\
& \le 2(\sum_{i\in[n]}L_i)^2 [\|\zz_t-\zz^*\|^2+\|\zz_0-\zz^*\|^2],
\end{align*}
where we use (i) $\M\zz^*=\hat{\vv}$ by linear system condition, (ii) $\|\zz+\zz'\|^2\le2\|\zz\|^2+2\|\zz'\|^2$, and (iii) $\mathbb{E}\|\zz-\mathbb{E}\zz\|^2\le\mathbb{E}\|\zz\|^2$.
\end{proof}

As a result, $\AsySVRG(\M,\hat{\vv},\zz_0,\eps,\delta)$ as shown in \cref{alg:AsySVRG} has the following guarantee. 

\begin{theorem}[General Asymmetric SVRG Solver]
\label{thm:asymmetric_main}
	For asymmetric system $\M\zz=\hat{\vv}$ (see~\cref{def:general-asym}), there is an $\eps$-approximate general asymmetric solver solver $\AsySVRG(\M,\hat{\vv},\zz_0,\eps,\delta)$ as specified in \cref{alg:AsySVRG} that runs in time 
	$\tilde{O}(\nnz(\M)+\mathcal{T}(\sum_{i\in[n]}L_i)^2/\mu^2)$.
\end{theorem}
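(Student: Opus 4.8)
The plan is to turn the single-epoch contraction of \cref{lem:gen-per-iter} into a geometrically converging outer loop and then account for the cost. First I would pin down the sampling variance: by \cref{lem:var-gen} the hypothesis of \cref{lem:gen-per-iter} holds with $S=\sum_{i\in[n]}L_i$. With this $S$ I would set the step size to $\eta=\Theta(\mu/S^2)$ (e.g. $\eta=\mu/(4S^2)$, which keeps $\eta\mu-\eta^2S^2>0$) and the epoch length to $T=\Theta(S^2/\mu^2)$, chosen large enough that $\tfrac{1}{2T}\le\eta^2S^2$. Plugging these choices into \cref{lem:gen-per-iter} makes the contraction factor $\frac{\frac{1}{2T}+\eta^2S^2}{\eta\mu-\eta^2S^2}$ a universal constant strictly below $1$ (a short calculation gives, say, $\le 2/3$), so one epoch of $T$ updates of the form \eqref{eqn:SVRG-one-step} maps any starting point $\zz_0$ to an averaged iterate $\bar{\zz}$ with $\mathbb{E}\bigl[\|\bar{\zz}-\zz^*\|^2 \mid \zz_0\bigr]\le\tfrac{2}{3}\|\zz_0-\zz^*\|^2$.

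Next I would iterate this epoch inside $\AsySVRG$, restarting each new epoch from the averaged iterate of the previous one. Because the per-epoch bound of \cref{lem:gen-per-iter} is valid for an arbitrary (in particular random) starting point, taking total expectation and composing over $K$ epochs yields $\mathbb{E}\|\zz^{(K)}-\zz^*\|^2\le(2/3)^K\|\zz^{(0)}-\zz^*\|^2$. To control the initial error I would take $\zz^{(0)}=\0$ and use that $\M$ is $\mu$-strongly-PD (hence $\|\M^{-1}\|\le 1/\mu$, cf.\ \cref{lem:PSD-prop}), so $\|\zz^*\|=\|\M^{-1}\hat{\vv}\|\le\|\hat{\vv}\|/\mu$. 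Thus after $K=O\!\left(\log\frac{\|\hat{\vv}\|}{\mu\eps\sqrt{\delta}}\right)$ epochs we have $\mathbb{E}\|\zz^{(K)}-\zz^*\|^2\le\delta\eps^2$, and Markov's inequality gives $\|\zz^{(K)}-\zz^*\|\le\eps$ with probability at least $1-\delta$, which is precisely the requirement of \cref{def:general-asym}; under the convention of hiding polylogarithmic factors, $K=\tilde{O}(1)$.

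Finally I would bound the running time. In each epoch the vector $\M\zz_0-\hat{\vv}$ is formed once at cost $O(\nnz(\M))$, and each of the $T$ inner updates \eqref{eqn:SVRG-one-step} needs two matrix-vector products with a single sampled block $\M_{i_t}$, at cost $O(\mathcal{T})$ where $\mathcal{T}=\max_i\mathcal{T}_{\mathrm{mv}}(\M_i)$. Hence one epoch costs $O(\nnz(\M)+T\mathcal{T})=O(\nnz(\M)+\mathcal{T}S^2/\mu^2)$, and multiplying by the $\tilde{O}(1)$ epochs gives the claimed $\tilde{O}\bigl(\nnz(\M)+\mathcal{T}(\sum_{i\in[n]}L_i)^2/\mu^2\bigr)$. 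The only mildly delicate points are the constant-contraction tuning of $(\eta,T)$ in the first step and the expectation-to-high-probability passage: SVRG-type arguments naturally bound $\mathbb{E}\|\cdot\|^2$ only, so one must (i) apply \cref{lem:gen-per-iter} conditionally and then take total expectation over the random restart points, and (ii) drive the expected squared error below $\delta\eps^2$, not just $\eps^2$, before invoking Markov — both of which cost only an extra $\log(1/\delta)$ factor absorbed into $\tilde{O}$. I expect these two bookkeeping items to be the main things to get right; everything else is mechanically substituting \cref{lem:var-gen} into \cref{lem:gen-per-iter}.
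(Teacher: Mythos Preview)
Your proposal is correct and follows essentially the same route as the paper: invoke \cref{lem:var-gen} to instantiate \cref{lem:gen-per-iter} with $S=\sum_i L_i$, tune $(\eta,T)$ to make each epoch contract $\E\|\cdot-\zz^*\|^2$ by a constant, iterate for $\tilde O(1)$ epochs, and convert the expectation bound to a $1-\delta$ guarantee via Markov. The paper does exactly this (with $\eta=\mu/(4S^2)$, $T=8S^2/\mu^2$, contraction $2/3$); the only cosmetic difference is that the paper leaves the initial error as $\|\zz_0-\zz^*\|$ and absorbs it into $\tilde O$, whereas you additionally fix $\zz^{(0)}=\0$ and bound $\|\zz^*\|\le\|\hat\vv\|/\mu$ --- note that this last inequality is not literally one of the three items in \cref{lem:PSD-prop}, but it follows immediately from $\mu$-strong-PD via $\|\M\vv\|\|\vv\|\ge\vv^\top\M\vv\ge\mu\|\vv\|^2$.
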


\begin{proof}%
Note we choose $\eta = \mu/(4S^2)$ and $T=8S^2/\mu^2$ as specified in~\cref{alg:AsySVRG}. Using \cref{lem:gen-per-iter} we get
$$
    \mathbb{E}\norm{\frac{1}{T}\sum\limits_{t=0}^{T-1} \zz_{t}-\zz^*}^2\le\frac{2}{3}\|\zz_0-\zz^*\|^2.
$$
For the rest of the proof, we hide constants and write $\eta=\Theta(\mu/S^2)$ and $T=\Theta(S^2/\mu^2)$ to make a constant factor progress.

Leveraging this bound we can construct the asymmetric SVRG solver by using the above variance-reduced sampling Richardson iterative process repeatedly. After $T=\Theta(S^2/\mu^2)$ iterations, we update $\zz_0^{(q)}=\frac{1}{T}\sum_{t=0}^{T-1}\zz_t^{(q-1)}$ and repeat the process, initalized from this average point. Consequently, after %
\[
Q = \mathrm{log}_{3/2}(\|\zz_0-\zz^*\|^2/\epsilon)=O(\mathrm{log}(\|\zz_0-\zz^*\|/\epsilon))
\]
epochs, in expectation this algorithms returns a solution $\zz_0^{(Q+1)}=[\sum_{t=0}^{T-1}\zz_t^{(Q)}]/T$ satisfying
$$
\mathbb{E}\|\zz_0^{(Q+1)}-\zz^*\|^2\le\epsilon.
$$
For runtime analysis, notice within each epoch the computational cost is one full computation of $\M\zz$ as $\nnz(\M)$ and $T$ computations of $\M_i\zz$ upper bounded by $\mathcal{T}$ for each. As a result, letting $S=\sum_{i\in[n]}L_i$ the runtime in total to achieve $\|\zz-\zz^*\|\le\eps$ is 
$$
O\left(\left(\nnz(\M)+\mathcal{T}\cdot\frac{S^2}{\mu^2}\right)\mathrm{log}\left(\frac{\|\zz_0-\zz^*\|}{\epsilon}\right)\right).
$$

Replacing now $\eps$ by $\eps\delta$ and through Markov inequality argument that%
$$
\mathbb{P}(\|\zz_0^{(Q+1)}-\zz^*\|^2\ge\eps)\le \frac{\E[\|\zz_0^{(Q+1)}-\zz^*\|^2]}{\eps}\le\delta,
$$
 we transfer this algorithm to output the desired solution in $\eps$-approximation with probability $1-\delta$.
	
\end{proof}

\begin{algorithm}[H]
\DontPrintSemicolon
  
  \KwInput{$\M\in\R^{a\times a}$, $\hat{\vv}\in\R^a$, $\zz_0\in\R^a,\eps$ desired accuracy, $\delta$ probability parameter.}
  \KwOutput{$\zz_0^{(Q+1)}\in\R^a$.}
  Set $\eta=\mu/4(\sum_{i\in[n]}L_i)^2$,$T=\lceil(\sum_{i\in[n]}L_i)^2/\mu^2\rceil$, $p_i=L_i/(\sum_{i\in[n]}L_i),i\in[n]$ unless specified\;
  \For{$q=1$ to $Q=\Theta(\log(1/\eps\delta))$}
    {
        \For{$t\gets 1$ to $T$}
        {
        Sample $i_t\sim[n]$ according to $\{p_i\}_{i=1}^n$\;
        $\zz_t^{(q)}\gets \zz_{t-1}^{(q)}-\eta/p_{i_t}(\M_{i_t}\zz_{t-1}^{(q)}-\M_{i_t}\zz_0^{(q)}+p_{i_t}(\M \zz_0^{(q)}-\hat{\vv}))$\;
        }
        $\zz_0^{(q+1)} =\frac{1}{T}\sum_{t=1}^T \zz_t^{(q)}$\; 
    }
\caption{$\AsySVRG(\M,\hat{\vv},\zz_0,\eps,\delta)$}
\label{alg:AsySVRG}
\end{algorithm}

Now we turn to discuss the acceleration runtime. Inspired by approximate proximal point~\cite{FGKS15} or Catalyst~\cite{LMH15}, and using exactly the same acceleration technique used in~\citet{pal16}, when $\nnz(\M)\le\mathcal{T}(\sum_{i\in[n]}L_i)^2/\mu^2$, we can further improve this running time through the procedure as specified in \cref{alg:asy-acc}. 

\begin{algorithm}[H]
\DontPrintSemicolon
  
  \KwInput{$\M$, $\hat{\vv}$, $\zz_0$, $\eps$ desired accuracy, $\delta$ probability parameter}
  \Parameter{$\tau$, $I=\tilde{O}((\tau+\mu)/\mu)$}
  \KwOutput{$\zz^{I+1}$ as the final iterate from the outerloop}
  Initialize $\zz^{(1)}\gets \zz_0$
  \For{$i\gets 1$ to $I$}
  {
  $\zz^{(i+1)}\leftarrow \AsySVRG(\tau\I+\M,\tau\zz^{(i)}+\hat{\vv},\zz^{(i)},(\tfrac{\mu/2}{\mu+2\tau})^2\norm{\zz^{(i)}-(\tau\I+\M)^{-1}(\tau\zz^{(i)}+\hat{\vv})}^2,\delta/(I+1))$\;
  }
\caption{$\Asyacc(\M,\hat{\vv},\zz_0,\eps,\delta)$}
\label{alg:asy-acc}
\end{algorithm}

\begin{theorem}[Accelerated Asymmetric SVRG Solver]
	\label{thm:accel_asymmetric_main}
	When $\nnz(\M)\le\mathcal{T}(\sum_{i\in[n]}L_i)^2/\mu^2$, the accelerated algorithm $\Asyacc(\M,\hat{\vv},\zz_0,\eps,\delta)$ as specified in \cref{alg:asy-acc} is an $\eps$-approximate general asymmetric solver that runs in
	$\tilde{O}\bigl(\sqrt{\nnz(\M)\mathcal{T}}(\sum_{i\in[n]}L_i)/\mu\bigr)$.
\end{theorem}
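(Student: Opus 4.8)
The plan is to read \cref{alg:asy-acc} as an inexact proximal-point (Catalyst) acceleration of the unaccelerated solver of \cref{thm:asymmetric_main}, exactly in the spirit of~\citet{FGKS15,LMH15} and the instantiation in~\citet{pal16}. Each outer iteration performs one approximate proximal step $\zz^{(i+1)}\approx\mathbf{p}^{(i)}\defeq(\tau\I+\M)^{-1}(\tau\zz^{(i)}+\hat\vv)$, realized by invoking $\AsySVRG$ on the regularized system $(\tau\I+\M)\zz=\tau\zz^{(i)}+\hat\vv$, which is strictly better conditioned than $\M\zz=\hat\vv$; the free parameter $\tau$ is then optimized at the end. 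There are two ingredients: (i) a linear convergence rate for the outer loop, and (ii) the per-call cost of the inner solver, traded off against the number of outer iterations.

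For (i), I would first note that since $\M\zz^*=\hat\vv$ we have $(\tau\I+\M)\zz^*=\tau\zz^*+\hat\vv$, hence $\mathbf{p}^{(i)}-\zz^*=\tau(\tau\I+\M)^{-1}(\zz^{(i)}-\zz^*)$ and $\zz^{(i)}-\mathbf{p}^{(i)}=(\tau\I+\M)^{-1}\M(\zz^{(i)}-\zz^*)$. Because $\tau\I+\M$ is $(\tau+\mu)$-strongly-PD (its symmetric part is $\tau\I+\tfrac12(\M+\M^\top)\succeq(\tau+\mu)\I$), we get $\|(\tau\I+\M)^{-1}\|\le(\tau+\mu)^{-1}$, so $\|\tau(\tau\I+\M)^{-1}\|\le\tfrac{\tau}{\tau+\mu}$ and $\|\zz^{(i)}-\mathbf{p}^{(i)}\|\le2\|\zz^{(i)}-\zz^*\|$. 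The inner call returns $\zz^{(i+1)}$ with $\|\zz^{(i+1)}-\mathbf{p}^{(i)}\|\le\tfrac{\mu/2}{\mu+2\tau}\|\zz^{(i)}-\mathbf{p}^{(i)}\|$ (the norm form of the squared accuracy passed in \cref{alg:asy-acc}, via \cref{thm:asymmetric_main}), so the triangle inequality gives, for $\tau\ge\mu$,
\[
\|\zz^{(i+1)}-\zz^*\|\le\Big(\tfrac{\tau}{\tau+\mu}+\tfrac{\mu}{\mu+2\tau}\Big)\|\zz^{(i)}-\zz^*\|=\Big(1-\tfrac{\mu\tau}{(\tau+\mu)(\mu+2\tau)}\Big)\|\zz^{(i)}-\zz^*\|\le\big(1-\tfrac{\mu}{6\tau}\big)\|\zz^{(i)}-\zz^*\|.
\]
Hence $I=\tilde{O}((\tau+\mu)/\mu)$ outer iterations drive the error below $\eps$; running each inner call with failure probability $\delta/(I+1)$ and union bounding controls the total failure probability by $\delta$.

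For (ii), the matrix $\tau\I+\M$ is $(\tau+\mu)$-strongly-PD and carries the decomposition $\M=\sum_{i\in[n]}\M_i$ with the $\tau\I$ shift folded into the exact full-gradient pass (costing $\nnz(\M)$ per epoch but adding nothing to the sampling variance). By \cref{thm:asymmetric_main} each inner call thus costs $\tilde{O}\big(\nnz(\M)+\mathcal{T}(\sum_{i\in[n]}L_i)^2/(\tau+\mu)^2\big)$, so the total running time is
\[
\tilde{O}\!\left(\frac{\tau+\mu}{\mu}\Big(\nnz(\M)+\mathcal{T}\tfrac{(\sum_{i\in[n]}L_i)^2}{(\tau+\mu)^2}\Big)\right)=\tilde{O}\!\left(\frac{(\tau+\mu)\nnz(\M)}{\mu}+\frac{\mathcal{T}(\sum_{i\in[n]}L_i)^2}{\mu(\tau+\mu)}\right).
\]
Balancing the two terms by taking $\tau+\mu=\Theta\big(\max\{\mu,\,(\sum_{i\in[n]}L_i)\sqrt{\mathcal{T}/\nnz(\M)}\}\big)$ yields the claimed bound $\tilde{O}\big(\sqrt{\nnz(\M)\mathcal{T}}(\sum_{i\in[n]}L_i)/\mu\big)$: this choice is well defined because $\mathcal{T}\le\nnz(\M)$, and the hypothesis $\nnz(\M)\le\mathcal{T}(\sum_{i\in[n]}L_i)^2/\mu^2$ is precisely what forces $(\sum_{i\in[n]}L_i)\sqrt{\mathcal{T}/\nnz(\M)}\ge\mu$, so that $\tau\ge\mu$ and the rate bound above is in force (and when the hypothesis is tight the bound degrades gracefully to the unaccelerated guarantee).

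The conceptually routine part is the Catalyst framework itself; the parts needing care are (a) matching constants so that the outer contraction closes with the specific inner accuracy $\big(\tfrac{\mu/2}{\mu+2\tau}\big)^2\|\cdot\|^2$ — in particular using the tight bound $\|\tau(\tau\I+\M)^{-1}\|\le\tfrac{\tau}{\tau+\mu}$ rather than the crude $(1+2\mu/\tau)^{-1/2}$ of \cref{lem:PSD-prop}(2), since the available headroom $\tfrac{\mu}{\tau+\mu}$ must strictly dominate the inexactness $\tfrac{\mu}{\mu+2\tau}$; and (b) verifying that the optimal $\tau$ lands in the regime $\tau\ge\mu$ exactly under the stated hypothesis, and that all accuracy-dependent factors stay logarithmic (which is why the inner tolerance is set proportional to the current gap $\|\zz^{(i)}-\mathbf{p}^{(i)}\|$ rather than to a fixed target). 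I expect (a) to be the only genuine subtlety.
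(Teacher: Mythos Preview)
Your proposal is correct and follows the same Catalyst/proximal-point approach as the paper. The paper packages the outer-loop contraction as a separate \cref{lem:gen-per-iter-acc}, using properties (2) and (3) of \cref{lem:PSD-prop} to obtain the contraction factor $\tfrac{1}{1+\mu/(2\tau)}+\tfrac{\mu/2}{\mu+2\tau}$; you instead use the sharper resolvent bound $\|(\tau\I+\M)^{-1}\|\le 1/(\tau+\mu)$ (valid since $\langle(\tau\I+\M)\xx,\xx\rangle\ge(\tau+\mu)\|\xx\|^2$ together with Cauchy--Schwarz forces $\|(\tau\I+\M)\xx\|\ge(\tau+\mu)\|\xx\|$), arriving at the equivalent contraction $1-\tfrac{\mu\tau}{(\tau+\mu)(\mu+2\tau)}$. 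Either route gives $I=\tilde O((\tau+\mu)/\mu)$ outer iterations, and both fix $\tau=(\sum_iL_i)\sqrt{\mathcal T/\nnz(\M)}$.

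One small imprecision: your claim that folding $\tau\I$ into the full-gradient pass ``adds nothing to the sampling variance'' is not directly licensed by \cref{thm:asymmetric_main} as a black box, since that theorem requires a genuine decomposition $\tilde\M=\sum_i\tilde\M_i$ summing to $\tau\I+\M$, and any such decomposition picks up $\tau$ in $\sum_i\|\tilde\M_i\|$. The paper accordingly writes the inner cost with $(S+\tau)^2$ rather than $S^2$ in the numerator. This is harmless for the final bound because the chosen $\tau$ satisfies $\tau\le S$ whenever $\mathcal T\le\nnz(\M)$ (which both you and the paper implicitly assume), so $(S+\tau)^2=O(S^2)$. Your variance claim could be made rigorous by redoing the per-epoch analysis (\cref{lem:gen-per-iter}) with the $\tau\I$ term handled deterministically, but it is simpler just to carry the $(S+\tau)^2$ as the paper does.
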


To prove the accelerated rate, we first describe the progress per outerloop (from $z^{(i)}$ to $z^{(i+1)}$) in~\cref{alg:asy-acc}.
\begin{lemma}[Progress per Outerloop]
\label{lem:gen-per-iter-acc}
Let $\M \in \R^{a \times a}$ be $\mu$-strongly-PD. Further, suppose for some $\zz^{(0)}, \zz^{(1)}, \hat{\vv} \in \R^a$, $\tau \ge \mu$, $\epsilon \geq 0$, and $\zz^*_{\tau} \defeq (\tau \I + \M)^{-1} (\tau \zz^{(0)} + \hat{\vv})$ satisfy
\[
\|\zz^{(1)} - \zz^*_{\tau}\| \leq 
\epsilon 
\|\zz^{(0)} - \zz^*_{\tau}\| .
\]
Then for $\zz^* \defeq \M^{-1} \hat{\vv}$ we have
\[
\|\zz^{(1)} - \M^{-1} \hat{\vv}\| \leq \left(\frac{1}{1 + \mu/(2\tau)} + \epsilon\right) \|\zz^{(0)} - \M^{-1} \hat{\vv}\|
\]
\end{lemma}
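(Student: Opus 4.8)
The plan is to insert the proximal sub-solution $\zz^*_\tau$ between $\zz^{(1)}$ and $\zz^*$ via the triangle inequality:
\[
\|\zz^{(1)} - \zz^*\| \;\le\; \|\zz^{(1)} - \zz^*_\tau\| + \|\zz^*_\tau - \zz^*\| \;\le\; \epsilon\,\|\zz^{(0)} - \zz^*_\tau\| + \|\zz^*_\tau - \zz^*\|,
\]
where the last step is exactly the hypothesis. Thus it suffices to bound both $\|\zz^{(0)} - \zz^*_\tau\|$ and $\|\zz^*_\tau - \zz^*\|$ by a multiple of $\|\zz^{(0)} - \zz^*\|$, with the right constants.

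First I would write the proximal map in closed form using $\hat{\vv} = \M\zz^*$. Substituting $\tau\zz^{(0)} + \hat{\vv} = \tau(\zz^{(0)} - \zz^*) + (\tau\I+\M)\zz^*$ into the definition of $\zz^*_\tau$ gives the identities
\[
\zz^*_\tau - \zz^* = \tau(\tau\I+\M)^{-1}(\zz^{(0)} - \zz^*) = \bigl(\I + \tfrac1\tau\M\bigr)^{-1}(\zz^{(0)} - \zz^*),
\]
\[
\zz^{(0)} - \zz^*_\tau = \bigl(\I - \tau(\tau\I+\M)^{-1}\bigr)(\zz^{(0)} - \zz^*) = \M(\tau\I+\M)^{-1}(\zz^{(0)} - \zz^*) = \bigl(\tau\M^{-1}+\I\bigr)^{-1}(\zz^{(0)} - \zz^*),
\]
where the last equality uses that $\M$ commutes with $(\tau\I+\M)^{-1}$ (hence $\M(\tau\I+\M)^{-1} = (\tau\I+\M)^{-1}\M$, which is readily checked to equal $(\tau\M^{-1}+\I)^{-1}$). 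Then I would apply \cref{lem:PSD-prop}: since $\M$ is $\mu$-strongly-PD, $\tfrac1\tau\M$ is $\tfrac\mu\tau$-PSD, so property~2 yields $\|(\I+\tfrac1\tau\M)^{-1}\| \le (1+2\mu/\tau)^{-1/2}$, and the elementary inequality $(1+\mu/(2\tau))^2 \le 1 + 2\mu/\tau$ — valid because $\tau \ge \mu$ (indeed $\tau \ge \mu/4$ already suffices) — upgrades this to $\|(\I+\tfrac1\tau\M)^{-1}\| \le (1+\mu/(2\tau))^{-1}$; property~3 gives $\|(\tau\M^{-1}+\I)^{-1}\| \le 1$. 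Combining with the displayed identities gives $\|\zz^*_\tau - \zz^*\| \le \tfrac{1}{1+\mu/(2\tau)}\|\zz^{(0)}-\zz^*\|$ and $\|\zz^{(0)} - \zz^*_\tau\| \le \|\zz^{(0)}-\zz^*\|$, and plugging back into the first display finishes the proof.

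There is no genuine obstacle here; the two points needing a little care are the algebraic reduction of $\zz^*_\tau - \zz^*$ to a resolvent of $\M$ applied to $\zz^{(0)} - \zz^*$ — which is precisely where the choice of right-hand side $\tau\zz^{(i)}+\hat{\vv}$ in \cref{alg:asy-acc} is used — and the sharpening of the $(1+2\mu/\tau)^{-1/2}$ resolvent bound to the cleaner $(1+\mu/(2\tau))^{-1}$ form, which is exactly the role played by the hypothesis $\tau \ge \mu$.
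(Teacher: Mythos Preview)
Your proof is correct and follows essentially the same approach as the paper: triangle inequality through $\zz^*_\tau$, the two resolvent identities $\zz^*_\tau - \zz^* = (\I+\tfrac1\tau\M)^{-1}(\zz^{(0)}-\zz^*)$ and $\zz^{(0)}-\zz^*_\tau = (\tau\M^{-1}+\I)^{-1}(\zz^{(0)}-\zz^*)$, then properties~2 and~3 of \cref{lem:PSD-prop} together with the sharpening $(1+2\mu/\tau)^{-1/2}\le(1+\mu/(2\tau))^{-1}$ under $\tau\ge\mu$. The only cosmetic difference is that you substitute $\hat{\vv}=\M\zz^*$ up front whereas the paper carries $\M^{-1}\hat{\vv}$ through the algebra.
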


\begin{proof}
From the definition of $\zz^*_{\tau}$ we have
\[
\zz^*_{\tau} - \M^{-1} \hat{\vv}
= 
(\tau \I + \M)^{-1} (\tau \zz^{(0)} + \hat{\vv}	
- 
(\tau \I + \M) \M^{-1} \hat{\vv}
)
=
\tau (\tau \I + \M)^{-1} (\zz^{0} - \M^{-1} \hat{\vv}) ~.
\]
Since $\tau(\tau \I + \M)^{-1} = (\I+ \tau^{-1} \M)$ and $\tfrac{1}{2}([\tau^{-1} \M] + [\tau^{-1} \M]^\top) \succeq (\mu/\tau) \I$ by the assumptions on $\M$ we have that $\|\tau(\tau \I + \M)^{-1}\|_2 \leq (1+2\mu/\tau)^{-1/2}$ by second property in \cref{lem:PSD-prop} and therefore 
\begin{equation}
\label{eq:error_1}
\| \zz^*_{\tau} - \M^{-1} \hat{\vv} \| \leq \sqrt{\frac{1}{1 + 2\mu/\tau}} \|\zz^{(0)} - \M^{-1} \hat{\vv}\|\le \frac{1}{1 + \mu/(2\tau)} \|\zz^{(0)} - \M^{-1} \hat{\vv}\| ~,
\end{equation}
where the last inequality follows from the fact $\sqrt{\frac{1}{1 + 2\mu/\tau}}\le \frac{1}{1 + \mu/(2\tau)}$ for all $\mu\le\tau$.
Further we have that
\[
\zz^{(0)} - \zz^{*}_{\tau}
= (\tau \I + \M)^{-1} ((\tau \I + \M) \zz^{0} - \tau \zz^{0} - \M \M^{-1} \hat{\vv}  )
= (\tau \M^{-1} + \I)^{-1} (\zz^{(0)} - \M^{-1} \hat{\vv}) ~.
\]
 by the third property of~\cref{lem:PSD-prop}  we have that $\|(\tau \M^{-1} + \I)^{-1} \|_2 \leq 1$ and  
\begin{equation}
\label{eq:error_2}
\| \zz^{(0)} - (\tau \I + \M)^{-1} (\tau \zz^{0} + \hat{\vv}) \| \leq  \|\zz^{(0)} - \M^{-1} \hat{\vv}\| ~.
\end{equation}
As $\|z^{(1)} - \M^{-1} \hat{v}\| \leq \|\zz^{(1)} -\zz^* _{\tau}\| + \|\zz^*_{\tau}+ \M^{-1} \hat{\vv}\|$ the result follows by \eqref{eq:error_1} and \eqref{eq:error_2}.
\end{proof}

\begin{proof}[Proof of \cref{thm:accel_asymmetric_main}]

The acceleration runtime can be achieved through a standard outer acceleration procedure:

Denote the whole optimizer $\zz^*$ satisfying $\M\zz^*=\hat{\vv}$, $S=\sum_{i\in[n]}L_i$ as usual. When $\nnz(\M)\le\mathcal{T}\cdot S^2/\mu^2$, we choose $\tau=S\sqrt{\mathcal{T}/\nnz(\M)}\ge\mu$.

Using a similar derivation as in~\cref{thm:asymmetric_main} we know after $T=\tilde{O}(\cdot(\nnz(\M)+\mathcal{T}(S+\tau)^2/(\mu+\tau)^2))$,  we have $\zz^{(i+1)}$ satisfying with probability at least $1-\delta/(I+1)$
\begin{align*}
\|\zz^{(i+1)}-(\tau \I+\M)^{-1}(\tau \zz^{(i)}+\hat{\vv})\|^2
& \le \left(\frac{\mu/2}{\mu+2\tau}\right)^2\|\zz^{(i)}-(\tau \I+\M)^{-1}(\tau \zz^{(i)}+\hat{\vv})\|^2.
\end{align*}
Starting the induction from $i=0$ and using~\cref{lem:gen-per-iter-acc} (since we have $\tau\ge\mu$) recursively, it implies with probability at least $1-(i+1)/(I+1)\cdot\delta$,
\begin{align*}
    & \|\zz^{(i+1)}-\zz^*\|\le \left(\frac{1}{1+\mu/2\tau}+\frac{\mu/2}{\mu+2\tau}\right)\|\zz^{(i)}-\zz^*\|,\\
    & \|\zz^{(i+1)}-\zz^*\|\le \left(\frac{1}{1+\mu/2\tau}+\frac{\mu/2}{\mu+2\tau}\right)^{i+1}\|\zz^{(0)}-\zz^*\|.
\end{align*}

Note as we choose $I=\tilde{O}((\tau+\mu)/\mu)$ we have with probability $1-\delta$ after $I$ outerloops,
$$
\|\zz^{(I+1)}-\zz^*\|\le\eps,
$$
which takes a total runtime of 
$$
\tilde{O}\left(\bigl(\frac{\mu+\tau}{\mu}\bigr)\bigl(\nnz(\M)+\mathcal{T}\frac{(S+\tau)^2}{(\mu+\tau)^2}\bigr)\right) = \tilde{O}\left(\sqrt{\nnz(\M)\mathcal{T}}\frac{\sum_{i\in[n]}L_i}{\mu}\right).
$$

\end{proof}

\subsection{Asymmetric Linear System Solving for Squared System Solver}
\label{ssec:SVRG-spec}

To solve the particular asymmetric system we consider the step
\begin{equation}
\begin{aligned}
\zz_{t+1} & = \zz_t-\frac{\eta}{p_i}\bigl(\M_i \zz_t-\M_i\zz_0+p_i(\M\zz_0-\hat{\vv})\bigr)\\
\text{ where }
\M_i & \defeq
  \begin{pmatrix}
   \frac{\|\aaa_i\|^2}{\|\A\|_\mathrm{F}^2} \I & -\frac{1}{\mu}(\aaa_i \aaa_i^\top -c\frac{\|\aaa_i\|^2}{\|\A\|_\mathrm{F}^2} \I) \\
   \frac{1}{\mu}(\aaa_i\aaa_i^\top -c\frac{\|\aaa_i\|^2}{\|\A\|_\mathrm{F}^2}) \I & \frac{\|\aaa_i\|^2}{\|\A\|_\mathrm{F}^2} \I
  \end{pmatrix},\ p_i\propto\|\aaa_i\|^2,\ \forall i\in[n].
\end{aligned}
\label{eqn:SVRG-one-step-spec}
\end{equation}

Through a more fine-grained analysis, $\AsySVRG(\M,\hat{\vv},\zz_0,\eps,\delta)$ with particular choices of $\eta$, $T$, $\M_i$, and $\{p_i\}_{i\in[n]}$, can have a better runtime guarantee and be accelerated using a similar idea as in the general case discussed in previous subsection. This is stated formally using the following variance bound in \cref{thm:main-asySVRG}.

\begin{lemma}[Variance bound for specific form]
\label{lem:var-spec}
For problem~\eqref{cond:spec}, the variance incurred in~\eqref{eqn:SVRG-one-step-spec} is bounded by $S=O(\norm{\A}_\mathrm{F}\sqrt{\lambda_1}/\mu)$, i.e. there exists constant $0<C<\infty$ that
$$\sum_{i\in[n]}\frac{1}{p_{i}}\|\M_i\zz_t-\M_i\zz_0 +p_{i}\cdot (\M\zz_0-\hat{\vv})\|^2\le\frac{C\norm{\A}_\mathrm{F}^2\lambda_1}{\mu^2}\left[\|\zz_t-\zz^*\|^2+\|\zz_t-\zz^*\|^2\right].$$
\end{lemma}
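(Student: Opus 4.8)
The plan is to follow the template of the general variance bound (the lemma ``Variance bound for general case'') for the purely algebraic reduction, and then to exploit the specific block structure of the summands $\M_i$ in~\eqref{eqn:SVRG-one-step-spec} to carry out the estimate with no loss coming from ``cross'' terms. The first step is the reduction to a one-vector inequality. Using $\M\zz^*=\hat{\vv}$, write $\M_i\zz_t-\M_i\zz_0+p_i(\M\zz_0-\hat{\vv})=\M_i(\zz_t-\zz^*)+\bigl(\M_i(\zz^*-\zz_0)+p_i\M(\zz_0-\zz^*)\bigr)$, apply $\|\mathbf{a}+\mathbf{b}\|^2\le 2\|\mathbf{a}\|^2+2\|\mathbf{b}\|^2$, and note that $\sum_{i}p_i\cdot\tfrac{1}{p_i}\M_i=\M$, so that under the sampling law $\{p_i\}$ the random vector $X=\tfrac{1}{p_i}\M_i(\zz^*-\zz_0)$ has mean $\M(\zz^*-\zz_0)$; the inequality $\E\|X-\E X\|^2\le\E\|X\|^2$ then yields $\sum_i\tfrac{1}{p_i}\|\M_i(\zz^*-\zz_0)+p_i\M(\zz_0-\zz^*)\|^2\le\sum_i\tfrac{1}{p_i}\|\M_i(\zz_0-\zz^*)\|^2$. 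After this, it suffices to prove a bound of the form $\sum_{i\in[n]}\tfrac{1}{p_i}\|\M_i\mathbf{w}\|^2\le O\!\bigl(\|\A\|_\mathrm{F}^2\lambda_1/\mu^2\bigr)\|\mathbf{w}\|^2$ for an arbitrary $\mathbf{w}=[\mathbf{w}_1;\mathbf{w}_2]\in\R^{2d}$, applied once with $\mathbf{w}=\zz_t-\zz^*$ and once with $\mathbf{w}=\zz_0-\zz^*$ (matching the general lemma, the right-hand side of the statement should read $\|\zz_t-\zz^*\|^2+\|\zz_0-\zz^*\|^2$).

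The heart of the argument is the block structure of $\M_i$. Set $\mathbf{B}_i\defeq\aaa_i\aaa_i^\top-c p_i\I$, which is \emph{symmetric} because $p_i=\|\aaa_i\|^2/\|\A\|_\mathrm{F}^2$. Then $\M_i=p_i\I+\tfrac{1}{\mu}\mathbf{J}_i$ with $\mathbf{J}_i\defeq\begin{pmatrix}\0 & -\mathbf{B}_i\\ \mathbf{B}_i & \0\end{pmatrix}$ skew-symmetric, so $\M_i^\top\M_i=p_i^2\I-\tfrac{1}{\mu^2}\mathbf{J}_i^2=p_i^2\I+\tfrac{1}{\mu^2}\diag(\mathbf{B}_i^2,\mathbf{B}_i^2)$: the terms linear in $\mathbf{J}_i$ cancel by skew-symmetry. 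Consequently $\tfrac{1}{p_i}\|\M_i\mathbf{w}\|^2=p_i\|\mathbf{w}\|^2+\tfrac{1}{p_i\mu^2}\bigl(\|\mathbf{B}_i\mathbf{w}_1\|^2+\|\mathbf{B}_i\mathbf{w}_2\|^2\bigr)$. Summing and using $\sum_i p_i=1$, it remains to bound $\sum_i\tfrac{1}{p_i}\|\mathbf{B}_i\mathbf{u}\|^2$ for $\mathbf{u}\in\R^d$; from $\|\mathbf{B}_i\mathbf{u}\|^2\le 2\inprod{\aaa_i}{\mathbf{u}}^2\|\aaa_i\|^2+2c^2p_i^2\|\mathbf{u}\|^2$ and $\|\aaa_i\|^2/p_i=\|\A\|_\mathrm{F}^2$ we get $\sum_i\tfrac{1}{p_i}\|\mathbf{B}_i\mathbf{u}\|^2\le 2\|\A\|_\mathrm{F}^2\|\A\mathbf{u}\|^2+2c^2\|\mathbf{u}\|^2\le 4\|\A\|_\mathrm{F}^2\lambda_1\|\mathbf{u}\|^2$, using $\|\A\mathbf{u}\|^2\le\lambda_1\|\mathbf{u}\|^2$ and $c^2\le\lambda_1^2\le\lambda_1\|\A\|_\mathrm{F}^2$.

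Putting the pieces together gives $\sum_i\tfrac{1}{p_i}\|\M_i\mathbf{w}\|^2\le\|\mathbf{w}\|^2+\tfrac{4\|\A\|_\mathrm{F}^2\lambda_1}{\mu^2}\|\mathbf{w}\|^2$, and since one may assume $\mu\le\lambda_1$ (when $\mu>\lambda_1$ the matrix $(\A^\top\A-c\I)^2+\mu^2\I$ has $O(1)$ condition number and~\eqref{eqn:square} is solved directly by a standard method), we have $\mu^2\le\lambda_1\|\A\|_\mathrm{F}^2$, which absorbs the stray $\|\mathbf{w}\|^2$ term; combined with the first step this yields the claimed inequality with a universal constant $C$.

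I expect the only genuine obstacle to be recognizing and justifying the cancellation in the second paragraph. Working directly with the squared operator $(\A^\top\A-c\I)^2=\sum_{i,j}\aaa_i\aaa_i^\top\aaa_j\aaa_j^\top$, the bilinear $i\neq j$ contributions survive and would inflate the variance proxy to order $\mathrm{sr}^2(\A)\lambda_1^4/\mu^4$; lifting to the asymmetric system~\eqref{eqn:aym} and writing each summand $\M_i$ as a scaled identity plus a skew-symmetric perturbation is precisely what makes $\M_i^\top\M_i$ block-diagonal and removes them. The remaining ingredients --- the triangle-type $\ell_2$ inequalities, the mean-zero step, and the scalar bounds on $c$ and $\|\A\mathbf{u}\|$ --- are routine.
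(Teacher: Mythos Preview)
Your proof is correct and follows essentially the same approach as the paper: both compute $\M_i^\top\M_i$, observe it is block-diagonal with blocks $p_i^2\I+\mu^{-2}\mathbf{B}_i^2$, sum over $i$, and then combine with the same mean-zero reduction step. Your derivation of the block-diagonal form via the skew-symmetric decomposition $\M_i=p_i\I+\mu^{-1}\mathbf{J}_i$ is a cleaner explanation of the paper's direct block computation (their step (i)), and your triangle inequality on $\|\mathbf{B}_i\mathbf{u}\|^2$ yields the same bound up to a constant as the paper's exact expansion $\sum_i p_i^{-1}\mathbf{B}_i^2=\|\A\|_\mathrm{F}^2\A^\top\A-2c\A^\top\A+c^2\I$.
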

\begin{proof}
For arbitrary $\Delta\in\R ^{2d}$, set $p_i=\|\aaa_i\|^2/\|\A\|_\mathrm{F}^2$, we have
\begin{align*}
& \sum_{i\in[n]}\frac{1}{p_{i}}\|\M _{i} \Delta\|^2\\
& = \Delta^\top \biggl(\sum_{i\in[n]}\frac{1}{p_{i}}\M _{i}^\top \M _{i}\biggr) \Delta\\
& \stackrel{(i)}{=} \Delta^\top \left(\sum_{i\in[n]}\frac{1}{p_i}
  \begin{pmatrix}
    p_i^2\I+\frac{1}{\mu^2}(\aaa_i\aaa_i^\top -p_ic \I)^2
   & \0 \\
   \0 & 
   \frac{1}{\mu^2}(\aaa_i\aaa_i^\top -p_ic \I)^2+p_i^2 \I
  \end{pmatrix}
\right) \Delta\\
& \stackrel{(ii)}{=} \Delta^\top 
  \begin{pmatrix}
   \I+\frac{1}{\mu^2}(\|\A\|_\mathrm{F}^2\A^\top \A-2c \A^\top \A+c^2\I) & 
   \0 \\
   \0 & 
   \I+\frac{1}{\mu^2}(\|\A\|_\mathrm{F}^2\A^\top \A-2c\A^\top \A+c^2\I)
  \end{pmatrix}\Delta
  \\
& \stackrel{(iii)}{\le} (1+\frac{\|\A\|_\mathrm{F}^2\lambda_1}{\mu^2})\|\Delta\|^2\le \frac{C\|\A\|_\mathrm{F}^2\lambda_1}{\mu^2}\|\Delta\|^2,
\end{align*}
where we use (i) the specific form of $\M_i$ as in~\eqref{eqn:SVRG-one-step-spec}, (ii) specific choice of $p_i=\norm{\aaa_i}^2/\norm{\A}_\mathrm{F}^2$, (iii) the assumption that $\A^\top\A\preceq\lambda_1\I$ and $c\in[0,\lambda_1]$. %

Note then similar to proof of \cref{lem:var-gen}, we have
\begin{align*}
\sum_{i\in[n]}\frac{1}{p_{i}}\|\M _{i} (\zz_t-\zz_0) +p_{i}(\M \zz_0-\vv)\|^2 & \le \sum_{i\in[n]}\biggl(\frac{2}{p_{i}}\|\M _{i} (\zz_t-\zz^*)\|^2+\frac{2}{p_i}\|\M _{i}(\zz^*-\zz_0)\|^2\biggr)\\
& \le\frac{2C\|\A\|_\mathrm{F}^2\lambda_1}{\mu^2}(\|\zz_t-\zz^*\|^2+\|\zz_0-\zz^*\|^2).
\end{align*}

\end{proof}

\begin{theorem}[Particular Asymmetric SVRG Solver]
\label{thm:main-asySVRG}
Set $p_i=\|\aaa_i\|^2/\|\A\|_\mathrm{F}^2$, $\eta=\mu^2/2\lambda_1\|\A\|_\mathrm{F}^2$, $T=\lceil 2\|\A\|_\mathrm{F}^2 \lambda_1/\mu^2\rceil$ and
$$
\M_i:=
  \begin{pmatrix}
   \frac{\|\aaa_i\|^2}{\|\A\|_\mathrm{F}^2}I & -\frac{1}{\mu}(\aaa_i \aaa_i^\top -c\frac{\|\aaa_i\|^2}{\|\A\|_\mathrm{F}^2} \I) \\
   \frac{1}{\mu}(\aaa_i\aaa_i^\top -c\frac{\|\aaa_i\|^2}{\|\A\|_\mathrm{F}^2}) \I & \frac{\|\aaa_i\|^2}{\|\A\|_\mathrm{F}^2} \I
  \end{pmatrix},\ \forall i\in[n].
$$
Then $\AsySVRG(\M,\hat{\vv},\zz_0,\eps,\delta)$ as specified in \cref{alg:AsySVRG} becomes an $\eps$-approximate particular asymmetric solver that runs in $\tilde{O}\bigl(\nnz(\A)+d\cdot\mathrm{sr}(\A)\lambda_1^2/\mu^2\bigr).$ An accelerated variant of it has the improved runtime of $\tilde{O}\bigl(\lambda_1\sqrt{\nnz(\A)d\cdot\mathrm{sr}(\A)}/\mu\bigr)$ when $\nnz(\A)\le d\cdot\mathrm{sr}(\A)\lambda_1^2/\mu^2$.
\end{theorem}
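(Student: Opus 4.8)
The plan is to instantiate the general asymmetric-SVRG machinery of \cref{ssec:SVRG-gen} on the specific matrix $\M$ of \eqref{cond:spec} and the specific decomposition $\M=\sum_{i\in[n]}\M_i$ of \eqref{eqn:SVRG-one-step-spec}, but to carry the sharper variance parameter $S=O(\|\A\|_\mathrm{F}\sqrt{\lambda_1}/\mu)$ from \cref{lem:var-spec} through the analysis in place of the crude surrogate $\sum_i L_i$. First I would verify the structural facts: $\sum_i \|\aaa_i\|^2/\|\A\|_\mathrm{F}^2=1$ and $\sum_i \aaa_i\aaa_i^\top=\A^\top\A$ give $\sum_{i}\M_i=\M$ with $\M$ exactly the matrix in \eqref{cond:spec}, and $\tfrac12(\M^\top+\M)=\I$, so $\M$ is $1$-strongly-PD in the sense of \cref{def:PSD} (note the ``$\mu$'' appearing inside the off-diagonal blocks is the squared-system parameter, not the strong-PD constant, which equals $1$). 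I would then record the per-iteration costs: $\M_i\zz$ requires only applying the rank-one operator $\aaa_i\aaa_i^\top$ and scaled identities, so $\mathcal{T}=\max_i\mathcal{T}_{\mathrm{mv}}(\M_i)=O(d)$, while one full product $\M\zz_0$ amounts to a single multiplication by $\A^\top\A-c\I$ and costs $O(\nnz(\A))$, i.e.\ the role of $\nnz(\M)$ in the general bounds is played by $O(\nnz(\A))$.

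With these in hand the unaccelerated bound follows by replaying the proof of \cref{thm:asymmetric_main}. \cref{lem:var-spec} supplies the hypothesis of \cref{lem:gen-per-iter} with variance constant $S^2=O(\|\A\|_\mathrm{F}^2\lambda_1/\mu^2)$ and strong-PD constant $1$, so the stated choices $\eta=\Theta(1/S^2)=\Theta(\mu^2/(\lambda_1\|\A\|_\mathrm{F}^2))$ and $T=\Theta(S^2)=\Theta(\|\A\|_\mathrm{F}^2\lambda_1/\mu^2)$ — matching the $\eta,T$ in the statement up to constants — make the per-epoch factor $\frac{1/(2T)+\eta^2S^2}{\eta-\eta^2S^2}$ a constant strictly below $1$, hence a constant-factor contraction each epoch. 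Iterating over $Q=O(\log(\|\zz_0-\zz^*\|/(\eps\delta)))$ epochs and applying the same Markov-inequality boosting as in \cref{thm:asymmetric_main} yields an $\eps$-approximate particular asymmetric solver (\cref{def:particular-asym}) whose total cost is $Q\cdot O(\nnz(\A)+T\cdot\mathcal{T})=\tilde O(\nnz(\A)+d\cdot\|\A\|_\mathrm{F}^2\lambda_1/\mu^2)$; substituting $\|\A\|_\mathrm{F}^2=\mathrm{sr}(\A)\lambda_1$ gives $\tilde O(\nnz(\A)+d\cdot\mathrm{sr}(\A)\lambda_1^2/\mu^2)$.

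For the accelerated variant I would rerun the proximal-point / Catalyst outer loop of \cref{thm:accel_asymmetric_main} (\cref{alg:asy-acc}, with the progress guarantee \cref{lem:gen-per-iter-acc}), again feeding in the sharper $S$: under the balance condition $\nnz(\M)\le \mathcal{T}\,S^2$ — which here reads $\nnz(\A)\le d\cdot\mathrm{sr}(\A)\lambda_1^2/\mu^2$, exactly the hypothesis in the statement — the choice $\tau=S\sqrt{\mathcal{T}/\nnz(\M)}\ge 1$ equalizes inner and outer work and produces a total runtime of $\tilde O(\sqrt{\nnz(\M)\,\mathcal{T}}\,S)=\tilde O(\sqrt{\nnz(\A)\,d}\cdot\|\A\|_\mathrm{F}\sqrt{\lambda_1}/\mu)=\tilde O(\lambda_1\sqrt{\nnz(\A)\,d\cdot\mathrm{sr}(\A)}/\mu)$, using $\|\A\|_\mathrm{F}=\sqrt{\mathrm{sr}(\A)\lambda_1}$.

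The main obstacle is precisely the point just used: one cannot cite \cref{thm:asymmetric_main,thm:accel_asymmetric_main} as black boxes, since their stated runtimes are expressed through the loose variance surrogate $\sum_i L_i=\Theta(\|\A\|_\mathrm{F}^2/\mu)$, which overestimates the true variance constant of \cref{lem:var-spec} by a $\sqrt{\mathrm{sr}(\A)}$ factor. The work is therefore in observing that those proofs only ever invoke the variance constant through \cref{lem:gen-per-iter} and \cref{lem:gen-per-iter-acc}, re-deriving the epoch count, step size, and acceleration parameter in terms of the abstract $S$, and then plugging in the improved $S=O(\|\A\|_\mathrm{F}\sqrt{\lambda_1}/\mu)$ together with the correct accounting $\mathcal{T}=O(d)$ and $\nnz(\M)=O(\nnz(\A))$; the remaining algebra (checking $\eta\cdot 1-\eta^2S^2>0$, collecting logarithmic factors into $\tilde O$, and the $\mathrm{sr}(\A)$ substitutions) is routine.
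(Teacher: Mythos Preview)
Your proposal is correct and follows essentially the same route as the paper: invoke \cref{lem:var-spec} to obtain the sharper variance constant $S=O(\|\A\|_\mathrm{F}\sqrt{\lambda_1}/\mu)$, feed it into \cref{lem:gen-per-iter} with strong-PD parameter $1$ to get a constant-factor per-epoch contraction at the stated $\eta,T$, iterate $Q=\tilde O(1)$ epochs and boost via Markov as in \cref{thm:asymmetric_main}, and for the accelerated bound rerun the outer proximal-point loop of \cref{thm:accel_asymmetric_main} with the same improved $S$. Your explicit verification that $\sum_i\M_i=\M$, that $\tfrac12(\M+\M^\top)=\I$, and that $\mathcal T_{\mathrm{mv}}(\M_i)=O(d)$ while a full product costs $O(\nnz(\A))$, together with your remark that the general theorems cannot be cited as black boxes because their stated bounds use $\sum_i L_i$ rather than the true $S$, are all sound and more carefully spelled out than in the paper's own (terser) proof.
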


\begin{proof}

Using \cref{lem:var-spec}, the conditions of \cref{lem:gen-per-iter} are satisfied with $S=\sqrt{C}\norm{\A}_\mathrm{F}\sqrt{\lambda_1}/\mu$. Consequently we have
\begin{equation}
\begin{aligned}
    \mathbb{E}\left\|\frac{1}{T}\sum\limits_{t=0}^{T-1} \zz_{t}-\zz^*\right\|^2\le\mathbb{E}\left[\frac{1}{T}\sum\limits_{t=0}^{T-1}\| \zz_{t}-\zz^*\|^2\right]\le\frac{\frac{1}{2T}+\eta^2C\|\A\|_\mathrm{F}^2\lambda_1/\mu^2}{\eta-\eta^2C\|\A\|_\mathrm{F}^2\lambda_1/\mu^2}\|\zz_0-\zz^*\|^2.
\end{aligned}
\label{eqn:var-progress-per-spec}	
\end{equation}

{\bf For the non-acceleration case:}
Note we choose $\eta = \mu^2/(C\|\A\|_\mathrm{F}^2\lambda_1)$, and $T=C\|\A\|_\mathrm{F}^2\lambda_1/\mu^2$ as in \cref{eqn:var-progress-per-spec} yields
$$
    \mathbb{E}\left\|\frac{1}{T}\sum\limits_{t=0}^{T-1} \zz_{t}-\zz^*\right\|^2\le\frac{2}{3}\|\zz_0-\zz^*\|^2.
$$

Hereinafter, we hide constants and write $\eta=\Theta(\mu^2/(\|\A\|_\mathrm{F}^2\lambda_1))$ and $T=O(\|A\|_\mathrm{F}^2\lambda_1/\mu^2)$ to make a constant factor progress.

Then similarly as in \cref{alg:AsySVRG} and \cref{thm:asymmetric_main}, we argue after $Q=O(\mathrm{log}(\|\zz_0-\zz^*\|/\epsilon)$ batches, in expectation we'll return a solution $\zz_0^{(Q+1)}=[\sum_{t=0}^{T-1}\zz_t^{(Q)}]/T$
$$
\mathbb{E}\|\zz_0^{(Q+1)}-\zz^*\|^2\le\epsilon.
$$
For runtime analysis, notice within each batch the computational cost is one full computation of $\M\zz$ and $O(T)$ computations of $\M_i\zz$, which together is $O(\mathrm{nnz}(\A)+d\cdot\|\A\|_\mathrm{F}^2\lambda_1/\mu^2)$. So the total runtime to achieve $\|\zz^{(Q+1)}-\zz^*\|\le\eps$ with probability $1-\delta$ is $$O\bigl((\nnz(\A)+d\cdot\|\A\|_\mathrm{F}^2\lambda_1/\mu^2)\mathrm{log}(\|\zz_0-\zz^*\|/\epsilon\delta)\bigr)=\tilde{O}(\nnz(\A)+d\cdot\mathrm{sr}(\A)\lambda_1^2/\mu^2).$$

{\bf For the acceleration case:} The standard technique of outer acceleration used in \cref{thm:accel_asymmetric_main} is applied to get a better runtime under this case, and is used to prove \cref{thm:square_solver_main}. 

\end{proof}

\subsection{Squared System Solver Using SVRG}

For the squared ridge regression solver, we first give its simple pseudocode using \cref{red:asySVRG} and $\AsySVRG$ for completeness.

\begin{algorithm}[H]
\DontPrintSemicolon
  \KwInput{$\A$ data matrix, $c\in[0,\lambda_1]$, $\mu>0$, $\vv$, $\xx_0$ initial, $\eps$ accuracy, $\delta$ probability}
  \KwOutput{$\widetilde{\xx}$ $\eps$-approximate solution as in \cref{dfn:square}.}
  Initialize $\zz^{0}$.\;
  Set 
	$\M =\begin{pmatrix}
   \I & -\frac{1}{\mu}(\A^\top\A-c \I) \\
   \frac{1}{\mu}(\A^\top\A-c \I) & \I
  \end{pmatrix}$ and $\hat{\vv}=\left(\0,\vv/\mu^2\right)^\top$.\;
  Call $[\xx,\yy]\leftarrow\AsySVRG(\M,\hat{\vv},\zz_0,\eps\|\vv\|,\delta)$.\;
  Return $\widetilde{\xx}\gets\yy$\;
  \caption{$\RidgeSquare(\A,c,\mu^2,\vv,\eps,\delta)$}
  \label{alg:rs}
\end{algorithm}

This is essentially a corollary of \cref{thm:main-asySVRG}.

\begin{proof}[Proof of \cref{thm:square_solver_main}]

	Set 
	$$\M =\begin{pmatrix}
   \I & -\frac{1}{\mu}(\A^\top\A-c \I) \\
   \frac{1}{\mu}(\A^\top\A-c \I) & \I
  \end{pmatrix}
  \text{ and }
  \hat{\vv}=\left(\0,\vv/\mu^2\right)^\top,~c\in[0,\lambda_1].$$ We know an $\eps$-approximate squared ridge regression solver would suffice to call $$[\xx,\yy]\leftarrow\AsySVRG(\M,\hat{\vv},\zz_0,\eps\|\vv\|,\delta).$$ once and set its output as $\yy$ through \cref{red:asySVRG}. This together with \cref{thm:main-asySVRG} gives us the total runtime of $\tilde{O}\left(\nnz(\A)+d\cdot\mathrm{sr}(\A)\lambda_1^2/\mu^2\right)$ unaccelerated and $\tilde{O}\left(\sqrt{\nnz(\A)d\cdot\mathrm{sr}(\A)}\lambda_1/\mu\right)$ accelerated (when $\nnz(\A)\le d\cdot\sr(\A)\lambda_1^2/\mu^2$). In short, we have the guaranteed running time within 
  $$\tilde{O}\left(\nnz(\A)+\sqrt{\nnz(\A)d\cdot\mathrm{sr}(\A)}\lambda_1/\mu\right).$$
	\end{proof}

\cref{thm:square_root_solver_main} also implies immediately a solver for non-PSD system in form: $(\A^\top\A-c\I)\xx=\vv$, $c\in[0,\lambda_1]$ with same runtime guarantee whenever all eigenvalues $\lambda_i-c$ of  $(\A^\top\A-c\I)$ satisfy $|\lambda_i-c|\ge\mu>0,\forall i$. This is done by considering solving $(\A^\top\A-c\I)^2\xx=(\A^\top\A-c\I)\vv$. We state this formally in the following corollary for completeness, which is equivalent as showing \cref{cor:square_solver_main}.

\begin{corollary}
	\label{cor:square_solver_main_full}
	Given $c\in[0,\lambda_1]$, and a non-PSD system $(\A^\top\A-c\I)\xx=\vv$ and an initial point $\xx_0$, for arbitrary $c\in\R$ satisfying $(\A^\top\A-c\I)^2\succeq\mu^2\I,\mu>0$, there is an algorithm that uses $\AsySVRG(\M,\hat{\vv},\zz_0,\Theta(\eps\|\vv\|),\delta)$ to return with probablity $1-\delta$ a solution $\widetilde{\xx}$ such that $\|\widetilde{\xx}-(\A^\top\A-c\I)^{-1}\vv\|\le\epsilon\|\vv\|$, within runtime 
	$\tilde{O}\bigl(\nnz(\A)+d\cdot\mathrm{sr}(\A)\lambda_1^2/\mu\bigr).$ The runtime can be accelerated to $\tilde{O}\bigl(\lambda_1\sqrt{\nnz(\A)d\cdot\mathrm{sr}(\A)}/\mu\bigr)$ when $\nnz(\A)\le d\cdot\mathrm{sr}(\A)\lambda_1^2/\mu^2$.
\end{corollary}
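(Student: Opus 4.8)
The plan is to reduce to the squared ridge regression solver of \cref{thm:square_solver_main} by iterative refinement. Write $\M_0 \defeq \A^\top\A - c\I$, so that the desired output $\xx^* = \M_0^{-1}\vv$ is exactly the solution of the PSD system $\M_0^2 \xx = \M_0\vv$, where by hypothesis $\M_0^2 \succeq \mu^2\I$. First I would compute $\bb \defeq \M_0\vv = \A^\top(\A\vv) - c\vv$ in $O(\nnz(\A))$ time and then approximate $\M_0^{-2}\bb$; since $\bb = \M_0\vv$ this approximates $\M_0^{-1}\vv$ directly, with no final multiplication needed. Note also $\|\bb\| = \|\M_0\vv\| \le \lambda_1\|\vv\| \le \|\vv\|$ because $c,\lambda_i \in [0,\lambda_1]$ and $\lambda_1 \le 1$.

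To invert $\M_0^2$ I would use $\PP \defeq \M_0^2 + \mu^2\I$ as a preconditioner, which can be applied to any vector to relative accuracy $\eps'$ by calling $\RidgeSquare(\A,c,\mu^2,\cdot,\eps',\delta')$ (\cref{alg:rs}, guaranteed by \cref{thm:square_solver_main}) within the runtime stated in \cref{thm:square_solver_main}. The exact preconditioned Richardson iteration $\xx_{k+1} = \xx_k + \PP^{-1}(\bb - \M_0^2\xx_k)$ has error operator $\I - \PP^{-1}\M_0^2 = \mu^2\PP^{-1}$, whose spectral norm is at most $\tfrac12$ since $\M_0^2 + \mu^2\I \succeq 2\mu^2\I$; equivalently, this is the truncated expansion $\M_0^{-2} = \PP^{-1}\sum_{k\ge0}(\mu^2\PP^{-1})^k$ of a geometric series with ratio $\le\tfrac12$. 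Stopping after $K = O(\log(1/(\mu\eps)))$ steps leaves truncation error at most $2^{-K}\|\bb\|/\mu^2 \le 2^{-K}\|\vv\|/\mu^2$, which is below $\tfrac{\eps}{2}\|\vv\|$ for an appropriate $K = \tilde{O}(1)$.

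It remains to track how the inexactness of the $\RidgeSquare$ calls propagates. At step $k$ the routine is applied to a residual (equivalently, a partial-sum vector) whose norm is $O(\|\bb\|) = O(\|\vv\|)$, because the exact iteration contracts and keeps these vectors bounded; hence an $\eps'$-accurate call contributes an additive error $O(\eps'\|\vv\|)$ that is then propagated by at most $K$ further contractive ($\le\tfrac12$) steps, so the total accumulated error is $O(K\eps'\|\vv\|)$. Choosing $\eps' = \Theta(\eps/K)$ and $\delta' = \delta/(K+1)$, a union bound over the $K+1$ invocations yields, with probability $1-\delta$, a vector $\widetilde{\xx}$ with $\|\widetilde{\xx} - \M_0^{-2}\bb\| = \|\widetilde{\xx} - \M_0^{-1}\vv\| \le \eps\|\vv\|$. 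The total cost is $K = \tilde{O}(1)$ invocations of $\RidgeSquare$, each within the runtime of \cref{thm:square_solver_main}, plus $O(K)$ multiplications by $\M_0^2$ at $O(\nnz(\A))$ apiece; this gives the claimed bound $\tilde{O}(\nnz(\A) + d\cdot\sr(\A)\lambda_1^2/\mu^2)$ in general and $\tilde{O}(\lambda_1\sqrt{\nnz(\A)\,d\cdot\sr(\A)}/\mu)$ in the accelerated regime $\nnz(\A) \le d\cdot\sr(\A)\lambda_1^2/\mu^2$.

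\textbf{Main obstacle.} The delicate part is the bookkeeping in the last paragraph: one must verify that every vector handed to $\RidgeSquare$ has norm polynomially bounded in the problem parameters (so that the relative-error oracle of \cref{dfn:square} delivers the absolute accuracy needed at that step), and that the $K$ inexact refinement steps compose so that the errors decay geometrically rather than amplify. Everything else is a short calculation using the properties already established.
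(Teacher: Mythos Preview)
Your proposal is correct and follows essentially the same approach as the paper: reduce to solving $(\A^\top\A-c\I)^2\xx=(\A^\top\A-c\I)\vv$, precondition with $(\A^\top\A-c\I)^2+\mu^2\I$ via the squared ridge regression solver of \cref{thm:square_solver_main}, and run Richardson iteration with contraction factor $\le\tfrac12$ for $\tilde{O}(1)$ steps. Your error-propagation bookkeeping is somewhat more explicit than the paper's, but the strategy and the resulting runtime are identical (the unaccelerated bound that falls out is $\tilde{O}(\nnz(\A)+d\cdot\sr(\A)\lambda_1^2/\mu^2)$, matching the per-call cost of \cref{thm:main-asySVRG} times $\tilde{O}(1)$ outer iterations).
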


\begin{proof}
It suffices to show that we can solve $(\A^\top\A-c\I)^2\xx=(\A^\top\A-c\I)\vv$ to high accuracy within the desired runtime for any given $\vv$.

By \cref{thm:square_solver_main}, whenever $(\A^\top \A-c\I)^2\succeq\mu^2\I$, we can solve $\bigl((\A^\top \A-c\I)^2+\mu^2\I\bigr)\xx=(\A^\top\A-c\I)\vv$ to $\eps$-approximate accuracy within runtime $\tilde{O}(\nnz(\A)+\sqrt{\nnz(\A)d\cdot\mathrm{sr}(\A)}\lambda_1/\mu).$

Now we can consider preconditioning $(\A^\top \A-c\I)^2$ using $(\A^\top \A-c\I)^2+\mu^2\I$ by noticing that $(\A^\top\A-c\I)^2+\mu^2\I\approx_{1/2}(\A^\top\A-c\I)^2$ under the condition.
	
	As a result, it suffices to apply Richardson update 
	$$x^{(t+1)}\gets x^{(t)}-\eta\left[(\A^\top\A-c\I)^2+\mu^2\I\right]^{-1}\left((\A^\top\A-c\I)^2\xx^{(t)}-(\A^\top\A-c\I)\vv\right),$$
	with $\eta=1/2$. Since it satisfies $\frac{1}{2}\I\preceq\left[(\A^\top\A-c\I)^2+\mu^2\I\right]^{-1}(\A^\top\A-c\I)^2\preceq\I$, it achieves $\eps$ accuracy $\|x^{(T)}-(\A^\top\A-c\I)^{-2}\vv\|\le\eps\|\vv\|$ in $$O\left(\log\left(\frac{\|\xx_0-(\A^\top\A-c\I)^{-1}\vv\|}{\eps\|\vv\|}\right)\right)=\tilde{O}(1)$$ iterations, with each iteration cost $\tilde{O}\bigl(\nnz(\A)+d\cdot\mathrm{sr}(\A)\lambda_1^2/\mu^2\bigr)$ using the unaccelerated subroutine, and $\tilde{O}\bigl(\sqrt{d\cdot\mathrm{sr}(\A)}\lambda_1/\mu\bigr)$ using the accelerated subroutine when $\nnz(\A)\le d\cdot\mathrm{sr}(\A)\lambda_1^2/\mu^2$.  This leads to a total unaccelerated runtime of $\tilde{O}\bigl(\nnz(\A)+d\cdot\mathrm{sr}(\A)\lambda_1^2/\mu\bigr)$, and accelerated runtime $\tilde{O}\bigl(\lambda_1\sqrt{\nnz(\A)d\cdot\mathrm{sr}(\A)}/\mu\bigr)$ when $\nnz(\A)\le d\cdot\mathrm{sr}(\A)\lambda_1^2/\mu^2$.
\end{proof}

In the end of this section, we remark that all proofs and results are stated without the condition $\lambda_1\le1$ to give a clearer sense of the runtime dependence in general setting. When applied to our specific case of solving squared systems, due to renormalization we have $\lambda_1\in[1/2,1]$ and $\lambda\leftarrow \lambda/\lambda_1=1/\kappa$ which lead to running times formally stated in \cref{thm:pcp_main,thm:pcr_main,thm:square_solver_main} and proved in \cref{App:main}.

\section{Nearly Linear Time PCP and PCR Solvers}
\label{App:main}

In this section, we prove our main theorems for new algorithms on PCP and PCR problems stated in \cref{sec:results}. As a byproduct of the results, we can also use some variant of the Zolotarev rational to approximate the square-root function, and thus build efficient square-root-matrix-and-vector solver (see~\cref{thm:square_root_solver_main} and~\cref{alg:SR}). Since the result is proved in a very similar way, we defer a full discussion to \cref{App:main-sup}.

To begin with, we introduce a helper lemma useful for analyzing the approximation property of the theorem and defer its proof to \cref{App:main-sup} for detail.

\begin{lemma}[Accumulative Error from Products]
\label{lem:stable}
	If there are procedures $\mathcal{C}_i(\vv),i\in[k]$ that carries out a product computation $\CC_i\cdot \vv$ in $\eps$ accuracy, i.e. $\forall i\in[k],\|\mathcal{C}_i(\vv)-\CC_i\vv\|\le\eps\|\vv\|$, and that $\|\CC_i\|\le M,\forall i\in[k]$ for some $M\ge1$. When $\eps\le M/2k$, we have
	$$
	\|\mathcal{C}_{k}(\mathcal{C}_{k-1}(\cdots\mathcal{C}_1(\vv)))-\prod\limits_{i=1}^k\CC_i\vv\|\le 2\eps kM^{k-1}\|\vv\|.
	$$
\end{lemma}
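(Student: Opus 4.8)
The plan is to proceed by induction on $k$, tracking the error of the partial composition. Let me write $\CC_{\le j} \defeq \CC_j \CC_{j-1} \cdots \CC_1$ and let $\vv_j \defeq \mathcal{C}_j(\mathcal{C}_{j-1}(\cdots \mathcal{C}_1(\vv)))$ be the computed value after $j$ stages, with $\vv_0 = \vv$. I want to control $e_j \defeq \|\vv_j - \CC_{\le j}\vv\|$, showing $e_j \le 2\eps j M^{j-1}\|\vv\|$ (or some bound of this shape that closes the induction; I may need a slightly cleaner intermediate quantity such as $e_j \le c_j \eps \|\vv\|$ with $c_j$ defined recursively). The base case $j=1$ is immediate from the hypothesis $\|\mathcal{C}_1(\vv) - \CC_1\vv\| \le \eps\|\vv\|$.

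For the inductive step, I would write the one-step error decomposition
\[
\vv_{j+1} - \CC_{\le j+1}\vv = \underbrace{\bigl(\mathcal{C}_{j+1}(\vv_j) - \CC_{j+1}\vv_j\bigr)}_{\text{new rounding error}} + \underbrace{\CC_{j+1}\bigl(\vv_j - \CC_{\le j}\vv\bigr)}_{\text{propagated error}}.
\]
The first term is bounded by $\eps\|\vv_j\|$ using the per-stage accuracy guarantee, and the second by $\|\CC_{j+1}\|\, e_j \le M e_j$. So $e_{j+1} \le \eps\|\vv_j\| + M e_j$. To finish I need a bound on $\|\vv_j\|$: since $\|\CC_{\le j}\vv\| \le M^j\|\vv\|$ and $\|\vv_j\| \le \|\CC_{\le j}\vv\| + e_j$, and since the inductive bound on $e_j$ together with $\eps \le M/(2k)$ will make $e_j$ a lower-order term compared to $M^j\|\vv\|$ (concretely $e_j \le 2\eps j M^{j-1}\|\vv\| \le M^j\|\vv\|$ because $2\eps j \le M$), I get $\|\vv_j\| \le 2M^j\|\vv\|$. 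Plugging in, $e_{j+1} \le 2\eps M^j\|\vv\| + M\cdot 2\eps j M^{j-1}\|\vv\| = 2\eps(j+1)M^j\|\vv\|$, which closes the induction and yields the claim at $j=k$.

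The only mildly delicate point — and the one I'd be most careful about — is making the bookkeeping self-consistent: the bound on $\|\vv_j\|$ that I use in stage $j+1$ relies on the inductive bound on $e_j$, so I should state the induction hypothesis as the conjunction of $e_j \le 2\eps j M^{j-1}\|\vv\|$ and $\|\vv_j\| \le 2M^j\|\vv\|$ (the latter following from the former plus $M \ge 1$ and $\eps \le M/(2k)$), and verify both are preserved. Everything else is a routine triangle-inequality chase; the hypothesis $\eps \le M/(2k)$ is used exactly to guarantee the accumulated error never exceeds the "true" magnitude $M^j\|\vv\|$, keeping the norm-growth factor at most $2$ rather than something that compounds.
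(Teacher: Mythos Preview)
Your proposal is correct and follows essentially the same induction as the paper: the paper also splits $\vv_{j+1}-\CC_{\le j+1}\vv$ into the new per-stage error $\mathcal{C}_{j+1}(\vv_j)-\CC_{j+1}\vv_j$ plus the propagated error $\CC_{j+1}(\vv_j-\CC_{\le j}\vv)$, bounds $\|\vv_j\|$ via the induction hypothesis, and uses $2\eps k\le M$ to close the recursion at $2\eps(j+1)M^{j}\|\vv\|$. The only cosmetic difference is that the paper keeps the bound $\|\vv_j\|\le M^j\|\vv\|+2\eps j M^{j-1}\|\vv\|$ rather than rounding it up to $2M^j\|\vv\|$ as you do, but the arithmetic and the role of the hypothesis $\eps\le M/(2k)$ are identical.
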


\subsection{PCP Solver}
\label{ssec:PCP_main}
Given
a squared ridge regression solver $\RidgeSquare(\A,\lambda,c_{2i-1},\vv,\eps,\delta)$ (see \cref{sec:SVRG}), using the reduction in \cref{sec:PCP} we can get an $\eps$-approximate PCP algorithm $\ISPCP(\A,\vv,\lambda,\gamma,\eps,\delta)$ shown in \cref{alg:ISPCP} and its theoretical guarantee in \cref{thm:pcp_main}.

\begin{algorithm}[H]
\DontPrintSemicolon
  \KwInput{$\A$ data matrix, $\vv$ projecting vector, $\lambda$ threshold, $\gamma$ eigengap, $\eps$ accuracy, $\delta$ probability.}
  \Parameter{degree $k$ (\cref{cor:approx}), coefficients $\{c_{i}\}_{i=1}^{2k},C$ (\cref{eqn:Zolo}), accuracy $\eps_1$ (specified below)}
  \KwOutput{A vector $\widetilde{\vv}$ that solves PCP $\eps$-approximately.}
	$\widetilde{\vv}\gets\vv$\;
  \For{$i\gets 1$ to $k$}
  {
  $\widetilde{\vv}\gets (\A^\top\A-\lambda \I)^2\widetilde{\vv}+c_{2i}\widetilde{\vv}$\;
  $\widetilde{\vv}\gets \RidgeSquare(\A,\lambda,c_{2i-1},\widetilde{\vv},\eps_1,\delta/k)$\;
  }
  $\widetilde{\vv}\gets C(\A^\top\A-\lambda\I)\widetilde{\vv}$, $\widetilde{\vv}\gets\frac{1}{2}(\vv+\widetilde{\vv})$.\;
\caption{$\ISPCP(\A,\vv,\lambda,\gamma,\eps,\delta)$}
\label{alg:ISPCP}
\end{algorithm}

\begin{proof}[Proof of \cref{thm:pcp_main}]
\ 

\paragraph{Choice of parameters:}

We choose the following values for parameters in \cref{alg:ISPCP}:
	\begin{align*}
	k & =\Omega(\log(1/\eps)\log(1/\lambda\gamma))\\
	M & =\beta_3k^4/\beta_2\gamma^2\lambda^2\\
	\eps_1 & = \frac{\eps}{8\beta_3k^3M^{k-1}}.
	\end{align*}
    The other coefficients $\{c_{i}\}_{i=1}^{2k},C$ are as defined in \cref{eqn:Zolo}. Further we use constants $\beta_2,\beta_3$ as stated in \cref{lem:bound}.
	
\paragraph{Approximation:}
	
	Given $\lambda>0,\gamma\in(0,1)$, from \cref{cor:approx} and the definition of $k$ and $r_k^{\lambda\gamma}(x)$ we get that $\max_{|x|\in[\lambda\gamma,1]}|\sgn(x)-r_k^{\lambda\gamma}(x)|\le2\eps$.

	Using \cref{red:sign}, we know for such $r_k^{\lambda\gamma}$, $\widetilde{\vv}=\frac{1}{2}(r_k^{\lambda\gamma}(\A^\top\A-\lambda\I)+\I)\vv$ satisfies the conditions in~(\ref{cond:PCP}), i.e.
\begin{align}
 & 1. \|\PP_{(1+\gamma)\lambda}(\widetilde{\vv}-\vv)\|\le\eps/2\|\vv\|;\nonumber\\
 & 2.	\|(\I-\PP_{(1-\gamma)\lambda})\widetilde{\vv}\|\le\eps/2\|\vv\|;\nonumber\\
& 3. 
\|(\PP_{(1+\gamma)} - \PP_{(1-\gamma)\lambda}) (\widetilde{\vv} - \vv)\|\le \|(\PP_{(1+\gamma)} - \PP_{(1-\gamma)\lambda}) \vv\|\nonumber.
 \end{align}

Now if we have an approximate solution $\widetilde{\vv}'$ satisfying $\|\widetilde{\vv}'-\widetilde{\vv}\|\le\eps/2\|\vv\|$, we check the three conditions respectively. For the first condition we have
\begin{align*}
	\|\PP_{(1+\gamma)\lambda}(\widetilde{\vv}'-\vv)\|
	& \le\|\PP_ {(1+\gamma)\lambda}(\widetilde{\vv}'-\widetilde{\vv})\|+\|\PP_{(1+\gamma)\lambda}(\widetilde{\vv}-\vv)\|\\
	& \le\|\widetilde{\vv}'-\widetilde{\vv}\|+\eps/2\|\vv\|\\
	& \le \eps/2\|\vv\|+\eps/2\|\vv\|\\
	& \le \eps\|\vv\|,
\end{align*}
while the third inequality uses the fact that $\PP_{(1+\gamma)\lambda}$ is a projection matrix.

For the second condition, we have
\begin{align*}
	\|(\I-\PP_{(1-\gamma)\lambda})\widetilde{\vv}'\| & \le \|(\I-\PP_{(1-\gamma)\lambda})(\widetilde{\vv}'-\widetilde{\vv})\|+\|(\I-\PP_{(1-\gamma)\lambda})\widetilde{\vv}\|\\
	& \le \|\widetilde{\vv}'-\widetilde{\vv}\|+\eps/2\|\vv\|\\
	& \le \eps/2\|\vv\|+\eps/2\|\vv\|\\
	& \le \eps\|\vv\|,
\end{align*}
where for the second inequality we use the fact that $\I-\PP_{(1-\gamma)\lambda}$ is also a projection matrix when $\PP_{(1-\gamma)\lambda}$ is a projection matrix.

For the last condition, we have 
\begin{align*}
 \|(\PP_{(1+\gamma)} - \PP_{(1-\gamma)\lambda}) (\widetilde{\vv}' - \vv)
\|
& \le
\|(\PP_{(1+\gamma)} - \PP_{(1-\gamma)\lambda}) (\widetilde{\vv}' - \widetilde{\vv})\|
+
\|(\PP_{(1+\gamma)} - \PP_{(1-\gamma)\lambda}) (\widetilde{\vv} - \vv)\|\\
	& \le \|\widetilde{\vv}'-\widetilde{\vv}\|+\|(\PP_{(1+\gamma)} - \PP_{(1-\gamma)\lambda}) \vv\|\\
	& \le \|(\PP_{(1+\gamma)} - \PP_{(1-\gamma)\lambda}) \vv\|+\eps\|\vv\|,
\end{align*}
where for the second inequality we use the fact that $\PP_{(1+\gamma)} - \PP_{(1-\gamma)\lambda}$ is a projection matrix.

Consequently, it suffices to have such $\widetilde{\vv}'$ that $\|\widetilde{\vv}'-\widetilde{\vv}\|\le\eps/2\|\vv\|$ when $\widetilde{\vv}=\frac{1}{2}(r_k^{\lambda\gamma}(\A^\top\A-\lambda\I)+\I)\vv$, note that
	$$
r^{\lambda\gamma}_k((\A^\top\A-\lambda\I))\vv= C (\A^\top\A-\lambda\I)\prod_{i=1}^k\frac{(\A^\top\A-\lambda\I)^2+c_{2i}\I}{(\A^\top\A-\lambda\I)^2+c_{2i-1}\I}\vv.
	$$
	
Suppose we have a procedure $\mathcal{C}_i(\vv),i\in[k]$ that can apply $\CC_i\vv$ for arbitrary $\vv$ to $\eps'$-multiplicative accuracy with probability $\ge1-\delta/k$, where here $$\CC_i=\frac{(\A^\top\A-\lambda\I)^2+c_{2i}\I}{(\A^\top\A-\lambda\I)^2+c_{2i-1}\I}.$$ Also we assume matrix vector product is accurate without loss of generality.\footnote{If in the finite-precision world, we assume arithmetic operations are carried out with $\Omega(\log(n/\eps))$ bits of precision, the result is still true by standard argument with a slightly different constant factor for the bounding coefficient.} Note that 
\begin{align*}
	\left\|\frac{(\A^\top\A-\lambda\I)^2+c_{2i}\I}{(\A^\top\A-\lambda\I)^2+c_{2i-1}\I}\right\| & \le M,\forall i\in[k],
\end{align*}
with $M=\beta_3k^4/\beta_2\gamma^2\lambda^2$. Here we use constants $\beta_2,\beta_3$ as stated in \cref{lem:bound}. Now we can use \cref{lem:stable} with the corresponding $M$ to show that 

Using a union bound, with probability $\ge1-\delta$ it holds that
$$
	\|\mathcal{C}_{k}(\mathcal{C}_{k-1}(\cdots\mathcal{C}_1(\vv)))-\prod\limits_{i=1}^{k}\CC_i\vv\|\le 2\eps' kM^{k-1}\|\vv\|,
$$
whenever $\eps'\le M/(2k)$.

Now we choose $$\tilde{\eps}_1=\min(\frac{M}{2k},\frac{\eps}{8kM^{k-1}}),\ \eps_1=\min(\frac{M}{2k},\frac{\eps}{8kM^{k-1}})/(\beta_3k^2)=\frac{\eps}{8\beta_3k^3M^{k-1}},$$ consider the following procedure as in \cref{alg:ISPCP},
\begin{align*}
\vv & \gets \RidgeSquare\left(\A,\lambda,c_{2i-1},(\A^\top\A-\lambda \I)^2\vv+c_{2i}\vv,\eps_1,\delta/k\right); \forall i\in[k].\\
\vv & \gets C(\A^\top\A-\lambda\I)\vv.
\end{align*}

The above choice of $\eps_1$ guarantees $\RidgeSquare\left(\A,\lambda,c_{2i-1},(\A^\top\A-\lambda \I)^2\vv+c_{2i}\vv,\eps_1,\delta/k\right)$ for all $i\in[k]$ can be abstracted as $\mathcal{C}_i(\vv)$ with $\tilde{\eps}_1$-accuracy and corresponding success probability. Using a union bound or successful events and also the fact that $\|C(\A^\top\A-\lambda\I)\|\le2$, we can argue that with probability $\ge1-\delta$, the output $\widetilde{\vv}'$ of $\ISPCP(\A,\vv,\lambda,\gamma,\eps,\delta)$ satisfy
$$
	\|\widetilde{\vv}'-\widetilde{\vv}\|\le 4\tilde{\eps}_1 kM^{k-1}\|\vv\|\le\eps/2\|\vv\|.
$$
As a result, the output of the algorithm satisfies conditions~(\ref{cond:PCP}) as desired.

\paragraph{Runtime:}
	
	The numerical constants $C, \{c_{i}\}_{i=1}^{2k}$ are precomputed. So the runtime will then be a total runtime of computing matrix vector products for $2k+1$ times, calling $k=O(\log(1/\eps)\log(1/\lambda\gamma))$ times $\RidgeSquare\left(\A,\lambda,c_{2i-1},(\A^\top\A-\lambda \I)^2\vv+c_{2i}\vv,\eps_1,\delta/k\right)$ for $i\in[k]$. We bound the two terms respectively.
		
	Computing matrix vector products takes time $O(k\nnz(\A))=\tilde{O}(\nnz(\A))$ since $k=\tilde{O}(1)$. %
	
	Using \cref{thm:square_solver_main}, since $\log(1/\eps_1)=O(\log(1/\eps)+\log k+k\log(M))=O(\log(1/\eps)+k)=\tilde{O}(1)$, the total runtime for solving squared systems is $\tilde{O}(k(\nnz(\A)+d\cdot\mathrm{sr}(\A)/(\gamma^2\lambda^2)))$. Further, it can be accelerated to $\tilde{O}(k\sqrt{\nnz(\A)d\cdot\mathrm{sr}(\A)}/(\lambda\gamma))$ when $\nnz(\A)\le d\cdot\mathrm{sr}(\A)/(\gamma^2\lambda^2)$.
	
	Combining these bounds  it gives a running time of \cref{alg:ISPCP} of
	$$
	\tilde{O}(\nnz(\A)+d\cdot\mathrm{sr}(\A)\frac{1}{\gamma^2\lambda^2}).
	$$
	When $\nnz(\A)\le d\cdot\mathrm{sr}(\A)/(\gamma^2\lambda^2)$, it can be accelerated to $\tilde{O}(\sqrt{\nnz(\A)d\cdot\mathrm{sr}(\A)}/(\lambda\gamma))$.
	\end{proof}
	
	Since we assume $\lambda_1\in[1/2,1]$ here, $\kappa=1/\lambda$. We can write it as $$\tilde{O}\left(\nnz(\A)+\sqrt{\nnz(\A)\cdot d\cdot\mathrm{sr}(\A)}\kappa/\gamma\right)$$ by noticing the preprocessing for $\A$ is just setting $\lambda\leftarrow \Theta(\lambda/\lambda_1)$.

\subsection{PCR Solver}
\label{ssec:PCR}

Previous work as shown that solving PCR can be reduced to solving PCP together with ridge regression solver. This reduction was first proposed in~\citet{FMMS16} and used in subsequent work~\cite{AL17}. The idea is to first compute $\vv^*=\PP_\lambda(\A^\top \bb)$ using PCP and then apply $(\A^\top\A)^\dagger\vv^*$ stably by some polynomial approximation. More specifically, it is achieved through the following procedure.
\begin{equation}
\label{red:PCPtoPCR}
\begin{aligned}
	\sss_0 & \leftarrow\mathcal{A}_\PCP(\A^\top\bb)\\
	\sss_1 & \leftarrow\RidgeReg(\A,\lambda,\sss_0),\ \forall m=1,2,\cdots,k-1;\\
	\sss_{m+1} & \leftarrow\sss_1+\lambda\cdot \RidgeReg(\A,\lambda,\sss_m). 
\end{aligned}
\end{equation}

Here $\RidgeReg$ is a Ridge Regression Solver defined in \cref{def:rigreg} and $\ISPCP$ is the $\eps$-approximate PCP algorithm  specified in \cref{alg:ISPCP}. Such a reduction enjoys the following guarantee:

\begin{lemma}[Reduction: from PCR to PCP]\label{red:PCP}
	For fixed $\lambda$, $\eps\in(0,1)$ and $\gamma\in(0,2/3]$, and $\A$ with singular values no more than 1, given an $O(\eps/k^2)$-approximate ridge regression solver $\RidgeReg(\A,\lambda,\cdot)$ and an $O(\gamma\epsilon/k^2)$-approximate PCP solver $\mathcal{A}_\PCP(\cdot)$. Running the procedure for $k=\Theta(\log(1/\eps\gamma))$ iterations and  outputting the final $\sss_{k}$ gives an $\eps$-approximate PCR algorithm.
\end{lemma}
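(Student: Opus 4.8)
\emph{Plan.} I would prove this by first analyzing the \emph{exact} version of the iteration~\eqref{red:PCPtoPCR} — with $\mathcal{A}_\PCP$ replaced by the true projection $\vv\mapsto\PP_\lambda\vv$ and $\RidgeReg(\A,\lambda,\cdot)$ by the true operator $\vv\mapsto(\A^\top\A+\lambda\I)^{-1}\vv$ — and then controlling the solver errors on top of it. Write $\M\defeq\A^\top\A$, $R\defeq(\M+\lambda\I)^{-1}$, and $\sss_0^*\defeq\PP_\lambda\A^\top\bb$, so the idealized recursion is $\sss_1=R\sss_0^*$, $\sss_{m+1}=R\sss_0^*+\lambda R\sss_m$. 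A one-line induction unrolls this to $\sss_m=R\sum_{j=0}^{m-1}(\lambda R)^j\sss_0^*=g_m(\M)\sss_0^*$, where $g_m$ is the scalar function $g_m(t)=\sum_{j=0}^{m-1}\lambda^j(t+\lambda)^{-(j+1)}=\tfrac1t\bigl(1-(\tfrac{\lambda}{t+\lambda})^m\bigr)$. The apparent pole of $g_m$ at $t=0$ cancels, so $g_m(\M)$ is a well-defined symmetric matrix function that commutes with every spectral projector of $\M$; moreover $g_m(t)\to 1/t$ for every $t>0$, and since it is routine (via the SVD) that $\xx_\lambda^*=(\A\PP_\lambda)^{+}\bb=\M^{+}\PP_\lambda\A^\top\bb=\M^{+}\sss_0^*$, the idealized iterates converge to $\xx_\lambda^*$, with remainder $\M^{+}\sss_0^*-\sss_m$ acting on eigenvalue $t>0$ by the factor $\tfrac1t(\tfrac{\lambda}{t+\lambda})^m$.

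Next I would record the scalar facts about $g_m$ on $\mathrm{spec}(\M)\subseteq[0,1]$ (using $\|\A\|\le 1$, $\lambda<1$, $\gamma\le 2/3$): $g_m$ is nonnegative and decreasing; $0\le t\,g_m(t)\le 1$ everywhere, so $\A g_m(\M)\A^\top$ is a contraction that never overshoots the least-squares fit; $0\le \tfrac1t-g_m(t)\le \tfrac1\lambda(2+\gamma)^{-m}$ for $t\ge\lambda(1+\gamma)$; and $g_m(t)\le m/\lambda$ for all $t\ge 0$. Taking $k=\Theta(\log(1/\eps\gamma))$ makes the remainder in the third bound at most $\eps\gamma$ (the $\log(1/\lambda)$ this costs is absorbed into the constants, consistent with the hypothesis "for fixed $\lambda$"); the geometric rate on the spectral range we must invert is $\le (2+\gamma)^{-1}<\tfrac12$, which is why $\Theta(\log(1/\eps\gamma))$ terms already suffice.

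The perturbation step carries the work. Let $\hat\sss_m$ denote the actual iterates: $\hat\sss_0=\mathcal{A}_\PCP(\A^\top\bb)$ satisfying the three guarantees of~\cref{dfn:PCP} with parameter $O(\gamma\eps/k^2)$, and each $\RidgeReg$ call returning $R\hat\sss$ up to relative error $O(\eps/k^2)$. Telescoping expresses $\hat\sss_k-\sss_k$ as $g_k(\M)(\hat\sss_0-\sss_0^*)$ plus a sum of at most $k$ per-step ridge errors, each propagated through at most $k$ further applications of $\lambda R$, whose operator norm is $\le 1$ since $\lambda R\preceq\I$; as $\|\hat\sss_j\|\le\|g_j(\M)\|\,\|\hat\sss_0\|=O(k/\lambda)\|\bb\|$, these contribute $O(k^2/\lambda)\cdot O(\eps/k^2)\|\bb\|=O(\eps)\|\bb\|$. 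To finish I would check the two requirements of~\cref{dfn:PCR}. For the first, $g_k(\M)$ commutes with $\PP_{(1-\gamma)\lambda}$ and $(\I-\PP_{(1-\gamma)\lambda})\sss_0^*=\0$ exactly, so $\|(\I-\PP_{(1-\gamma)\lambda})\hat\sss_k\|\le \|g_k\|_{[0,\lambda(1-\gamma)]}\,\|(\I-\PP_{(1-\gamma)\lambda})\hat\sss_0\|+O(\eps)\|\bb\|$, and the second PCP guarantee on $\hat\sss_0$ beats the $O(k/\lambda)$ amplification. For the residual, I would split $\R^n$ into the mutually orthogonal blocks spanned by the left singular vectors with $\lambda_i\ge\lambda(1+\gamma)$, with $\lambda_i\in(\lambda(1-\gamma),\lambda(1+\gamma))$, and with $0<\lambda_i\le\lambda(1-\gamma)$, together with $\ker\A^\top$: then $\|\A\xx_{(1+\gamma)\lambda}^*-\bb\|^2$ is exactly the mass of $\bb$ outside the first block; the first-block residual of $\A\hat\sss_k$ is bounded by $(2+\gamma)^{-k}$ (third scalar fact) plus the first PCP guarantee times $\sigma_i^2g_k(\lambda_i)\le 1$; the gap block of $\A\hat\sss_k$ lies "between $\0$ and $\bb$'s gap block" by $t\,g_k(t)\le 1$, so it adds at most the gap mass of $\bb$ up to a term controlled by the third PCP guarantee; the below-threshold block is handled by the second PCP guarantee; and reassembling the blocks by Pythagoras yields $\|\A\hat\sss_k-\bb\|\le\|\A\xx_{(1+\gamma)\lambda}^*-\bb\|+\eps\|\bb\|$.

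The main obstacle is this last, block-by-block accounting of the residual — especially the gap block, where $g_k(\M)$ can magnify components of $\hat\sss_0$ by a factor as large as $1/\sqrt{\lambda\gamma}$, and only the non-overshoot bound $t\,g_k(t)\le 1$ together with the sharply chosen PCP accuracy $O(\gamma\eps/k^2)$ keeps the gap contribution from exceeding the optimal residual by more than $\eps\|\bb\|$. Keeping every estimate relative to $\|\bb\|$ (via $\|\A^\top\bb\|\le\|\bb\|$) and using the orthogonality of the blocks, rather than chaining triangle inequalities, is what prevents the constants from compounding.
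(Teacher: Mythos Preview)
The paper does not actually prove this lemma. It is stated as a known reduction and attributed to prior work: the text preceding \cref{red:PCP} says explicitly that ``this reduction was first proposed in~\citet{FMMS16} and used in subsequent work~\cite{AL17},'' and the proof of \cref{thm:pcr_main} simply writes ``Approximation: It follows directly from \cref{red:PCP}.'' So there is no in-paper proof to compare your proposal against.

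That said, your sketch is the standard argument (essentially what appears in~\citet{FMMS16,AL17}): unroll the recursion to the truncated Neumann series $g_m(t)=\tfrac{1}{t}\bigl(1-(\tfrac{\lambda}{t+\lambda})^m\bigr)$, use the geometric rate $(2+\gamma)^{-1}$ on the top eigenspace, the non-overshoot bound $t\,g_m(t)\le 1$ on the gap, and telescope the solver errors through the contraction $\lambda R$. Your identification of the gap block as the delicate part, and the need for $O(\gamma\eps/k^2)$ PCP accuracy to beat the $O(k/\lambda)$ blow-up from $\|g_k\|_\infty$, is exactly the right concern. The one point you flag yourself --- that making $\tfrac{1}{\lambda}(2+\gamma)^{-k}\le \eps\gamma$ seems to cost an extra $\log(1/\lambda)$ --- is real; the paper's ``for fixed $\lambda$'' is doing work there (and in the cited references the statement is parameterized slightly differently). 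Otherwise the plan is sound and matches the literature.
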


For completeness, we give the following \cref{alg:ISPCR} for $\eps$-approximate PCP solver and give the proof for its theoretical guarantee as stated in~\cref{thm:pcr_main}.

\begin{algorithm}[H]
\DontPrintSemicolon
  \KwInput{$\A\in\R^{n\times d}$ properly rescaled, $\bb\in\R^{d}$ regressand, $\lambda>0$ eigenvalue threshold, $\gamma\in(0,2/3]$ unitless eigengap, $\eps$ desired accuracy, $\delta$ probability parameter.}
  \KwOutput{A vector $\xx\in\R^{n}$ that solves PCR $\eps$-approximately.}
  Set $k\gets\Theta(\log(1/\eps\gamma))$, $\eps_1\gets O(\frac{\gamma\eps}{k^2})$,$\eps_2\gets O(\frac{\eps}{k^2})$\;
  $\xx \gets \ISPCP(\A,\A^\top\bb,\lambda,\gamma,\eps_1,\delta/4)$\;
  $\xx_0\gets \RidgeReg(\A,\lambda,\xx,\eps_2,\delta/4)$\;
  \For{$i\gets 1$ to $k-1$}
  {
  Compute $\xx=\lambda\cdot \RidgeReg(\A,\lambda,\xx,\eps_2,\delta/2(k-1))+\xx_0$\;
  }
  \caption{$\ISPCR(\A,\bb,\lambda,\gamma,\eps,\delta)$}
  \label{alg:ISPCR}
\end{algorithm}	
	
\begin{proof}[Proof of \cref{thm:pcr_main}]\

\paragraph{Approximation:} 
	It follows directly from \cref{red:PCP}.
\paragraph{Runtime:} 
	
	The total running time consists of one call of $\ISPCP$ and $k=\Theta(\log(1/\eps\gamma))$ calls of $\RidgeReg$, with particular parameters specified in \cref{alg:ISPCR} this leads to a runtime 
	$$\tilde{O}(\nnz(\A)+d\cdot\mathrm{sr}(\A)\frac{1}{\gamma^2\lambda^2})+k\cdot\tilde{O}(\nnz(\A)+d\cdot\mathrm{sr}(\A)\frac{1}{\lambda})=\tilde{O}(\nnz(\A)+d\cdot\mathrm{sr}(\A)\frac{1}{\gamma^2\lambda^2})),
	$$
	and an accelerated runtime 
	$$
	\tilde{O}(\frac{1}{\gamma\lambda}\sqrt{\nnz(\A)d\cdot\mathrm{sr}(\A)})
	$$
	when $\nnz(\A)\le d\cdot\mathrm{sr}(\A)/(\gamma\lambda)^2$, which proves the result by noticing $\kappa=1/\lambda$ for rescaled $\A$.
	
\end{proof}

\section{Numerical Experiments}
\label{sec:experiment}

We evaluate our proposed algorithms following the settings in~\citet{FMMS16,AL17}. As the runtimes in \cref{thm:pcp_main,thm:pcr_main} show improvement compared with ones in previous work~\cite{FMMS16,AL17} when $\nnz(\A)/\gamma\gg d^2\kappa^2/\gamma^2$, we  pick data matrix $\A$ such that $\kappa=\Theta(1)$ and $n\gg\frac{d}{\gamma}$ to corroborate the theoretical results.

Since experiments in several papers~\cite{FMMS16,AL17} have studied
the reduction from PCR to PCP (see \cref{red:PCP}), here we only show results regarding solving PCP problems. In all figures below, the $y$-axis denote the relative error measured in $\|\mathcal{A}_{\PCP}(\vv)-\PP_\lambda\vv\|/\|\PP_\lambda\vv\|$ and $x$-axis denote the total number of vector-vector products to achieve corresponding accuracy.\\

\paragraph{Datasets.} Similar to that in previous work~\cite{FMMS16,AL17}, we set $\lambda=0.5, n=2000,d=50$ and form a matrix $\A=\UU\LLambda^{1/2}\V\in\R^{2000\times 50}$. Here, $\UU$ and $\V$ are random orthonormal matrices, and $\SSigma$ contains randomly chosen singular values $\sigma_i=\sqrt{\lambda_i}$. Referring to $[0,\lambda(1-\gamma)]\cup[\lambda(1+\gamma),1]$ as the \emph{away-from-$\lambda$ region}, and $\lambda(1-\gamma)\cdot[0.9,1]\cup\lambda(1+\gamma)\cdot[1,1.1]$ as the \emph{close-to-$\lambda$ region}, we generate $\lambda_i$ differently to simulate the following three different cases:

\begin{enumerate}[label=\roman*]
  \item Eigengap-Uniform Case: generate all $\lambda_i$ uniformly in the away-from-$\lambda$ region.
  \item Eigengap-Skewed Case: generate half the $\lambda_i$ uniformly in the away-from-$\lambda$ and half uniformly in the close-to-$\lambda$ regions.
  \item No-Eigengap-Skewed Case: uniformly generate half in $[0,1]$, and half in the close-to-$\lambda$ region.
\end{enumerate}

\paragraph{Algorithms.} We implemented the following algorithms and compared them in the above settings:%
\begin{enumerate}
\item \emph{polynomial}: the $\PCProj$ algorithm in~\citet{FMMS16}.
\item \emph{chebyshev}: the $\QuickPCP$ algorithm in~\citet{AL17}.
\item \emph{lanczos}: the algorithm using Lanczos method discussed in Section 8.1 of~\citet{musco2018stability}.
\item \emph{rational}: the $\ISPCP$ algorithm (see~\cref{alg:ISPCP}) proposed in our paper.
\item \emph{rlanczos}: the algorithm using rational Lanczos method~\cite{gallivan1996rational} combined with $\ISPCP$. (See~\cref{App:rlanczos} for a more detailed discussion.)
\item \emph{slanczos}: the algorithm using Lanczos method~\cite{musco2018stability} with changed search space from form $f(\frac{x-\lambda}{x+\lambda})$ into $f(\frac{(x-\lambda)(x+\lambda)}{(x-\lambda)^2+\gamma(x+\lambda)^2})$ for approximation ($f$ polynomial, $x\gets\A^\top\A$). (See~\cref{App:slanczos} for a more detailed discussion.) %
\end{enumerate}

We remark that 1-3 are algorithms in previous work; 4 is an exact implementation of $\ISPCP$ proposed in the paper; 5,\ 6 are variants of $\ISPCP$ combined with Lanczos method, both using the squared system solver. Algorithms 5,\ 6 are explained in greater detail in \cref{App:exp}.

\begin{figure}[htb]%
\centering
\subfigure[Eigengap-Uniform Case]{%
\includegraphics[width=0.33\textwidth]%
       {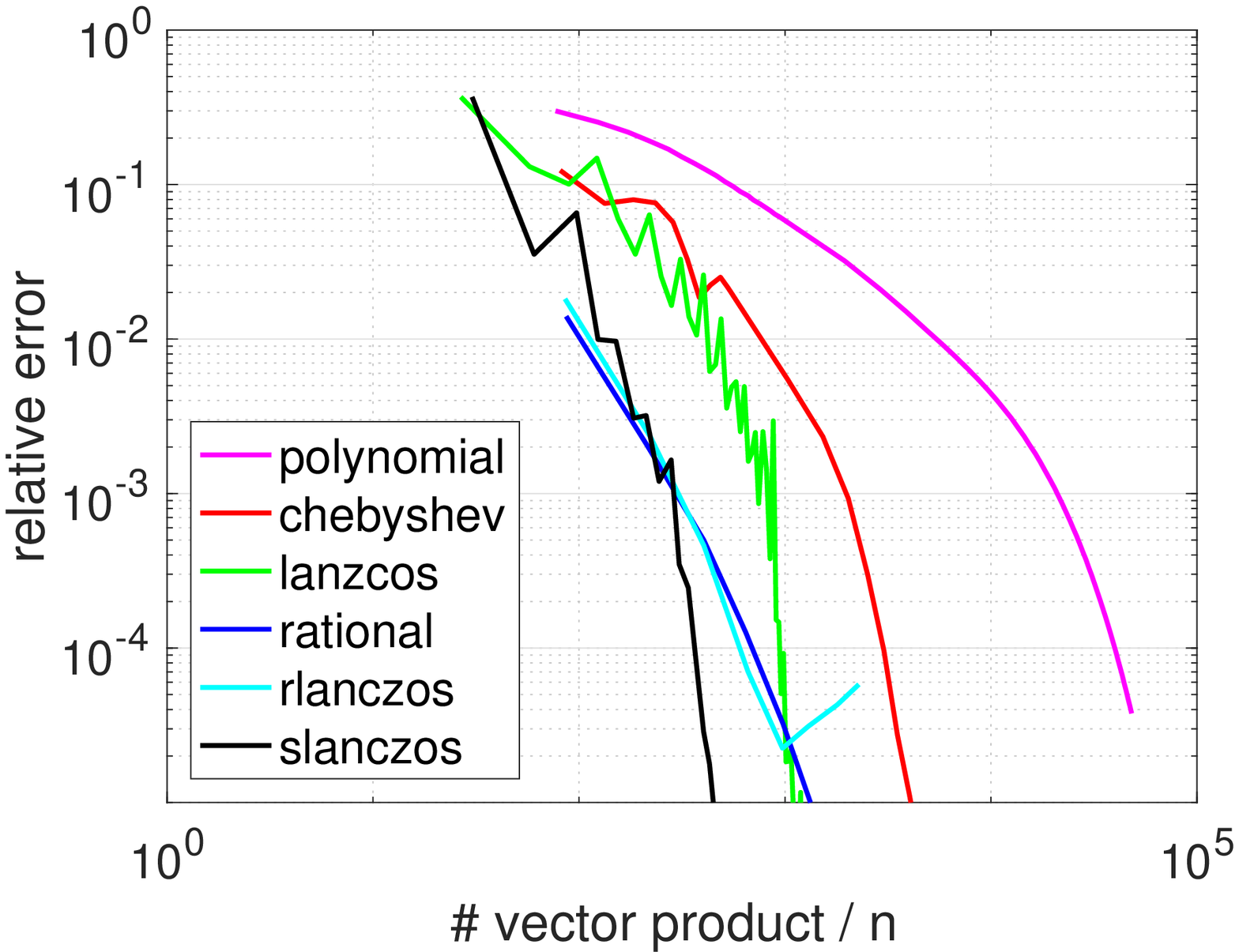}
}%
\subfigure[Eigengap-Skewed Case]{%
\includegraphics[width=0.33\textwidth]%
       {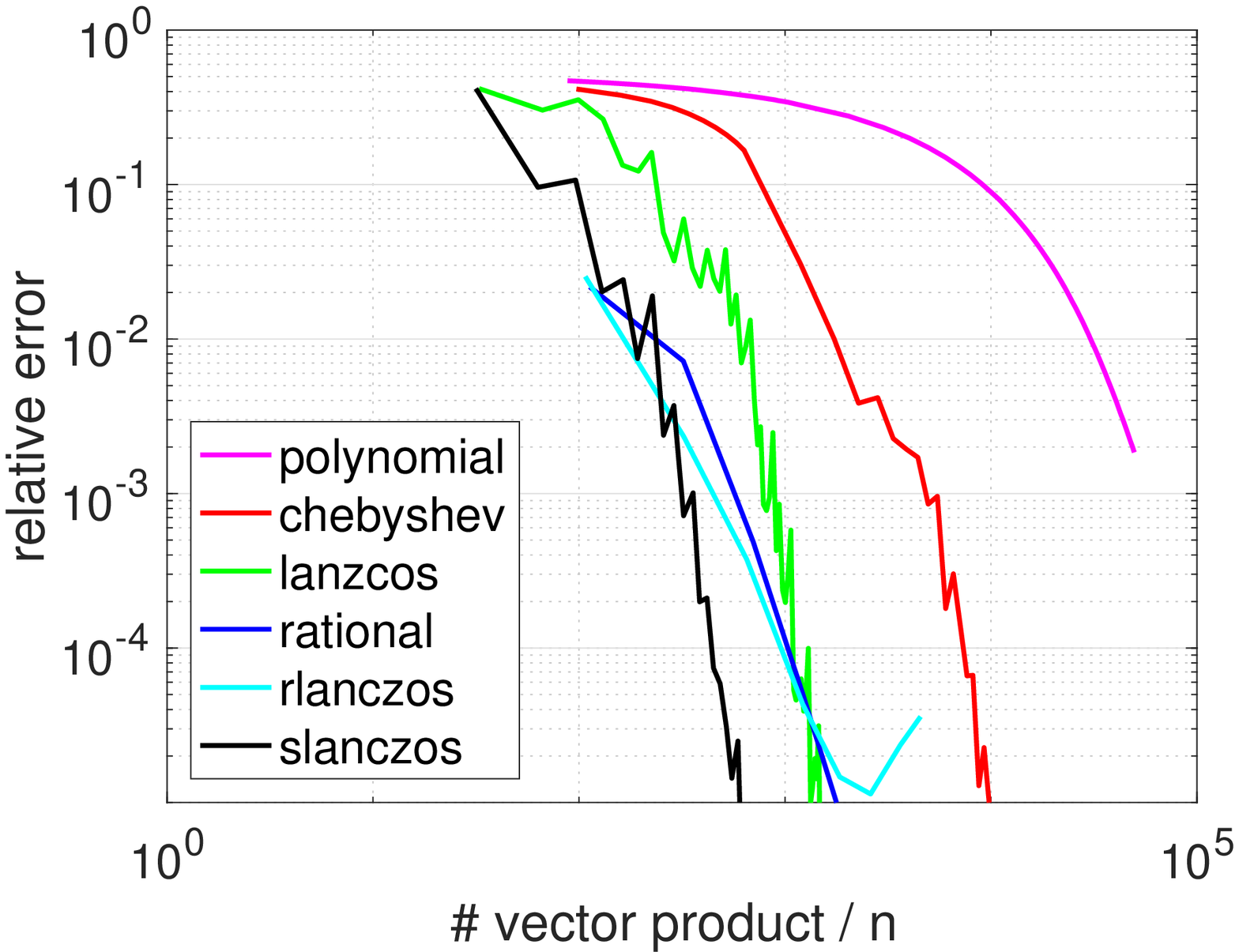}
}%
\subfigure[No-Eigengap-Skewed Case]{%
\includegraphics[width=0.33\textwidth]%
       {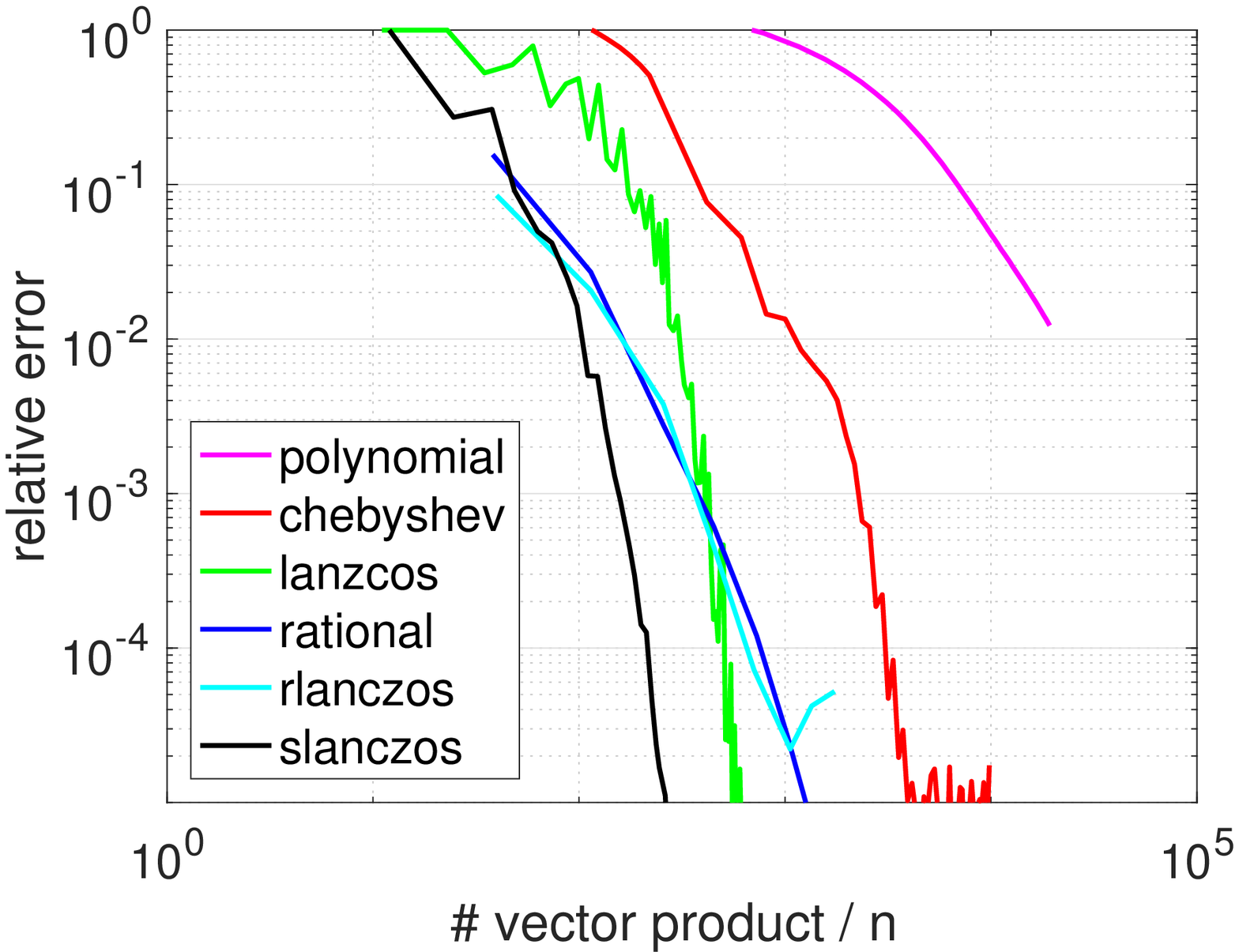}
}\\%
\caption{Synthetic Data: $n=2000$, $d=50$, $\lambda=0.5$, $\gamma=0.05$. }
\label{fig:gap}
\vspace{-0.5cm}
\end{figure}

\begin{figure}[htb]%
\centering
\subfigure[n=5000]{%
\includegraphics[width=0.33\textwidth]%
	{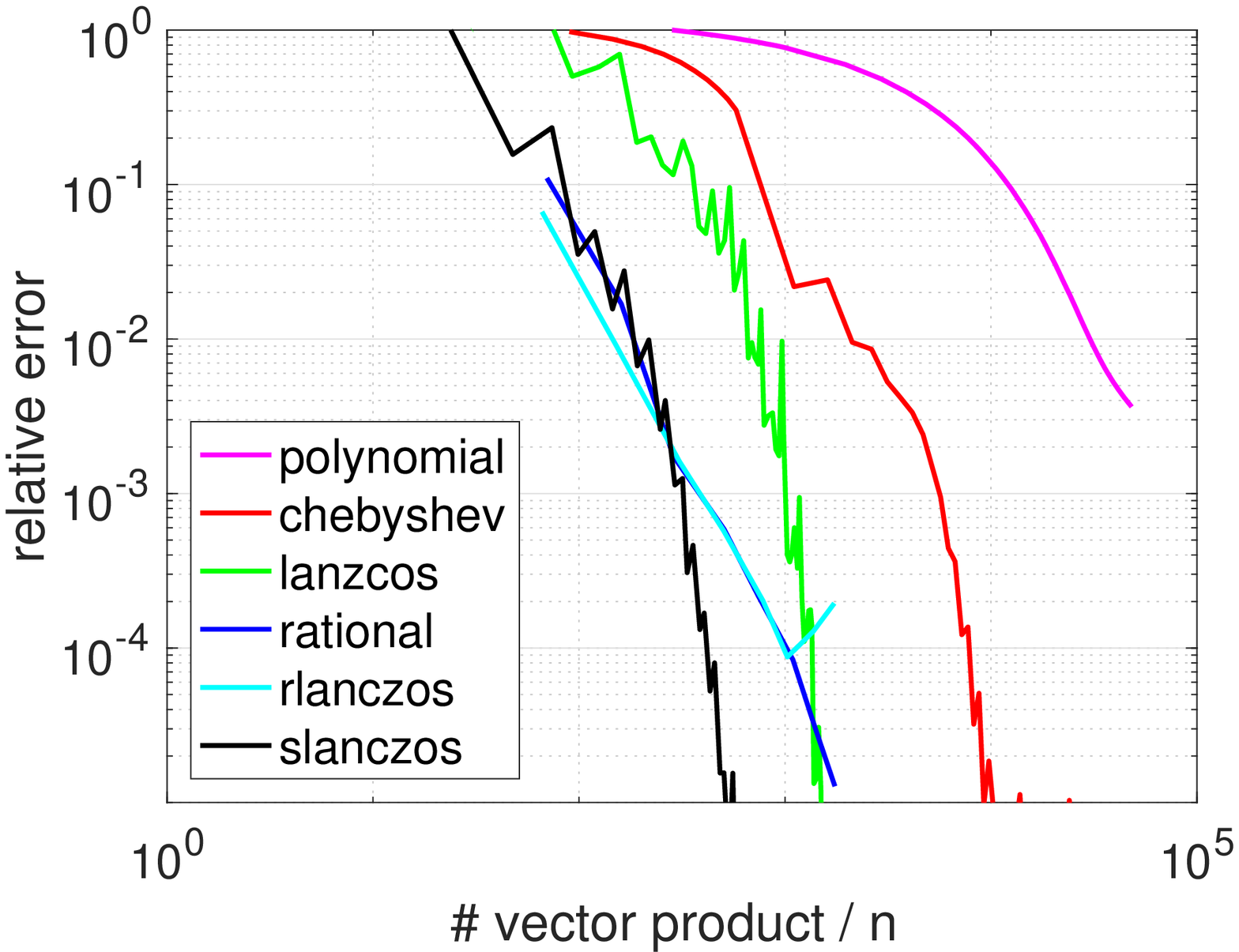}}
\subfigure[n=10000]{%
\includegraphics[width=0.33\textwidth]%
	{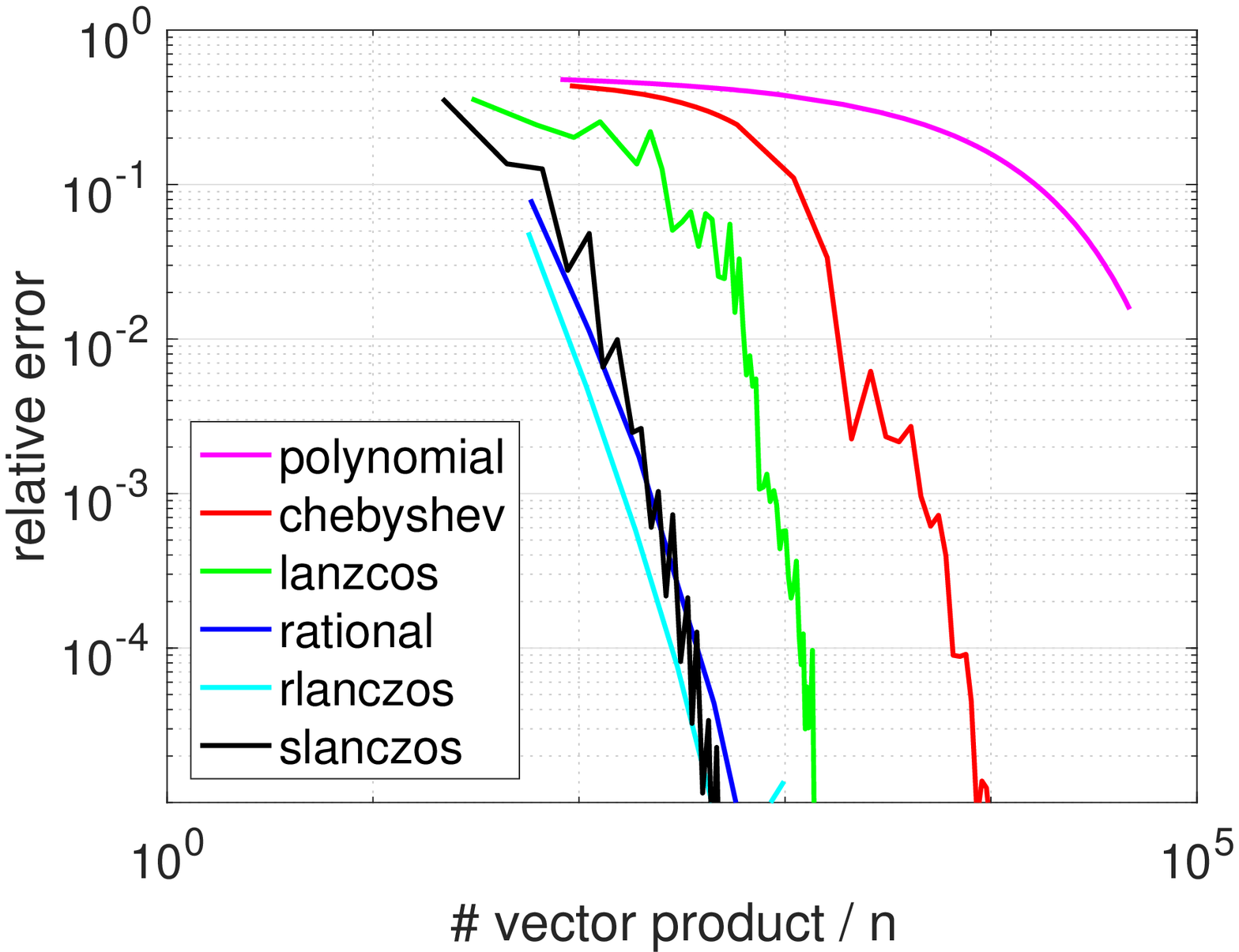}}%
\subfigure[n=20000]{%
\includegraphics[width=0.33\textwidth]%
	{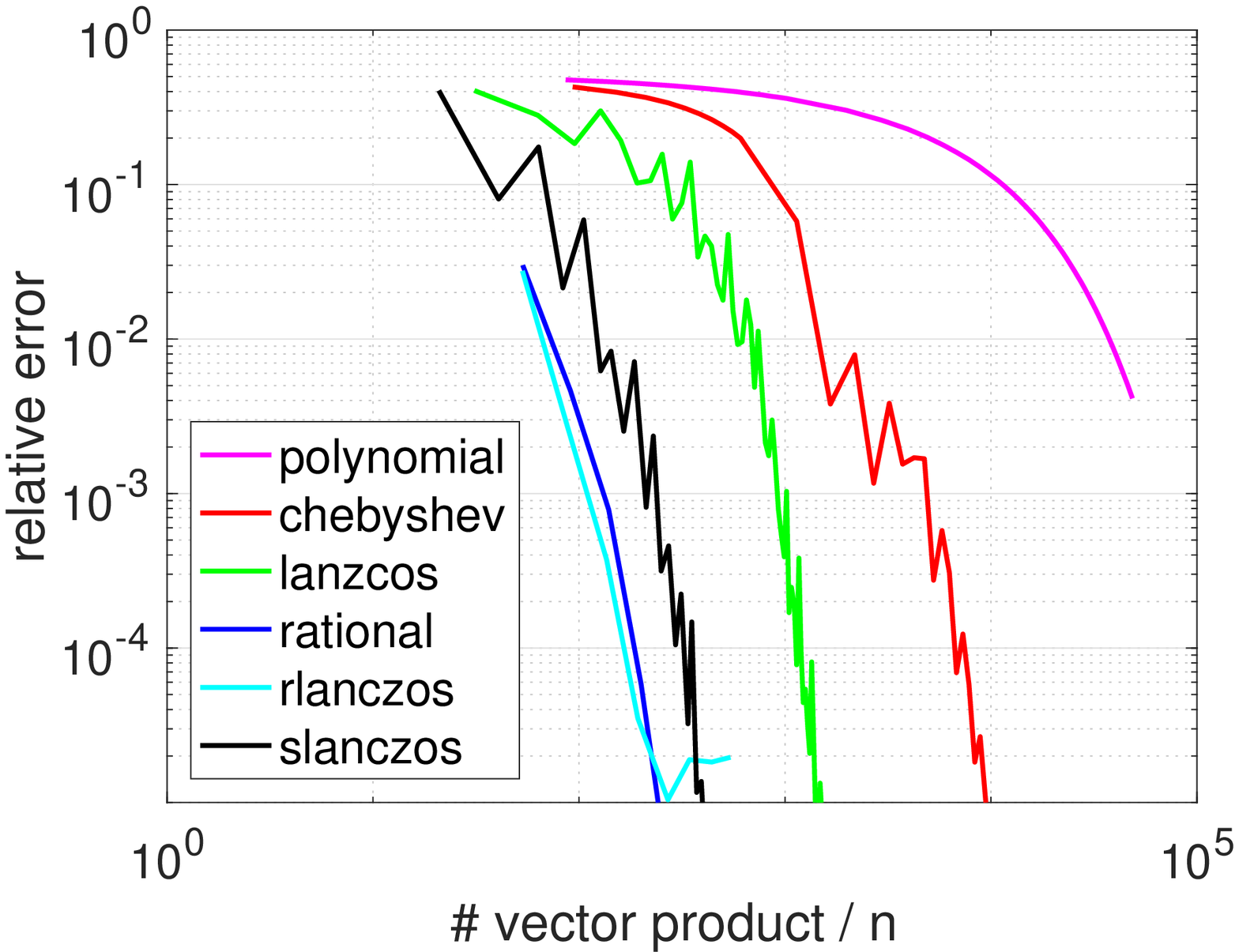}}\\%
\caption{Synthetic Data: Changing $n$, $d=50$, $\lambda=0.5$, $\gamma=0.05$. No-Eigengap-Skewed Case.}
\label{fig:n}
\end{figure}

There are several observations from the experiments:\\
\begin{itemize}
\item For different eigenvalue distributions, (4-6) in general outperform all existing methods (1-3) in most accuracy regime in terms of number of vector products as shown in \cref{fig:gap}.\\
\item In no-eigengap case, all methods get affected in precision. This is due to the projection error of eigenvalues very close to eigengap, which simple don't exist in Eigengap cases. Nevertheless, (6) is still the most accurate one with least computation cost, as shown in \cref{fig:gap}.\\
\item When $n$ gets larger, (4,5) tends to enjoy similar performance, outperforming all other methods including (6), as shown in \cref{fig:n}. This aligns with theory that runtime of (4,5) is dominated by $\nnz(\A)$ while runtime of (6) is dominated by $\nnz(\A)/\sqrt{\gamma}$ (see \cref{thm:slanczos} for theoretical analysis of \emph{slanczos}), demonstrating the power of nearly-linear runtime of $\ISPCP$ proposed.
\end{itemize}

\section{Conclusion}
\label{sec:conclusion}

In this paper we provided a new linear algebraic primitive, asymmetric SVRG for solving squared ridge regression problems, and showed that it lead to nearly-linear-time algorithms for PCP and PCR. Beyond the direct running time improvements, this work shows that running time improvements can be achieved for fundamental linear-algebraic problems by leveraging stronger subroutines than standard ridge regression. The improvements we obtain for PCP, demonstrated theoretically and empirically, we hope are just the first instance of a more general approach for improving the running time for solving large-scale machine learning problems.

\medskip

{\small
\bibliographystyle{abbrvnat}
\bibliography{references.bib}

\begin{thebibliography}{35}
\providecommand{\natexlab}[1]{#1}
\providecommand{\url}[1]{\texttt{#1}}
\expandafter\ifx\csname urlstyle\endcsname\relax
  \providecommand{\doi}[1]{doi: #1}\else
  \providecommand{\doi}{doi: \begingroup \urlstyle{rm}\Url}\fi

\bibitem[Abramowitz and Stegun(1965)]{abramowitz1965handbook}
M.~Abramowitz and I.~A. Stegun.
\newblock \emph{Handbook of mathematical functions: with formulas, graphs, and
  mathematical tables}, volume~55.
\newblock Courier Corporation, 1965.

\bibitem[Agarwal et~al.(2017)Agarwal, Kakade, Kidambi, Lee, Netrapalli, and
  Sidford]{agarwal2017leverage}
N.~Agarwal, S.~Kakade, R.~Kidambi, Y.~T. Lee, P.~Netrapalli, and A.~Sidford.
\newblock Leverage score sampling for faster accelerated regression and erm.
\newblock \emph{arXiv preprint arXiv:1711.08426}, 2017.

\bibitem[Allen-Zhu(2017)]{allen2017katyusha}
Z.~Allen-Zhu.
\newblock Katyusha: The first direct acceleration of stochastic gradient
  methods.
\newblock \emph{The Journal of Machine Learning Research}, 18\penalty0
  (1):\penalty0 8194--8244, 2017.

\bibitem[Allen-Zhu and Li(2016)]{allen2016lazysvd}
Z.~Allen-Zhu and Y.~Li.
\newblock Lazysvd: even faster svd decomposition yet without agonizing pain.
\newblock In \emph{Advances in Neural Information Processing Systems}, pages
  974--982, 2016.

\bibitem[Allen-Zhu and Li(2017)]{AL17}
Z.~Allen-Zhu and Y.~Li.
\newblock Faster principal component regression and stable matrix chebyshev
  approximation.
\newblock In \emph{Proceedings of the 34th International Conference on Machine
  Learning-Volume 70}, pages 107--115. JMLR. org, 2017.

\bibitem[Bubeck et~al.(2015)Bubeck, Lee, and Singh]{bubeck2015geometric}
S.~Bubeck, Y.~T. Lee, and M.~Singh.
\newblock A geometric alternative to nesterov's accelerated gradient descent.
\newblock \emph{arXiv preprint arXiv:1506.08187}, 2015.

\bibitem[Clarkson and Woodruff(2013)]{clarkson2013low}
K.~L. Clarkson and D.~P. Woodruff.
\newblock Low rank approximation and regression in input sparsity time.
\newblock In \emph{Proceedings of the forty-fifth annual ACM symposium on
  Theory of computing}, pages 81--90. ACM, 2013.

\bibitem[Eremenko and Yuditskii(2007)]{eremenko2007uniform}
A.~Eremenko and P.~Yuditskii.
\newblock Uniform approximation of sgn x by polynomials and entire functions.
\newblock \emph{Journal d'Analyse Math{\'e}matique}, 101\penalty0 (1):\penalty0
  313--324, 2007.

\bibitem[Frostig et~al.(2015)Frostig, Ge, Kakade, and Sidford]{FGKS15}
R.~Frostig, R.~Ge, S.~Kakade, and A.~Sidford.
\newblock Un-regularizing: approximate proximal point and faster stochastic
  algorithms for empirical risk minimization.
\newblock In \emph{International Conference on Machine Learning}, pages
  2540--2548, 2015.

\bibitem[Frostig et~al.(2016)Frostig, Musco, Musco, and Sidford]{FMMS16}
R.~Frostig, C.~Musco, C.~Musco, and A.~Sidford.
\newblock Principal component projection without principal component analysis.
\newblock In \emph{ICML}, pages 2349--2357, 2016.

\bibitem[Gallivan et~al.(1996)Gallivan, Grimme, and
  Van~Dooren]{gallivan1996rational}
K.~Gallivan, G.~Grimme, and P.~Van~Dooren.
\newblock A rational lanczos algorithm for model reduction.
\newblock \emph{Numerical Algorithms}, 12\penalty0 (1):\penalty0 33--63, 1996.

\bibitem[Garber et~al.(2016)Garber, Hazan, Jin, Kakade, Musco, Netrapalli, and
  Sidford]{GHJ+16}
D.~Garber, E.~Hazan, C.~Jin, S.~M. Kakade, C.~Musco, P.~Netrapalli, and
  A.~Sidford.
\newblock Faster eigenvector computation via shift-and-invert preconditioning.
\newblock In \emph{ICML}, pages 2626--2634, 2016.

\bibitem[Golub and Van~Loan(2012)]{golub2012matrix}
G.~H. Golub and C.~F. Van~Loan.
\newblock \emph{Matrix computations}, volume~3.
\newblock JHU press, 2012.

\bibitem[Gonchar(1969)]{gonchar1969zolotarev}
A.~A. Gonchar.
\newblock Zolotarev problems connected with rational functions.
\newblock \emph{Matematicheskii Sbornik}, 120\penalty0 (4):\penalty0 640--654,
  1969.

\bibitem[Johnson and Zhang(2013)]{johnson2013accelerating}
R.~Johnson and T.~Zhang.
\newblock Accelerating stochastic gradient descent using predictive variance
  reduction.
\newblock In \emph{Advances in neural information processing systems}, pages
  315--323, 2013.

\bibitem[Lin et~al.(2015)Lin, Mairal, and Harchaoui]{LMH15}
H.~Lin, J.~Mairal, and Z.~Harchaoui.
\newblock A universal catalyst for first-order optimization.
\newblock In \emph{Advances in Neural Information Processing Systems}, pages
  3384--3392, 2015.

\bibitem[Malhi and Gao(2004)]{malhi2004pca}
A.~Malhi and R.~X. Gao.
\newblock Pca-based feature selection scheme for machine defect classification.
\newblock \emph{IEEE Transactions on Instrumentation and Measurement},
  53\penalty0 (6):\penalty0 1517--1525, 2004.

\bibitem[Metsalu and Vilo(2015)]{metsalu2015clustvis}
T.~Metsalu and J.~Vilo.
\newblock Clustvis: a web tool for visualizing clustering of multivariate data
  using principal component analysis and heatmap.
\newblock \emph{Nucleic acids research}, 43\penalty0 (W1):\penalty0 W566--W570,
  2015.

\bibitem[Musco et~al.(2018)Musco, Musco, and Sidford]{musco2018stability}
C.~Musco, C.~Musco, and A.~Sidford.
\newblock Stability of the lanczos method for matrix function approximation.
\newblock In \emph{Proceedings of the Twenty-Ninth Annual ACM-SIAM Symposium on
  Discrete Algorithms}, pages 1605--1624. Society for Industrial and Applied
  Mathematics, 2018.

\bibitem[Nakatsukasa and Freund(2016)]{NF16}
Y.~Nakatsukasa and R.~W. Freund.
\newblock Computing fundamental matrix decompositions accurately via the matrix
  sign function in two iterations: The power of zolotarev's functions.
\newblock \emph{SIAM Review}, 58\penalty0 (3):\penalty0 461--493, 2016.

\bibitem[Nesterov(2018)]{nesterov2018lectures}
Y.~Nesterov.
\newblock \emph{Lectures on convex optimization}, volume 137.
\newblock Springer, 2018.

\bibitem[Nesterov and Stich(2017)]{nesterov2017efficiency}
Y.~Nesterov and S.~U. Stich.
\newblock Efficiency of the accelerated coordinate descent method on structured
  optimization problems.
\newblock \emph{SIAM Journal on Optimization}, 27\penalty0 (1):\penalty0
  110--123, 2017.

\bibitem[NESTEROV(1983)]{Nes83}
Y.~E. NESTEROV.
\newblock A method for solving the convex programming problem with convergence
  rate $\text{O}(1/k^2)$.
\newblock \emph{Dokl. Akad. Nauk SSSR}, 269:\penalty0 543--547, 1983.

\bibitem[Niedoba(2014)]{niedoba2014multi}
T.~Niedoba.
\newblock Multi-parameter data visualization by means of principal component
  analysis (pca) in qualitative evaluation of various coal types.
\newblock \emph{physicochemical problems of Mineral processing}, 50, 2014.

\bibitem[Oja(1982)]{oja1982simplified}
E.~Oja.
\newblock Simplified neuron model as a principal component analyzer.
\newblock \emph{Journal of mathematical biology}, 15\penalty0 (3):\penalty0
  267--273, 1982.

\bibitem[Oja and Karhunen(1985)]{oja1985stochastic}
E.~Oja and J.~Karhunen.
\newblock On stochastic approximation of the eigenvectors and eigenvalues of
  the expectation of a random matrix.
\newblock \emph{Journal of mathematical analysis and applications},
  106\penalty0 (1):\penalty0 69--84, 1985.

\bibitem[Olver et~al.(2010)Olver, Lozier, Boisvert, and Clark]{Calson10}
F.~Olver, D.~Lozier, R.~Boisvert, and C.~Clark.
\newblock Nist handbook of mathematical functions.
\newblock 01 2010.

\bibitem[Palaniappan and Bach(2016)]{pal16}
B.~Palaniappan and F.~Bach.
\newblock Stochastic variance reduction methods for saddle-point problems.
\newblock In \emph{Advances in Neural Information Processing Systems}, pages
  1416--1424, 2016.

\bibitem[Pascoal et~al.(2012)Pascoal, De~Oliveira, Valadas, Filzmoser,
  Salvador, and Pacheco]{pascoal2012robust}
C.~Pascoal, M.~R. De~Oliveira, R.~Valadas, P.~Filzmoser, P.~Salvador, and
  A.~Pacheco.
\newblock Robust feature selection and robust pca for internet traffic anomaly
  detection.
\newblock In \emph{2012 Proceedings IEEE INFOCOM}, pages 1755--1763. IEEE,
  2012.

\bibitem[Schmidt et~al.(2017)Schmidt, Le~Roux, and Bach]{schmidt2017minimizing}
M.~Schmidt, N.~Le~Roux, and F.~Bach.
\newblock Minimizing finite sums with the stochastic average gradient.
\newblock \emph{Mathematical Programming}, 162\penalty0 (1-2):\penalty0
  83--112, 2017.

\bibitem[Shalev-Shwartz and Zhang(2014)]{SZ14}
S.~Shalev-Shwartz and T.~Zhang.
\newblock Accelerated proximal stochastic dual coordinate ascent for
  regularized loss minimization.
\newblock In \emph{International Conference on Machine Learning}, pages 64--72,
  2014.

\bibitem[Song et~al.(2010)Song, Guo, and Mei]{song2010feature}
F.~Song, Z.~Guo, and D.~Mei.
\newblock Feature selection using principal component analysis.
\newblock In \emph{2010 international conference on system science, engineering
  design and manufacturing informatization}, volume~1, pages 27--30. IEEE,
  2010.

\bibitem[Williams(2012)]{Williams12}
V.~V. Williams.
\newblock Multiplying matrices faster than coppersmith-winograd.
\newblock In \emph{Proceedings of the 44th Symposium on Theory of Computing
  Conference, {STOC} 2012, New York, NY, USA, May 19 - 22, 2012}, pages
  887--898, 2012.

\bibitem[Yin et~al.(2015)Yin, Liu, Zhou, and Li]{yin2015efficient}
Y.~Yin, F.~Liu, X.~Zhou, and Q.~Li.
\newblock An efficient data compression model based on spatial clustering and
  principal component analysis in wireless sensor networks.
\newblock \emph{Sensors}, 15\penalty0 (8):\penalty0 19443--19465, 2015.

\bibitem[Zolotarev(1877)]{Z77}
E.~Zolotarev.
\newblock Application of elliptic functions to questions of functions deviating
  least and most from zero.
\newblock \emph{Zap. Imp. Akad. Nauk. St. Petersburg}, 30\penalty0
  (5):\penalty0 1--59, 1877.

\end{thebibliography}
}

\newpage
\appendix

\section*{Appendix}

\section{Ridge Regression Solver $\RidgeReg$}
\label{App:solver}

In the reduction from PCP to PCR stated in \cref{sec:PCP}, a blackbox solver $\RidgeReg(\A,\lambda,\sss)$ is needed as the case of~\citet{FMMS16,AL17}. We formally define it as follows.

\begin{definition}[Ridge Regression Solver]\label{def:rigreg}
Given $\sss\in\R^d$, and exact solution $\xx^*=(\A^\top \A+\mu\I)^{-1}\sss$, an algorithm %
$\RidgeReg(\A,\mu,\sss)$ is an $\eps$-approximate ridge regression blackbox solver if for all
$\epsilon>0$, it returns a solution $\tilde{\xx}$ satisfying $ \|\tilde{\xx}-\xx^*\|\le\eps\|\sss\|.$
\end{definition}
\begin{theorem}[Blackbox Solver for Ridge Regression]
There is a blackbox solver for ridge regression with runtime $\tilde{O}(\nnz(\A)+\sqrt{\nnz(\A)d\cdot\mathrm{sr}(\A)\kappa_\mu})$.	
\end{theorem}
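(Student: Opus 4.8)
The plan is to recognize this as the standard fact that ridge regression is solved rapidly by accelerated stochastic variance-reduced gradient methods, exactly as it is used as a black box in \citet{FMMS16,AL17} and established in \citet{FGKS15}. First I would rewrite the problem as minimizing the $\mu$-strongly convex quadratic $F(\xx) = \tfrac12 \xx^\top(\A^\top\A + \mu\I)\xx - \sss^\top\xx = \tfrac1n\sum_{i\in[n]} \psi_i(\xx) + \tfrac{\mu}{2}\|\xx\|^2$ with $\psi_i(\xx) = \tfrac{n}{2}(\aaa_i^\top\xx)^2 - \sss^\top\xx$, so that the regularizer furnishes strong convexity $\mu$ and each component $\psi_i$ is smooth with parameter $L_i = n\|\aaa_i\|^2$. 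Its minimizer is $\xx^* = (\A^\top\A+\mu\I)^{-1}\sss$, and $\mu$-strong convexity together with $F(\0) - F(\xx^*) \le \tfrac{1}{2\mu}\|\sss\|^2$ means that driving the function error below $\tfrac\mu2\eps^2\|\sss\|^2$ --- which costs only a $\tilde O(1)$ number of extra geometric-decay steps --- yields $\|\widetilde\xx - \xx^*\|\le\eps\|\sss\|$, so essentially no accuracy is lost in the reduction.

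Next I would apply accelerated SVRG (Katyusha, or SVRG accelerated via the approximate-proximal-point / Catalyst reductions of \citet{FGKS15,LMH15}) with non-uniform sampling $p_i\propto\|\aaa_i\|^2$. Under this sampling the governing quantity is the \emph{sum} of per-component smoothnesses, $\sum_i\|\aaa_i\|^2 = \|\A\|_\mathrm{F}^2$, giving effective condition number $\Theta(\|\A\|_\mathrm{F}^2/\mu) = \Theta(\sr(\A)\kappa_\mu)$ with $\kappa_\mu=(\lambda_1+\mu)/\mu$; the standard analysis then bounds the number of stochastic steps by $\tilde O(n+\sqrt{n\|\A\|_\mathrm{F}^2/\mu})$, organized into $\tilde O(1+\sqrt{\|\A\|_\mathrm{F}^2/(n\mu)})$ epochs, each performing one exact gradient $\nabla F$ in $O(\nnz(\A))$ time via $\A^\top(\A\xx)$. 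Since a stochastic step only reads $\aaa_{i_t}$ and the $\ell_2$ shrinkage can be applied implicitly (maintain $\xx_t$ as a scalar times a sparsely-updated vector), the amortized per-step cost is $O(\nnz(\A)/n)$, so the total running time is $\tilde O(\nnz(\A)(1+\sqrt{\|\A\|_\mathrm{F}^2/(n\mu)}))$. Using $\nnz(\A)\le nd$ this is at most $\tilde O(\nnz(\A) + \sqrt{\nnz(\A)\cdot d\cdot\|\A\|_\mathrm{F}^2/\mu}) = \tilde O(\nnz(\A)+\sqrt{\nnz(\A)\,d\,\sr(\A)\,\kappa_\mu})$, as claimed.

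The step I expect to need the most care is keeping each stochastic update at cost $O(\nnz(\aaa_{i_t}))$ rather than $O(d)$: applying $\mu\I$ naively costs $O(d)$ per step and would dominate for sparse $\A$, so one must use the classical lazy-regularization trick (fold the shrinkage into a scalar coefficient) or treat $\tfrac\mu2\|\xx\|^2$ as a proximal term whose diagonal prox is evaluated only on touched coordinates. A secondary point is that non-uniform sampling is essential for the runtime to scale with $\|\A\|_\mathrm{F}^2$ (i.e. $\sr(\A)$) rather than $n\lambda_1$. Both are routine, and alternatively one simply cites the accelerated-SVRG ridge-regression guarantee of \citet{FGKS15} or the solvers invoked in \citet{FMMS16,AL17}. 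Finally, I would remark that the bound also follows by specializing \cref{sec:SVRG}: the symmetric PSD matrix $\A^\top\A+\mu\I$ admits the tighter variance bound $\sum_i\tfrac{1}{p_i}\|\M_i\Delta\|^2\le(\sum_i L_i)\,\Delta^\top\M\Delta$ (co-coercivity of the rank-one PSD summands), which lowers the iteration count of \cref{thm:asymmetric_main} from $(\sum_iL_i)^2/\mu^2$ to $(\sum_iL_i)/\mu$, after which the outer acceleration of \cref{thm:accel_asymmetric_main} gives the stated runtime.
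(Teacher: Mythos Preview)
Your proposal is correct and in the same spirit as the paper: the paper does not prove this theorem at all but simply cites \citet{SZ14,FGKS15,LMH15} as providing an algorithm with the stated guarantee. Your sketch (finite-sum reformulation, non-uniform sampling with $p_i\propto\|\aaa_i\|^2$, accelerated SVRG via Catalyst/approximate proximal point) is exactly the content of those references, and your additional remark that the bound can also be recovered by specializing \cref{sec:SVRG} to the symmetric PSD case is a nice bonus the paper does not spell out.
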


An algorithm achieving the aboveboard  theoretical guarantee can be found in~\citet{SZ14,FGKS15,LMH15}.

We remark that when depending on structural properties of $\A$ (denoting the maximum row sparsity of $\A$ as $s(\A)$ such that $s(\A)\le d$), the running time for ridge regression can be further improved into $\tilde{O}((n+\sqrt{n\cdot\sr(\A)\kappa_\mu})\cdot s(\A))$ through direct acceleration Katyusha~\cite{allen2017katyusha} or accelerated coordinate descent methods \cite{nesterov2017efficiency}. Using a leverage-score sampling idea~\cite{agarwal2017leverage}, one can obtain an even more fine-grained running time $\tilde{O}((n+\sqrt{d\cdot\sr(\A)\kappa_\mu})\cdot s(\A))$ that improves the previous in most settings. Because these algorithms have running time highly depend on structure of $\A$, we didn't use them for computing runtime when treating ridge regression solver as a blackbox.

\section{Proofs for Results in \cref{sec:PCP}}
\label{App:PCP}

\subsection{Properties of Zolotarev Rational}
\label{App:prop}

\begin{proof}[Proof of \cref{lem:bound}]

(1) The elliptic integral has taylor series~\cite{Calson10} as follows: 
\[ 
K(\mu)\defeq\int_0^1\frac{dt}{\sqrt{(1-t^2)(1-\mu^2t^2)}}=\frac{\pi}{2}\sum\limits_{n=0}^{\infty}\left(\frac{(2n)!}{2^{2n}(n!)^2}\right)^2\mu^{2n}
\]
Using the Stirling formula $n!\approx\sqrt{2\pi n}(n/e)^n$, $\exists$ constants $C_1,C_2<\infty$ such that, %
\begin{align*}
	K(\mu) & = \frac{\pi}{2}\sum\limits_{n=0}^\infty\left(\frac{(2n)!}{2^{2n}(n!)^2}\right)^2\mu^{2n}
	 \le C_1\sum\limits_{n=1}^\infty\frac{\mu^{2n}}{n}\\
	& \le C_2\int_1^\infty\frac{\mu^{2t}}{t}dt
	\stackrel{(i)}{\le} C_2\int_{\sqrt{\gamma}}^\infty \frac{e^{-t}}{t}dt\\
	& \stackrel{(ii)}{=} C_2E_1(\sqrt{\gamma})
\end{align*}
where we use $(i)$ $\mu^2\le(1-\sqrt{\gamma})^2\le1-\sqrt{\gamma}\le\exp(-\sqrt{\gamma})$, and $(ii)$ change of variable $t\leftarrow \sqrt{\gamma}t$ and definition of exponential integral that $E_1(z)\defeq\int_z^\infty \frac{e^{-t}}{t}dt$

By the convergence series of exponential integral~\cite{abramowitz1965handbook}, this can be written for $z>0$ as %
\[
E_1(z)=C_3-\log(z)-\sum\limits_{k=1}^\infty \frac{(-z)^k}{k\cdot k!}
\]
where $C_3$ finite is the Euler-Mascheroni constant. Using this, we have
\begin{align*}
	K(\mu) & \le C_2 E_1(\sqrt{\gamma})\lesssim \log(1/\gamma) ,\text{ as }\gamma\rightarrow 0,
\end{align*} 
where $\lesssim$ is hiding constant $C$ multiplicatively and this yields (1).

(2) By definition of elliptic integral we have,
\begin{align*}
	K'  \defeq \int_0^{\pi/2}\frac{d\theta}{\sqrt{1-\gamma'^2\mathrm{sin}^2\theta}}
	\enspace \text{ and } \enspace
	c_i \defeq \gamma^2\frac{\mathrm{sn}^2(\frac{i K'}{2k+1};\gamma')}{\mathrm{cn}^2(\frac{i K'}{2k+1};\gamma')}
	\enspace
	\text{ for all }
	i\in[2k].
\end{align*}

Notice by equivalent definition for each $i\in[2k]$ we have, $$\frac{\mathrm{sn}(\frac{i K'}{2k+1};\gamma')}{\mathrm{cn}(\frac{i K'}{2k+1};\gamma')}=\tan\phi\text{ where }\frac{iK'}{2k+1}=\int_0^\phi\frac{d\theta}{\sqrt{1-\gamma'^2\mathrm{sin}^2\theta}}.$$
Consequently, we know $c_i$ is monotonously decreasing in $i$ as $\phi$ itself is monotonously decreasing.

Also, since $\gamma'^2 = 1-\gamma^2$, we have $\sqrt{1-\gamma'^2\mathrm{sin}^2\theta}=\sqrt{\cos^2\theta+\gamma^2\sin^2\theta}$ which $\rightarrow 1$ when $\theta\rightarrow0$ and $\rightarrow \gamma$ when $\theta\rightarrow\pi/2$. From that we know 
\begin{align*}
\frac{1}{2k+1}\int_0^{\pi/2}\frac{d\theta}{\sqrt{1-\gamma'^2\mathrm{sin}^2\theta}} \le\int_0^\phi\frac{d\theta}{\sqrt{1-\gamma'^2\mathrm{sin}^2\theta}} \le\frac{2k}{2k+1}\int_0^{\pi/2}\frac{d\theta}{\sqrt{1-\gamma'^2\mathrm{sin}^2\theta}} ,\\
\end{align*}
and thus
\begin{align*}
\frac{1}{2k+1}\cdot\frac{\pi}{2} \le\phi\le\frac{\pi}{2}-\frac{\gamma K'}{2k+1},~\text{ where by definition }\gamma K'\in(0,\pi/2).
\end{align*}
Using monotonicity for $\tan(\phi)$ on $\phi\in[0,\pi/2]$, $\tan(\phi)\ge\phi$ and $\sin(\phi)\ge\phi/2$ for all $\phi\in(0,\pi/2)$, we know $\exists \beta_2,\beta_3 \in (0,\infty)$ such that %
\begin{align*}
\frac{\mathrm{sn}^2(\frac{K'}{2k+1};\gamma')}{\mathrm{cn}^2(\frac{K'}{2k+1};\gamma')} &\ge \beta'_2/(2k+1)^2\ge\beta_2/k^2\\
\frac{\mathrm{sn}^2(\frac{2k K'}{2k+1};\gamma')}{\mathrm{cn}^2(\frac{2k K'}{2k+1};\gamma')} & \le \beta_3'\left(\frac{2k+1}{\gamma K'}\right)^2\le \frac{\beta_3k^2}{\gamma^2}.
\end{align*}
By definition of $K'$ we know $K'\ge\Theta(1)$, yielding that $c_1\ge\beta_2\gamma^2/k^2$ and $c_{2k}\le\beta_3k^2$.
\end{proof}

\section{Proofs for Results in \cref{sec:SVRG}}
\label{App:SVRG}

\subsection{Direct Methods Runtime}
\label{App:direct}
We first state the results and corresponding proofs of using direct methods stated in the beginning of \cref{sec:SVRG}. Consider the squared system  $\left((\A^\top\A-c\I)^2+\mu^2\I\right)\xx=\vv$ for given $\A\in\R^{n\times d},\vv\in\R^d$, $\mu>0$ and $c\in[0,\lambda_1]$.

The following theorem gives a bound of the running time of solving this system using accelerated gradient descent~\cite{bubeck2015geometric,nesterov2018lectures}.

\begin{theorem}[Direct AGD Runtime]
\label{thm:direct-AGD}
Consider iteration $$\xx_{t+1}=\xx_t-\gamma\left[\bigl((\A^\top\A-c\I)^2+\mu^2\I\bigr)\xx_t-\vv\right]+\beta(\xx_t-\xx_{t-1}).$$ Choosing $\gamma=\frac{4}{\lambda_1+2\mu},\beta=\frac{\lambda_1}{\lambda_1+2\mu}$ under above condition, 
the total running time to get an $\eps$-approximate solution is $\tilde{O}(\nnz(\A)\lambda_1/\mu)$.
\end{theorem}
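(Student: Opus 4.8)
The plan is to recognize the stated recursion as Polyak's heavy-ball (momentum) method run on the quadratic $\phi(\xx)\defeq\tfrac12\xx^\top\M\xx-\vv^\top\xx$, where $\M\defeq(\A^\top\A-c\I)^2+\mu^2\I$; indeed $\nabla\phi(\xx)=\M\xx-\vv$, so the update reads $\xx_{t+1}=\xx_t-\gamma\nabla\phi(\xx_t)+\beta(\xx_t-\xx_{t-1})$ with unique fixed point the target $\xx^*=\M^{-1}\vv$. The first step is to localize the spectrum of $\M$: writing $\A^\top\A=\V\LLambda\V^\top$, the eigenvalues of $\M$ are $(\lambda_i-c)^2+\mu^2$, and since $c\in[0,\lambda_1]$ and every $\lambda_i\in[0,\lambda_1]$ we have $|\lambda_i-c|\le\lambda_1$, hence $\mu^2\I\preceq\M\preceq(\lambda_1^2+\mu^2)\I$. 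Thus $\phi$ is $\mu^2$-strongly convex and $(\lambda_1^2+\mu^2)$-smooth, with condition number $\kappa_\M\defeq1+\lambda_1^2/\mu^2$; we may assume $\mu\le\lambda_1$ (the regime of interest), as otherwise $\kappa_\M=O(1)$ and $O(\log(1/\eps))$ iterations already give an $\eps$-approximate solution.

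The second step is the classical diagonalization of heavy ball on a quadratic. In the eigenbasis of $\M$ the error $\xx_t-\xx^*$ decouples into scalar second-order linear recursions $e_{t+1}=(1+\beta-\gamma\lambda)e_t-\beta e_{t-1}$, one per eigenvalue $\lambda\in[\mu^2,\lambda_1^2+\mu^2]$, each driven by the companion matrix $\left(\begin{smallmatrix}1+\beta-\gamma\lambda & -\beta\\ 1 & 0\end{smallmatrix}\right)$. With the prescribed $\gamma=4/(\lambda_1+2\mu)$ and $\beta=\lambda_1/(\lambda_1+2\mu)\in(0,1)$ one checks that every such $\lambda$ falls in the under-damped interval $\bigl[(1-\sqrt\beta)^2/\gamma,\,(1+\sqrt\beta)^2/\gamma\bigr]$, on which this companion matrix has a complex-conjugate eigenvalue pair of modulus exactly $\sqrt\beta$; the upper endpoint is where the normalization $\lambda_1\le1$ (with $\mu\le\lambda_1$) is invoked, and if one prefers to avoid it one simply replaces $(\gamma,\beta)$ by the textbook optimal heavy-ball constants for $\kappa_\M$, which differ only by constant factors. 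Since $1-\sqrt\beta=1-\sqrt{1-2\mu/(\lambda_1+2\mu)}=\Omega(\mu/\lambda_1)$, the pair $(\xx_t-\xx^*,\xx_{t-1}-\xx^*)$ contracts geometrically at rate $1-\Omega(\mu/\lambda_1)$, up to a fixed $\poly(\kappa_\M)$ factor coming from the non-normality of the companion matrices (this costs only an additive $O(\log\kappa_\M)$ in the iteration count). Hence $T=O\bigl((\lambda_1/\mu)\log(\kappa_\M\|\xx_0-\xx^*\|/\eps)\bigr)=\tilde{O}(\lambda_1/\mu)$ iterations produce $\|\xx_T-\xx^*\|\le\eps$.

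The third step bounds the per-iteration cost. Each step needs one evaluation of $\M\xx_t$, which is formed by applying $\A$ then $\A^\top$ to obtain $\yy=\A^\top(\A\xx_t)-c\xx_t$, applying $\A$ then $\A^\top$ again to obtain $\A^\top(\A\yy)-c\yy$, and adding $\mu^2\xx_t$: four matrix--vector products with $\A$ or $\A^\top$ plus lower-order vector operations, i.e. $O(\nnz(\A))$ time. Multiplying by $T=\tilde{O}(\lambda_1/\mu)$ gives the claimed total runtime $\tilde{O}(\nnz(\A)\lambda_1/\mu)$.

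The only nonroutine step is the bookkeeping in the second paragraph: verifying that the \emph{particular} constants $\gamma=4/(\lambda_1+2\mu)$ and $\beta=\lambda_1/(\lambda_1+2\mu)$ keep heavy ball stable across the whole spectrum $[\mu^2,\lambda_1^2+\mu^2]$ and realize the accelerated ratio $\sqrt\beta=1-\Omega(\mu/\lambda_1)$; this is exactly where the relations among $\lambda_1$, $\mu$, and the normalization enter. The spectral bound on $\M$, the decoupling of heavy-ball iterates on a quadratic, the non-normality correction, and the matrix--vector-product count are all standard.
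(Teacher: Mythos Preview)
Your proposal is correct and is exactly the standard accelerated-method analysis the paper has in mind: the paper does not actually spell out a proof of this theorem but simply cites the textbook accelerated gradient / heavy-ball references, and your argument supplies precisely those details (spectral bounds $\mu^2\I\preceq\M\preceq(\lambda_1^2+\mu^2)\I$, diagonalized companion-matrix analysis giving rate $1-\Omega(\mu/\lambda_1)$, and the $O(\nnz(\A))$ per-iteration cost). Your caveat that the specific constants $\gamma=4/(\lambda_1+2\mu)$, $\beta=\lambda_1/(\lambda_1+2\mu)$ rely on the paper's normalization $\lambda_1\in[1/2,1]$ (and can be replaced by the textbook optimal heavy-ball parameters otherwise) is well taken and matches the paper's standing assumptions.
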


Now we turn to analyzing SVRG directly applied to the squared system as follows %
\begin{equation}
\left((\A^\top \A-c \I)^2+\mu^2\I\right)\xx=\vv.
	\label{cond:quad}
\end{equation}

We take the step 
\begin{equation}
\label{eqn:SVRG-one-step-quad}
\begin{aligned}
\xx_{t+1} & = \xx_t-\frac{\eta}{p_{ij}}\bigl(\M_{ij} \xx_t-\M_{ij}\xx_0+p_{ij}(\M\xx_0-\vv)\bigr)\\
\text{ where }
\M_{ij} & \defeq \aaa_i\aaa_i^\top \aaa_j\aaa_j^\top -2c\frac{\|\aaa_j\|^2}{\|\A\|_\mathrm{F}^2}\aaa_i\aaa_i^\top +(c^2+\mu^2)\frac{\|\aaa_j\|^2\|\aaa_i\|^2 }{\|\A\|_\mathrm{F}^4}\I\\
\text{ and } p_{ij} & \propto\norm{\aaa_i}^2\norm{\aaa_j}^2,\ \forall i,j\in[n].
\end{aligned}
\end{equation}
Such update gives the following variance bound.

\begin{lemma}[Variance bound for solving squared system directly]
\label{lem:var-quad}
For problem~\eqref{cond:quad}, the variance incurred in~\eqref{eqn:SVRG-one-step-quad} is bounded by $S=O((\lambda_1^2\|\A\|_\mathrm{F}^4+(\lambda_1^2+\mu^2)^2)/\mu^2)$, i.e. there exists $0<C<\infty$ that
$$
\sum\limits_{i,j\in[n]}\frac{1}{p_{ij}}\|\M_{ij}\xx_t-\M_{ij}\xx_0 +p_{ij} (\M\xx_0-\vv)\|^2\le \frac{C(\lambda_1^2\|\A\|_\mathrm{F}^4+(\lambda_1^2+\mu^2)^2)}{\mu^2}[\norm{\xx_t-\xx^*}^2+\norm{\xx_0-\xx^*}^2], $$
where $C<\infty$ is a numerical constant and $\|A\|_{\mathrm{F}}^2\le d^2$.
\end{lemma}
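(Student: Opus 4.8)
The plan is to follow the same template as the proofs of \cref{lem:var-gen} and \cref{lem:var-spec}. First I would check that the proposed splitting is consistent, i.e. $\sum_{i,j\in[n]}\M_{ij}=\M$: since $\sum_{i,j}\aaa_i\aaa_i^\top\aaa_j\aaa_j^\top=(\A^\top\A)^2$, $\sum_{i,j}\frac{\|\aaa_j\|^2}{\|\A\|_\mathrm{F}^2}\aaa_i\aaa_i^\top=\A^\top\A$ (using $\sum_j\|\aaa_j\|^2=\|\A\|_\mathrm{F}^2$), and $\sum_{i,j}\frac{\|\aaa_i\|^2\|\aaa_j\|^2}{\|\A\|_\mathrm{F}^4}\I=\I$, we get $\sum_{i,j}\M_{ij}=(\A^\top\A-c\I)^2+\mu^2\I=\M$, hence $\M\xx^*=\vv$. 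Then, exactly as in \cref{lem:var-gen}, I would write $\M_{ij}\xx_t-\M_{ij}\xx_0+p_{ij}(\M\xx_0-\vv)=\M_{ij}(\xx_t-\xx^*)+\bigl(\M_{ij}(\xx^*-\xx_0)+p_{ij}\M(\xx_0-\xx^*)\bigr)$, apply $\|u+v\|^2\le 2\|u\|^2+2\|v\|^2$, and use the variance-reduction identity $\sum_{i,j}\frac1{p_{ij}}\|\M_{ij}(\xx^*-\xx_0)+p_{ij}\M(\xx_0-\xx^*)\|^2\le\sum_{i,j}\frac1{p_{ij}}\|\M_{ij}(\xx^*-\xx_0)\|^2$ (the usual $\E\|X-\E X\|^2\le\E\|X\|^2$ with $X=\frac1{p_{ij}}\M_{ij}(\xx^*-\xx_0)$ under the law $\{p_{ij}\}$). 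This reduces the whole claim to one deterministic estimate: there is a constant $C$ with $\sum_{i,j\in[n]}\frac1{p_{ij}}\|\M_{ij}\Delta\|^2\le C\bigl(\lambda_1^2\|\A\|_\mathrm{F}^4+(\lambda_1^2+\mu^2)^2\bigr)\|\Delta\|^2$ for all $\Delta\in\R^d$; applying it with $\Delta=\xx_t-\xx^*$ and $\Delta=\xx^*-\xx_0$ and summing yields the stated bound (which is at most $C(\lambda_1^2\|\A\|_\mathrm{F}^4+(\lambda_1^2+\mu^2)^2)/\mu^2\,[\,\cdots]$ in the normalized regime $\mu\lesssim 1$), and $\|\A\|_\mathrm{F}^2=\Tr(\A^\top\A)\le d\lambda_1\le d^2$ gives the final remark.

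For the deterministic estimate I would substitute $p_{ij}=\|\aaa_i\|^2\|\aaa_j\|^2/\|\A\|_\mathrm{F}^4$, expand $\M_{ij}\Delta$ into its three summands, and use $\|a+b+c\|^2\le 3(\|a\|^2+\|b\|^2+\|c\|^2)$, bounding each resulting sum separately. For the leading term, $\aaa_i\aaa_i^\top\aaa_j\aaa_j^\top\Delta=(\aaa_j^\top\Delta)(\aaa_i^\top\aaa_j)\aaa_i$, so summing over $i$ first uses $\sum_i(\aaa_i^\top\aaa_j)^2=\aaa_j^\top(\A^\top\A)\aaa_j\le\lambda_1\|\aaa_j\|^2$ and then $\sum_j(\aaa_j^\top\Delta)^2=\Delta^\top(\A^\top\A)\Delta\le\lambda_1\|\Delta\|^2$, giving a bound of order $\lambda_1^2\|\A\|_\mathrm{F}^4\,\|\Delta\|^2$. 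The cross term $-2c\frac{\|\aaa_j\|^2}{\|\A\|_\mathrm{F}^2}\aaa_i\aaa_i^\top\Delta$ is handled the same way and contributes order $c^2\lambda_1\|\A\|_\mathrm{F}^2\|\Delta\|^2\le\lambda_1^2\|\A\|_\mathrm{F}^4\|\Delta\|^2$ (using $c\le\lambda_1\le\|\A\|_\mathrm{F}^2$). The scalar term $(c^2+\mu^2)\frac{\|\aaa_i\|^2\|\aaa_j\|^2}{\|\A\|_\mathrm{F}^4}\Delta$ contributes exactly $(c^2+\mu^2)^2\|\Delta\|^2\le(\lambda_1^2+\mu^2)^2\|\Delta\|^2$ after the $p_{ij}$-weights cancel against $\|\aaa_i\|^2\|\aaa_j\|^2$.

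The main obstacle — and indeed the whole point of stating this lemma — is the leading term $\sum_{i,j}\frac1{p_{ij}}\|\aaa_i\aaa_i^\top\aaa_j\aaa_j^\top\Delta\|^2$: sampling the degree-four product $\aaa_i\aaa_i^\top\aaa_j\aaa_j^\top$ out of $(\A^\top\A)^2$ inflates the variance proxy to $\|\A\|_\mathrm{F}^4\lambda_1^2$, as opposed to the $\|\A\|_\mathrm{F}^2\lambda_1$ obtained in \cref{lem:var-spec} by sampling only $\aaa_i\aaa_i^\top$ in the asymmetric reformulation of \cref{red:asySVRG}. Getting the right dependence here just requires carefully pairing the $\|\aaa_i\|^2\|\aaa_j\|^2$ importance weights against the $(\aaa_i^\top\aaa_j)^2$ cross-correlations via the spectral bound above; everything else is bookkeeping mirroring \cref{lem:var-gen,lem:var-spec}, and the final inequality follows by plugging the per-epoch analysis (\cref{lem:gen-per-iter}) with this $S$.
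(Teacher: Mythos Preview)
Your proposal is correct and follows essentially the same argument as the paper: reduce via the standard $\E\|X-\E X\|^2\le\E\|X\|^2$ step to bounding $\sum_{i,j}\frac{1}{p_{ij}}\|\M_{ij}\Delta\|^2$, and control the leading rank-one term via $\sum_i(\aaa_i^\top\aaa_j)^2=\aaa_j^\top(\A^\top\A)\aaa_j\le\lambda_1\|\aaa_j\|^2$ followed by $\sum_j(\aaa_j^\top\Delta)^2\le\lambda_1\|\Delta\|^2$. The only differences are cosmetic---the paper expands $\M_{ij}^\top\M_{ij}$ in full and then drops the negative-semidefinite cross terms rather than invoking $\|a+b+c\|^2\le 3(\|a\|^2+\|b\|^2+\|c\|^2)$---and you correctly flag that the extra factor $1/\mu^2$ in the stated bound is simply a harmless weakening under $\mu\lesssim 1$, a point the paper's own proof also leaves implicit.
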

\begin{proof}
Notice that $\nabla\psi_{ij} = [\aaa_i\aaa_i^\top \aaa_j\aaa_j^\top -2c\frac{\|\aaa_j\|^2}{\|\A\|_\mathrm{F}^2}\aaa_i\aaa_i^\top +(c^2+\mu^2)\frac{\|\aaa_j\|^2\|\aaa_i\|^2 }{\|\A\|_\mathrm{F}^4}\I]\xx-\frac{\|\aaa_j\|^2\|\aaa_i\|^2}{\|\A\|_\mathrm{F}^4} \vv$, by bounding directly and summing up all terms $i,j\in[n]$, we get  
\begin{align*}
    & \sum\limits_{i,j\in[n]}\frac{1}{p_{ij}}\|\M_{ij}\xx-\M_{ij}\xx^*\|^2\\
    & =(\xx-\xx^*)^\top\left(\sum_{i,j\in[n]}\frac{1}{p_{ij}}\M_{ij}^\top\M_{ij}\right)(\xx-\xx^*)\\
    = &\sum\limits_{i,j\in[n]}\frac{\|\A\|_\mathrm{F}^4}{\|\aaa_i\|^2\|\aaa_j\|^2}\|\biggl(\aaa_i\aaa_i^\top \aaa_j\aaa_j^\top -2c\frac{\|\aaa_j\|^2}{\|\A\|^2_\mathrm{F}}\aaa_i\aaa_i^\top +\frac{\|\aaa_i\|^2\|\aaa_j\|^2}{\|\A\|_\mathrm{F}^4}(c^2+\mu^2)I\biggr)(\xx-\xx^*)\|^2\\
    = & (\xx-\xx^*)^\top \biggl(\sum\limits_{i,j\in[n]}\bigl(\frac{\|\A\|_\mathrm{F}^4}{\|\aaa_j\|^2}\aaa_j\aaa_j^\top \aaa_i\aaa_i^\top \aaa_j\aaa_j^\top +4c^2\|\aaa_j\|^2\aaa_i\aaa_i^\top +(c^2+\mu^2)^2\frac{\|\aaa_i\|^2\|\aaa_j\|^2}{\|\A\|_\mathrm{F}^4}\I\\
    & -2c\|\A\|_\mathrm{F}^2(\aaa_j\aaa_j^\top \aaa_i\aaa_i^\top +\aaa_i\aaa_i^\top \aaa_j\aaa_j^\top )+(c^2+\mu^2)(\aaa_j\aaa_j^\top \aaa_i\aaa_i^\top +\aaa_i\aaa_i^\top \aaa_j\aaa_j^\top )\\
    & -4c(c^2+\mu^2)\frac{\|\aaa_j\|^2}{\|\A\|^2_{\mathrm{F}}}\aaa_i\aaa_i^\top \bigr)\biggr)(\xx-\xx^*)\\
    = & (\xx-\xx^*)^\top \bigl(\|\A\|_\mathrm{F}^4(\A^\top \A)(\A^\top \A)+4c^2\|\A\|_\mathrm{F}^2(\A^\top \A)+(c^2+\mu^2)^2\I\\
    & - 4c^2(\A^\top \A)(\A^\top \A)+2(c^2+\mu^2)(\A^\top \A)(\A^\top \A)-4c(c^2+\mu^2)(\A^\top \A)\bigr)(\xx-\xx^*)\\
    \stackrel{(i)}{\le} & (\xx-\xx^*)^\top \left(\|\A\|_\mathrm{F}^4\lambda_1^2\I+4c^2\lambda_1\|\A\|_\mathrm{F}^2\I+(c^2+\mu^2)^2\I + 2(c^2+\mu^2)\lambda_1^2\right)(\xx-\xx^*)\\
    \le & C\left(\lambda_1^2\|\A\|_\mathrm{F}^4+4c^2\lambda_1\|\A\|_\mathrm{F}^2+(c^2+\mu^2)(c^2+\mu^2+2\lambda_1^2)\right)\|\xx-\xx^*\|^2\\
    \stackrel{(ii)}{\le} & C\left(\lambda_1^2\|\A\|_\mathrm{F}^4+(\lambda_1^2+\mu^2)^2\right)\|\xx-\xx^*\|^2,
\end{align*}
where we use $\A^\top\A\preceq\lambda_1\I$ and $c\in[0,\lambda_1]$ in (i) by condition, and (ii)
holds for some constant $0<C<\infty$. Then we conclude from the fact that $\|\xx+\xx'\|^2\le2\|\xx\|^2+2\|\xx'\|^2$ and $\mathbb{E}\|\xx-\mathbb{E}\xx\|^2\le\mathbb{E}\|\xx\|^2$  that
$$
\sum\limits_{i,j\in[n]}\frac{1}{p_{ij}}\|\M_{ij}\xx_t-\M_{ij}\xx_0 +p_{ij} (\M\xx_0-\vv)\|^2\le \frac{C(\lambda_1^2\|\A\|_\mathrm{F}^4+(\lambda_1^2+\mu^2)^2)}{\mu^2}[\norm{\xx_t-\xx^*}^2+\norm{\xx_0-\xx^*}^2].
$$
\end{proof}

\begin{theorem}[Direct SVRG Runtime]
\label{thm:direct-SVRG}
For problem~\eqref{cond:quad}, the SVRG algorithm applied with~\eqref{eqn:SVRG-one-step-quad} returns with probability $\ge1-\delta$ an $\eps$-approximate solution in $\tilde{O}(\nnz(\A)+d\cdot \mathrm{sr}(\A)^2\lambda_1^4/\mu^4)$ time. An accelerated variant of it improves the runtime to $\tilde{O}(\nnz(\A)+\nnz(\A)^{3/4}d^{1/4}\sr(\A)^{1/2}\lambda_1/\mu)$.
\end{theorem}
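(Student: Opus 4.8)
The plan is to read off \cref{thm:direct-SVRG} from the general variance-reduced machinery of \cref{sec:SVRG} once the variance estimate \cref{lem:var-quad} is in hand. First I would observe that the update \eqref{eqn:SVRG-one-step-quad} is exactly the variance-reduced Richardson step \eqref{eqn:SVRG-one-step} used by $\AsySVRG$, now applied to the symmetric linear system $\M\xx=\vv$ with $\M\defeq(\A^\top\A-c\I)^2+\mu^2\I=\sum_{i,j\in[n]}\M_{ij}$ and sampling probabilities $p_{ij}\propto\|\aaa_i\|^2\|\aaa_j\|^2$. Since $\M$ is symmetric with $\M\succeq\mu^2\I$, it is $\mu^2$-strongly-PD in the sense of \cref{def:PSD}, so \cref{lem:gen-per-iter} applies with strong-convexity parameter $\mu^2$.

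Next I would invoke \cref{lem:var-quad} to verify the variance hypothesis of \cref{lem:gen-per-iter} with a parameter $S$ obeying $S^2=O(\lambda_1^2\|\A\|_\mathrm{F}^4+(\lambda_1^2+\mu^2)^2)$; since $\A^\top\A\preceq\lambda_1\I$ forces $\|\A\|_\mathrm{F}^2\ge\lambda_1$, in the relevant regime $\mu=O(\lambda_1)$ this is simply $S^2=O(\lambda_1^2\|\A\|_\mathrm{F}^4)$ (and when $\mu\gtrsim\lambda_1$ the squared system is already well conditioned and the claim is immediate). Then, following the proof of \cref{thm:asymmetric_main} verbatim, I would take $\eta=\Theta(\mu^2/S^2)$ and $T=\Theta(S^2/\mu^4)$ so that each epoch contracts the expected squared error by a constant factor, chain $Q=O(\log(\|\xx_0-\xx^*\|/(\eps\delta)))$ such epochs to drive the expected squared error below $\eps^2\delta$, and convert this to an $\eps$-accurate, probability-$(1-\delta)$ output by Markov's inequality. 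The per-epoch cost is one exact evaluation of $\M\xx$ — carried out as four matrix-vector products with $\A$ and $\A^\top$ in $O(\nnz(\A))$ time, without ever forming $\M$ — together with $T$ stochastic steps, each applying the low-rank-plus-scalar-identity operator $\M_{ij}$ in $O(d)$ time. This yields total runtime $\tilde{O}(\nnz(\A)+Td)=\tilde{O}\bigl(\nnz(\A)+d\,S^2/\mu^4\bigr)=\tilde{O}\bigl(\nnz(\A)+d\,\lambda_1^2\|\A\|_\mathrm{F}^4/\mu^4\bigr)$, which equals $\tilde{O}\bigl(\nnz(\A)+d\cdot\mathrm{sr}(\A)^2\lambda_1^4/\mu^4\bigr)$ after substituting $\mathrm{sr}(\A)=\|\A\|_\mathrm{F}^2/\lambda_1$.

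For the accelerated statement I would wrap this solver in a standard acceleration layer: either the approximate-proximal-point / Catalyst outer loop of \cref{thm:accel_asymmetric_main} (running the unaccelerated solver on the regularized systems $(\tau\I+\M)\xx=\tau\xx^{(i)}+\vv$ and using \cref{lem:PSD-prop} to control the recursion), or a direct accelerated variance-reduced / coordinate method such as Katyusha~\cite{allen2017katyusha} or accelerated coordinate descent~\cite{nesterov2017efficiency}. Plugging in the full-pass cost $O(\nnz(\A))$, the per-step cost $O(d)$, the variance parameter $S=\Theta(\lambda_1\|\A\|_\mathrm{F}^2)$, and strong convexity $\mu^2$, and then optimizing the free parameter (the regularization strength $\tau$, equivalently a mini-batch size) to balance the $\nnz(\A)$ term against the stochastic term, following the template in the proof of \cref{thm:accel_asymmetric_main}, produces $\tilde{O}\bigl(\nnz(\A)+\nnz(\A)^{3/4}d^{1/4}\mathrm{sr}(\A)^{1/2}\lambda_1/\mu\bigr)$.

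Essentially all of the content sits in \cref{lem:var-quad}: bounding the variance of the \emph{doubly}-sampled estimator $\aaa_i\aaa_i^\top\aaa_j\aaa_j^\top$ of $(\A^\top\A)^2$, which is precisely what makes the stochastic term scale with $\mathrm{sr}(\A)^2$ rather than the $\mathrm{sr}(\A)$ one gets from ridge regression; granting that lemma, the unaccelerated bound is a direct instantiation. The part I expect to require the most care is the accelerated bound — tracking the exponents through the parameter optimization, and checking that introducing the regularizer $\tau\I$ inflates the variance constant $S$ and the full-pass cost by at most lower-order amounts — and along the way I would re-derive the precise $\mu$-dependence of the variance constant $S^2$ rather than quoting it, to be sure the powers of $\mu$ propagate correctly into the final runtime.
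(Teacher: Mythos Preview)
Your proposal is correct and matches the paper's own (one-sentence) argument: feed the variance bound of \cref{lem:var-quad} into \cref{lem:gen-per-iter} with strong-convexity parameter $\mu^2$, then rerun the epoch/Markov analysis of \cref{thm:asymmetric_main}/\cref{thm:main-asySVRG}; the unaccelerated runtime drops out exactly as you write.

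One caveat on the accelerated half. The outer loop of \cref{thm:accel_asymmetric_main} is only an approximate-proximal-point scheme with $\tilde O((\tau+\mu)/\mu)$ outer calls and no Nesterov momentum; instantiating it here (strong convexity $\mu^2$, variance $S^2=\Theta(\lambda_1^2\|\A\|_\mathrm{F}^4)$, per-step cost $d$) and optimizing $\tau$ yields $\tilde O\bigl(\sqrt{\nnz(\A)\,d}\cdot\sr(\A)\lambda_1^2/\mu^2\bigr)$, i.e.\ the $\tfrac12$--$\tfrac12$ geometric mean of $\nnz(\A)$ and $d\,\sr(\A)^2\lambda_1^4/\mu^4$, not the $\tfrac34$--$\tfrac14$ mean $\nnz(\A)^{3/4}d^{1/4}\sr(\A)^{1/2}\lambda_1/\mu$ the theorem states. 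The stated exponent relies on the fact that the squared system is genuinely symmetric PSD (hence a convex quadratic), so the fully accelerated convex machinery---Katyusha or Catalyst \emph{with} momentum, the alternatives you list---applies and buys the extra square root on the outer iteration count. So among your two options it is the Katyusha/full-Catalyst route, not the \cref{thm:accel_asymmetric_main} template, that actually recovers the claimed bound; this is also how the paper signals it in the surrounding text.
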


The proof is a direct combination of \cref{lem:gen-per-iter} with the proving technique for \cref{thm:main-asySVRG}. We omit it here as the procedure and argument are basically the same.

From that we can get a direct SVRG solver of squared system $\left((\A^\top\A-c\I)^2+\mu^2\I\right)\xx=\vv$ that outputs $\eps$-approximate solution with high probability with running time $\tilde{O}(\nnz(\A)+d\cdot \mathrm{sr}(\A)^2\lambda_1^4/\mu^4)$.

Because of the high complexity of above methods (either not nearly-linear in AGD or squaring problem dimension, i.e. having $\mathrm{sr}(\A)^2\lambda_1^4/\mu^4$ as condition number in SVRG), a new insight is required to better solve such systems. The technique we develop for this purpose is to 'decouple' the squared matrix at the cost of asymmetry, formally introduced by the following reduction.

\section{Proofs for Results in \cref{App:main}}
\label{App:main-sup}

\begin{proof}[Proof of \cref{lem:stable}]

By induction, for $k=1$, this is true since $\|\mathcal{C}_1(\vv)-\CC_1\vv\|\le\eps\|\vv\|$. Suppose this is true for $i$, i.e. $\|\mathcal{C}_{i}(\mathcal{C}_{i-1}(\cdots\mathcal{C}_1(\vv)))-\prod\limits_{j=1}^i\CC_j\vv\|\le 2\eps iM^{i-1}\|\vv\|$, then for $i+1$, we have
\begin{align*}
	\|\mathcal{C}_{i+1}(\mathcal{C}_{i}(\cdots\mathcal{C}_1(\vv)))-\prod\limits_{j=1}^{i+1}\CC_j\vv\| 
	\le & \|\mathcal{C}_{i+1}(\mathcal{C}_{i}(\cdots\mathcal{C}_1(\vv)))-\CC_{i+1}\mathcal{C}_{i}(\cdots\mathcal{C}_1(\vv))\|\\
	& + \|\CC_{i+1}(\mathcal{C}_{i}(\cdots\mathcal{C}_1(\vv))-\prod\limits_{j=1}^i\CC_j\vv)\|\\
	\le & \eps\|\mathcal{C}_{i}(\cdots\mathcal{C}_1(\vv))\|+\|\CC_{i+1}\|\|\mathcal{C}_{i}(\cdots\mathcal{C}_1(\vv))-\prod\limits_{j=1}^i\CC_j\vv\|\\
	\le & \eps(M^i\|\vv\|+2\eps i M^{i-1}\|\vv\|)+2\eps iM^i\|\vv\|\\
	\le & \eps M^i(2i+1+\frac{2\eps i}{M})\|\vv\|\\
	\le & 2\eps(i+1)M^i\|\vv\|,
\end{align*}
where the last inequality uses the condition $2\eps i\le2\eps k\le M$.
\end{proof}

\subsection{Square-root Computation}

Here we prove \cref{thm:square_root_solver_main}. The approach is very similar for developing the PCP solver as in~\cref{ssec:PCP_main}. We first introduce a rational function that when applied to any given PSD matrix $\M$ well approximates the square-root of itself for matrix $\mu\I\preceq\M\preceq\I$. Note this will immediately generalize to the case when $\mu\I\preceq\M\preceq\lambda\I$ by first getting a constant approximation $\tilde{\lambda}$ of $\lambda$ in $O(\nnz(\M))$ time and then preprocessing  $\M\leftarrow\M/\tilde{\lambda}$.

\begin{lemma}
\label{lem:square_root_rational}
Given any $\mu\I\preceq\M\preceq\I$ and $\eps\in(0,1)$, there is a rational function $r(x)$ with degree $k=O(\log(1/\eps)\log(1/\mu))$ satisfying %
$$
\|(r(\M)-\M^{1/2})\vv\|\le \eps\|\M^{1/2}\vv\|,\forall \vv\in\R^n.
$$	
\end{lemma}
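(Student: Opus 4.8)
The plan is to reduce square-root approximation to the sign-function approximation already analyzed in \cref{cor:approx}, using the classical identity $\sqrt{t} = t / \sqrt{t}$ together with the fact that $\sgn$ and $1/\sqrt{\cdot}$ are related through a change of variables. Concretely, I would first recall the Newton–Zolotarev / integral-representation trick: for $t \in [\mu, 1]$ one can write $t^{-1/2}$ as a rational function by approximating $\sgn(x)$ on $[\sqrt{\mu},1]$ and substituting $x \leftarrow$ (something like $\sqrt{t}$), or more directly one uses the known fact that the best rational approximation to $t^{-1/2}$ on $[\mu,1]$ (equivalently $t^{1/2}$, since $t^{1/2}=t\cdot t^{-1/2}$ and multiplying by the degree-$1$ polynomial $t$ costs one extra degree) has error decaying like $\rho^{-\Theta(k)}$ with $\rho$ depending on $\mu$ exactly as in \cref{lem:approx}. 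The cleanest route in this paper's framework: let $g(x) \defeq r^{\sqrt{\mu}}_k(x)$ be the Zolotarev sign-approximant on $|x|\in[\sqrt{\mu},1]$ from \cref{eqn:Zolo}; then a short computation shows $\frac{1}{x} g(x)$ approximates $\frac{1}{x}\sgn(x) = \frac{1}{|x|}$ on $[\sqrt{\mu},1]$, so setting $t = x^2$ gives a rational function in $t$ approximating $t^{-1/2}$ on $[\mu,1]$, and multiplying through by $t$ gives one approximating $t^{1/2}$.

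Next I would make the error bound multiplicative rather than additive, since the lemma demands $\|(r(\M)-\M^{1/2})\vv\|\le\eps\|\M^{1/2}\vv\|$. Diagonalizing $\M = \Q\, \diag(t_1,\dots,t_n)\, \Q^\top$ with each $t_i\in[\mu,1]$, it suffices to show $|r(t_i) - t_i^{1/2}| \le \eps\, t_i^{1/2}$ for every $i$, i.e. $|r(t)/\sqrt{t} - 1|\le\eps$ uniformly on $[\mu,1]$. Writing $r(t) = t \cdot \tilde r(t)$ where $\tilde r$ is the rational approximant to $t^{-1/2}$, this becomes $|\sqrt{t}\,\tilde r(t) - 1| \le \eps$, which in the $x=\sqrt t$ variable is exactly $|\,|x|\cdot \tfrac{1}{x} g(x) - 1| = |\sgn(x)\, \mathrm{something} - 1|$ — precisely a \emph{relative} sign-approximation statement. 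I would invoke \cref{cor:approx} (or re-derive from \cref{lem:approx}) that with $k = \Omega(\log(1/\eps)\log(1/\sqrt{\mu})) = \Omega(\log(1/\eps)\log(1/\mu))$ the sign approximation error is at most $2\eps$, which transfers directly to the desired relative bound up to constants; absorbing constants into the $O(\cdot)$ for $k$ finishes the degree count.

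The main obstacle I anticipate is getting the \emph{relative} (rather than absolute) error statement cleanly: the naive bound $|r(t)-\sqrt t| = \sqrt t\,|\sqrt t\,\tilde r(t)-1| \le \sqrt t \cdot (\text{abs. error of }\tilde r)\cdot \ldots$ degrades near $t = \mu$, where $\sqrt t$ is smallest, so one must verify that the Zolotarev approximant's error is naturally controlled in a relative sense on the lower end of the interval — which it is, because the equioscillation property of $r^\gamma_k$ bounds $|r^\gamma_k(x)-\sgn(x)|$ uniformly on $|x|\in[\gamma,1]$ and $|\sgn(x)|=1$ there, making absolute and relative errors comparable. A secondary, more bookkeeping-level obstacle is correctly tracking which interval the sign function is approximated on ($[\sqrt\mu,1]$ versus $[\mu,1]$) and confirming that $\log(1/\sqrt\mu) = \Theta(\log(1/\mu))$ so the degree bound stated in the lemma is exactly right. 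Everything else — the block/diagonalization argument, the degree-$1$ multiplication by $t$, the union over eigenvalues — is routine given \cref{cor:approx} and \cref{lem:bound}.
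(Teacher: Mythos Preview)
Your proposal is correct and matches the paper's proof: the paper defines $\hat r^\mu_k(t) = Ct\prod_i \frac{t+c_{2i}}{t+c_{2i-1}}$, observes the key identity $\hat r^\mu_k(x^2) = x\cdot r^{\sqrt{\mu}}_k(x)$ (exactly your substitution $t=x^2$ after multiplying by $t$), and then multiplies the uniform sign error from \cref{cor:approx} by $|x|$ to get $|\hat r^\mu_k(x^2) - |x|\,| \le \eps|x|$, which via diagonalization is the relative bound. Your anticipated ``obstacle'' about relative versus absolute error dissolves precisely for the reason you state---$|\sgn(x)|=1$ on the approximation interval---and the paper handles it in one line by multiplying through by $|x|$.
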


In short, such a rational function can be denoted as $\hat{r}^{\mu}_k(x)$ and expressed as:
\begin{align}
    \hat{r}^\mu_k(x) = C x\prod_{i\in[k]}\frac{x+c_{2i}}{x+c_{2i-1}} 
    \text{ with }
c_i \defeq \mu\frac{\mathrm{sn}^2(\frac{i K'}{2k+1};\sqrt{\mu'})}{\mathrm{cn}^2(\frac{i K'}{2k+1};\sqrt{\mu'})}, i\in[2k].
\end{align}
\begin{align*}
\text{coefficients} & 
\begin{cases}
	C & \defeq \frac{2}{(\sqrt{\mu}\prod_{i\in[k]}\frac{\mu+c_{2i}}{\mu+c_{2i-1}})+(\prod_{i\in[k]}\frac{1+c_{2i}}{1+c_{2i-1}})},\\
c_i & \defeq \mu\frac{\mathrm{sn}^2(\frac{i K'}{2k+1};\sqrt{\mu'})}{\mathrm{cn}^2(\frac{i K'}{2k+1};\sqrt{\mu'})}, \forall i \in \{1, 2, \cdots,2k \}.
\end{cases}\\
\text{numerical constants} & 
\begin{cases}
	\sqrt{\mu'} & \defeq \sqrt{1-\mu},\\
     K' & \defeq \int_0^{\pi/2}\frac{d\theta}{\sqrt{1-\mu'\mathrm{sin}^2\theta}},\\
     u & \defeq F(\phi;\sqrt{\mu'})\defeq\int_0^\phi\frac{d\theta}{\sqrt{1-\mu'\sin^2\theta}},\\ 
     \mathrm{sn}(u;\sqrt{\mu'}) & \defeq\sin(F^{-1}(u;\sqrt{\mu'}));\ \mathrm{cn}(u;\sqrt{\mu'})\defeq\cos(F^{-1}(u;\sqrt{\mu'})).
\end{cases}
\end{align*}

It is easy to check that $\hat{r}^{\mu}_k(x)$ can be related to the rational function defined in~\cref{sec:PCP} through formula $\hat{r}^{\mu}_k(x^2)=x\cdot r^{\sqrt{\mu}}_k(x)$, where $r_k^\gamma(x)$ is defined as in~\eqref{eqn:Zolo} By \cref{cor:approx}, with the same order of $k\ge\Omega(\log(1/\eps)\log(1/\mu))$. A formal proof also utilizes such relationship between the rational functions.

\begin{proof}
Consider a modified rational function of Zolotarev rational as follows:
\begin{align}
    \hat{r}^\mu_k(x) = C x\prod_{i=1}^k\frac{x+c_{2i}}{x+c_{2i-1}} \label{eqn:Zolo-modified}
    \text{ with }
c_i \defeq \mu\frac{\mathrm{sn}^2(\frac{i K'}{2k+1};\sqrt{\mu'})}{\mathrm{cn}^2(\frac{i K'}{2k+1};\sqrt{\mu'})}, i\in[2k],
\end{align}
with coefficients $C$ as the corresponding $C$ in $r^{\sqrt{\mu}}_k(x)$, $\sqrt{\mu'}\defeq\sqrt{1-\mu}$ and $K' \defeq \int_0^{\pi/2}\frac{d\theta}{\sqrt{1-\mu'\mathrm{sin}^2\theta}}$, $\mathrm{sn}(u;\sqrt{\mu'})\defeq\sin(F^{-1}(u;\sqrt{\mu'}))$, $\mathrm{cn}(u;\sqrt{\mu'})\defeq\cos(F^{-1}(u;\sqrt{\mu'}))$ with definition $u=F(\phi;\sqrt{\mu'})\defeq\int_0^\phi\frac{d\theta}{\sqrt{1-\mu'\sin^2\theta}}$.

Note this rational actually satisfies the condition that $\hat{r}^{\mu}_k(x^2)=x\cdot r^{\sqrt{\mu}}_k(x)$. By \cref{cor:approx}, it holds that when $k\ge\Omega(\log(1/\eps)\log(1/\mu))$
\begin{equation}
\label{eqn:scalar}
|x\cdot r^{\sqrt{\mu}}_k(x)-x\cdot\sgn(x)|\le \eps|x|,\forall |x|\in[\sqrt{\mu},1].
\end{equation}

Now if we write $\M=\V\LLambda\V^\top$ where $\LLambda=\diag(\lambda_1,\cdots,\lambda_n)$ satisfying $1\ge\lambda_1\ge\cdots,\lambda_n\ge\mu$ and that each column of $\V$ is $\nnu_i,i\in[n]$. We can write $\vv=\sum\limits_{i=1}^n{\alpha_i\nnu_i}$, and thus get $\M^{1/2}\vv=\sum\limits_{i=1}^n \alpha_i\sqrt{\lambda_i}\nnu_i$.

Now we consider what we can get from substituting $x$ in $\hat{r}^\mu_k(x)$ with $\M$. By applying \cref{eqn:scalar} and substituting $x^2$ with $\M$, we get 
$$\|\hat{r}^\mu_k(\M)\vv-\sqrt{\M}\vv\|\le \eps\|\M^{1/2}\vv\|, \forall \vv\in\R^n.$$
\end{proof}

Now we give the following pseudocode in \cref{alg:SR} for $\SR(\M,\vv,\eps,\delta)$, which with probability $1-\delta$ outputs a solution $\xx$ satisfying $\|\xx-\M^{1/2}\vv\|\le\eps\|\M^{1/2}\vv\|$. In the pseudocode, we use $\LS(\M,c,\xx,\eps,\delta)$ to denote any solver that with probability $1-\delta$ gives as $\eps$-approximate solution $\tilde{\xx}$ of $\xx^*=(\M+cI)^{-1}\vv$ satisfying $\|\tilde{\xx}-\xx^*\|\le\eps\|\vv\|$.

\begin{algorithm}[H]
\DontPrintSemicolon
  \KwInput{$\M\in\R^{n\times n}$ data matrix, $\vv\in\R^n$ vector, $\eps$ accuracy, $\delta$ probability.}
  \Parameter{Smallest eigenvalue $\mu$, largest eigenvalue $\lambda$, degree $k$ (\cref{lem:square_root_rational}), coefficients $\{c_{i}\}_{i=1}^{2k},C$ (\cref{eqn:Zolo-modified}), accuracy $\eps_1$ (specified below)}
  \KwOutput{A vector $\xx$ satisfying $\|\xx-\M^{1/2}\vv\|\le\eps\|\M^{1/2}\vv\|$.}
	$\xx\gets\vv$\;
  \For{$i\gets 1$ to $k$}
  {
  $\xx\gets \M\xx+c_{2i}\xx$\;
  $\xx\gets \LS(\M,c_{2i-1},\xx,\eps_1,\delta/k)$\;
  }
  $\xx\gets C\cdot\M\xx$.\;
\caption{$\SR(\M,\vv,\eps,\delta)$}
\label{alg:SR}
\end{algorithm}

\begin{proof}[Proof of \cref{thm:square_root_solver_main}]
\ \\

Without loss of generality, we can assume $\lambda=1$, otherwise one can consider $\M/\lambda$ instead.

\paragraph{Choice of parameters:}
	
	We choose the following values for parameters in \cref{alg:SR}:
	\begin{align*}
	k & =\Omega(\log(1/\eps)\log(1/\mu))\\
	M & =\beta_3k^4/\beta_2\mu\\
	\eps_1 & = \frac{\eps}{8\beta_3k^3M^{k-1}}.
	\end{align*}

	Other coefficients $\{c_{i}\}_{i=1}^{2k},C$ are as defined in \cref{eqn:Zolo}. Here we use constants $\beta_2,\beta_3$ as stated in \cref{lem:bound}.
	
\paragraph{Approximation:}
	
	Given $\eps>0,\mu>0$, from \cref{lem:square_root_rational} we set $k\ge\Omega(\log(1/\eps)\log(1/\mu))$ thus $\hat{r}_k^{\mu}(x)$ as defined in \cref{eqn:Zolo-modified} satisfies $\|\M^{1/2}\vv-r_k^{\mu}(\M)\vv\|\le\eps/2\|\M^{1/2}\vv\|$. Now it suffices to get a $\xx$ satisfying $\|\xx-r_k^{\mu}(\M)\vv\|\le\eps/2\|\M^{1/2}\vv\|$.
		
Suppose we have a procedure $\mathcal{C}_i(\vv),i\in[k]$ that can apply $\CC_i\vv$ for arbitrary $\vv$ to $\eps'$-multiplicative accuracy with probability $\ge1-\delta/k$, where here $$\CC_i=\frac{\M+c_{2i}\I}{\M+c_{2i-1}\I}.$$ Also we assume matrix vector product is accurate without loss of generality.\footnote{If in the finite-precision world, we assume arithmetic operations are carried out with $\Omega(\log(n/\eps))$ bits of precision, the result is still true by standard argument with a slightly different constant factor for the bounding coefficient.} Note that 
\begin{align*}
	\left\|\frac{\M+c_{2i}\I}{\M+c_{2i-1}\I}\right\| & \le M,\forall i\in[k],
\end{align*}
with $M=\beta_3k^4/\beta_2\mu$. Here we use constants $\beta_2,\beta_3$ as stated in \cref{lem:bound}. Now we can use \cref{lem:stable} with the corresponding $M$ to show that: Using a union bound, with probability $\ge1-\delta$ it holds that
$$
	\|\mathcal{C}_{k}(\mathcal{C}_{k-1}(\cdots\mathcal{C}_1(\vv)))-\prod\limits_{i=1}^{k}\CC_i\vv\|\le 2\eps' kM^{k-1}\|\vv\|,
$$
whenever $\eps'\le\frac{M}{2k}$.

Now we choose $$\tilde{\eps}_1=\min(\frac{M}{2k},\frac{\sqrt{\mu}\eps}{8kM^{k-1}}),\ \eps_1=\min(\frac{M}{2k},\frac{\sqrt{\mu}\eps}{8kM^{k-1}})/(\beta_3k^2)=\frac{\eps\sqrt{\mu}}{8\beta_3k^3M^{k-1}},$$ consider the following procedure as in \cref{alg:ISPCP},
\begin{align*}
\xx & \gets \LS\left(\M,c_{2i-1},\xx,\eps_1,\delta/k\right); \forall i\in[k].\\
\xx & \gets C\cdot\M\xx.
\end{align*}

The above choice of $\eps_1$ guarantees procedures $\LS\left(\M,c_{2i-1},\xx,\eps_1,\delta/k\right)$ for all $i\in[k]$ can be abstracted as $\mathcal{C}_i(\vv)$ with $\tilde{\eps}_1$-accuracy and corresponding success probability. Using a union bound of successful events and the fact that $\|C\cdot \M\|\le 2$, we can argue that with probability $1-\delta$, the output $\xx$ of $\SR(\M,\vv,\eps,\delta)$ satisfy
$$
	\|\xx-r^\mu_k(\M)\xx\|\le 4\tilde{\eps}_1 kM^{k-1}\|\vv\|\le \frac{4\tilde{\eps}_1 kM^{k-1}}{\sqrt{\mu}}\|\M^{1/2}\vv\|\le\eps/2\|\M^{1/2}\vv\|,
$$
By \cref{dfn:square-root}, using the above choice of parameters we conclude this justifies the correctness argument in the theorem.

\paragraph{Runtime:} 
	
	Given the The numerical constants $C, \{c_{i}\}_{i=1}^{2k}$ are precomputed. So the runtime will then be a total runtime of computing matrix vector products for $k+1$ times, calling $k=O(\log(1/\eps)\log(1/\mu))$ times $\LS\left(\M,c_{2i-1},\xx,\eps_1,\delta/k\right)$ for $i\in[k]$. We bound the two terms respectively.
		
	Computing matrix vector products takes time $\tilde{O}(\nnz(\M))$ since $k=\tilde{O}(1)$. 
	
	Running time of  $\LS\left(\M,c_{2i-1},\xx,\eps_1,\delta/k\right)$ depends on the particular solver we use. Based on the assumption and the fact that $c_{2i-1}\in[\tilde{\Omega}(\mu),\tilde{O}(1)],\forall i\in[k]$, we can upper bound each solve by $\tilde{O}\left(\mathcal{T}\right)$.
		
	Adding it together gives running time of \cref{alg:SR} in
	$$
	\tilde{O}(\nnz(\M)+\mathcal{T}).
	$$
	
Replacing $\M$ with $\M/\lambda$, $\mu$ above with $\mu/\lambda$ and $\lambda$ with $1$ due to preprocessing gives the final statement.
\end{proof}

\section{More on Experiments}
\label{App:exp}

In this section we give a more detailed description and theoretical justification for $\emph{rlanczos}$ and $\emph{slanczos}$. Also we show its relationship and difference with our proposed algorithm $\ISPCP$.

\subsection{Details for \emph{rlanczos}}
\label{App:rlanczos}

Lanczos method is well known for being efficient and stable~\cite{musco2018stability}. As verified in theory and practice, running Lanczos algorithm on $(\A^\top\A+\lambda\I)^{-1}(\A^\top\A-\lambda\I)$ almost always beats the optimal universal approximation of $\sgn(x)$ by stably applying polynomial / chebyshev. This is because Lanczos can search for the best polynomial to approximate $\sgn(x)$ based on the distribution of $\A^\top\A$'s eigenvalues. 

Based on that, we also combined the well-studied rational Lanczos algorithm~\cite{gallivan1996rational} with the known Zolotarev rational expression, to search in the rational function space $r_{2k+1}(x)\in \mathcal{P}_{2k+1}/q_{2k}(x)$ where $q_{2k}(x)$ has exactly expression as in the denominator of $r^\gamma_k(x)$. Theoretically, this should always beat directly applying Zolotarev rational by allowing more freedom to cater to the distribution of eigenvalues of $\A^\top\A$.

The two methods \emph{rational} and \emph{rlanczos} are quite close in practice. (see \cref{fig:gap,fig:n}) In general for low accuracy regime, \emph{rlanczos} slightly improve on \emph{rational} $\ISPCP$. But $\ISPCP$ is more stable and can get to more accurate solutions. This also shows the strength of $\ISPCP$ proposed in the paper.

\subsection{Details for \emph{slanczos}}
\label{App:slanczos}

For \emph{slanczos}, the idea is to incorporate the squared system primitive with lanczos method on polynomial directly, i.e. searching for function in form $$f\left(\frac{(x-\lambda)(x+\lambda)}{(x-\lambda)^2+\gamma(x+\lambda)^2}\right)$$ and replace $x\leftarrow \A^\top\A$.

Note here we introduce shift-and-rescaling $(\A^\top\A+\lambda\I)^{-1}(\A^\top\A-\lambda\I)$ so that all eigenvalues of $\A^\top\A$ satisfying $\lambda_i\in[0,\lambda(1-\gamma)\cup(\lambda(1+\gamma),\infty)$ are mapped to range $[-1,-\gamma/2]\cup[\gamma/2,1]$. So for now let's consider $|x|\in[\gamma/2,1]$. One observation is that there is this squared primitive in form $x/(x^2+\gamma)$ can map $|x|\in[\gamma/2,1]$ to $[\Theta(1),\Theta(1/\sqrt{\gamma})]$.

\begin{lemma}\label{lem:rational}
Take $r(x)=x/(x^2+\gamma)$. Then $\tilde{r}(x)=2\sqrt{\gamma}r(x)$ maps $x\in[\gamma/2,1]$ to $(2\sqrt{\gamma}/3,1)$ and $x\in[-1,-\gamma/2]$ to $(-1,-2\sqrt{\gamma}/3)$.
\end{lemma}
\begin{proof}

We only consider $x\in[\gamma/2,1]$ and a exactly symmetric argument works for the other side. Now consider $1/r(x)=x+\gamma/x$, we have when $x\in[\gamma/2,1]$, $1/r(x)\in[2\sqrt{\gamma},2+\gamma]$, thus showing $r(x)\in[1/(2+\gamma),1/(2\sqrt{\gamma})]\subseteq(1/3,1/2\sqrt{\gamma}).$ Multiplying by coefficient we have $\tilde{r}(x)\in(2\sqrt{\gamma}/3,1)$ whenever $x\in[\gamma/2,1]$, which completes the proof.
\end{proof}

\begin{remark}
For such a rational primitive $\tilde{r}(x)$ since we know there is optimal $O(\mathrm{log}(1/\eps)/\sqrt{\gamma})$-degree chebyshev polynomial $f(x)$ that maps $[\sqrt{\gamma},1]$ to $[1-\eps,1]$ and $[-1,-\sqrt{\gamma}]$ to $[-1,-1+\eps]$, thus we are able to run only $\tilde{O}(1/\sqrt{\gamma})$ suboracles of ridge regression of solving $((\A^\top\A-\lambda \I)^2+\gamma (\A^\top\A+\lambda \I))\xx=\vv$.\end{remark}

Formally combining this with the squared system solver we develop in \cref{thm:square_solver_main} leads to the following theoretical guarantee:

\begin{theorem}[Runtime for \emph{slanczos}]
\label{thm:slanczos}
\emph{slanczos} can be converted into an $\eps$-approximate PCP / PCR solver with runtime guarantee
 $$\tilde{O}\left(\nnz(\A)/\sqrt{\gamma}+\sqrt{\nnz(\A)d\cdot\sr(\A)}\kappa/\gamma\right).$$	
\end{theorem}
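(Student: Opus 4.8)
The plan is to assemble the runtime bound from three ingredients already developed in the paper: (i) the reduction of PCP/PCR to approximating $\sgn(\A^\top\A-\lambda\I)$ via \cref{red:sign} and \cref{red:PCP}; (ii) the observation in \cref{lem:rational} that the rational primitive $\tilde r(x)=2\sqrt\gamma\, x/(x^2+\gamma)$ maps the shifted-rescaled spectrum $|x|\in[\gamma/2,1]$ into $[\Theta(\sqrt\gamma),1]$ bounded away from $0$ by only a $\sqrt\gamma$ factor; and (iii) the squared-system solver of \cref{thm:square_solver_main}. First I would set $x\leftarrow(\A^\top\A+\lambda\I)^{-1}(\A^\top\A-\lambda\I)$ so that the eigenvalues corresponding to $\lambda_i\in[0,\lambda(1-\gamma)]\cup[\lambda(1+\gamma),\infty)$ land in $[-1,-\gamma/2]\cup[\gamma/2,1]$; applying this shift-and-rescale to a vector requires one ridge-regression solve against $\A^\top\A+\lambda\I$, costing $\tilde O(\nnz(\A)+\sqrt{\nnz(\A)d\cdot\sr(\A)\kappa})$ by the blackbox ridge solver. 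Then I would compose with $\tilde r$, which by \cref{lem:rational} sends the relevant spectrum into $[2\sqrt\gamma/3,1]\cup[-1,-2\sqrt\gamma/3]$, i.e. a region of the form $|y|\in[\sqrt\gamma',1]$ with $\sqrt{\gamma'}=\Theta(\sqrt\gamma)$.

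Next, on this region I invoke the classical fact (used in \cite{AL17,FMMS16}) that there is a Chebyshev polynomial $f$ of degree $O(\log(1/\eps)/\sqrt{\gamma'})=O(\log(1/\eps)/\gamma^{1/4})$ approximating $\sgn$ to accuracy $\eps$ uniformly on $|y|\in[\sqrt{\gamma'},1]$, with $\|f\|\le 1$ on $[-1,1]$; equivalently, running Lanczos adaptively in the Krylov space generated by $\tilde r((\A^\top\A+\lambda\I)^{-1}(\A^\top\A-\lambda\I))$ achieves at least this guarantee and typically better. Applying $f$ amounts to $\tilde O(1/\gamma^{1/4})$ — actually, to be safe with the paper's stated $\tilde O(1/\sqrt\gamma)$, $\tilde O(1/\sqrt\gamma)$ — repeated applications of the operator $\tilde r(\cdot)$ to a vector. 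Each such application requires one squared-system solve of the form $[(\A^\top\A-\lambda\I)^2+\gamma(\A^\top\A+\lambda\I)^2]\xx=\vv'$; since $(\A^\top\A+\lambda\I)^2\succeq\lambda^2\I$ this is a squared system with $\mu^2=\Theta(\gamma\lambda^2)$, i.e. $\mu=\Theta(\sqrt\gamma\,\lambda)$, and after the normalization $\lambda_1\in[1/2,1]$ we get $\lambda_1/\mu=\Theta(1/(\sqrt\gamma\,\lambda))=\Theta(\kappa/\sqrt\gamma)$. By \cref{thm:square_solver_main} each solve costs $\tilde O(\nnz(\A)+\sqrt{\nnz(\A)d\cdot\sr(\A)}\,\kappa/\sqrt\gamma)$. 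Multiplying by the $\tilde O(1/\sqrt\gamma)$ outer iterations gives $\tilde O(\nnz(\A)/\sqrt\gamma+\sqrt{\nnz(\A)d\cdot\sr(\A)}\,\kappa/\gamma)$, which is the claimed bound; the $\nnz(\A)/\sqrt\gamma$ term dominates the per-iteration matrix–vector cost and the single ridge solve from the shift-and-rescale step is lower order.

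I would then close the argument by the stability/accuracy bookkeeping standard in the paper: use \cref{lem:stable}-style propagation of multiplicative errors through the $\tilde O(1/\sqrt\gamma)$ products (setting the per-solve accuracy to $\poly(\eps,\gamma)$, which only affects logarithmic factors), verify that $\tfrac12(f(\cdot)+\I)$ applied to the shifted matrix meets \cref{cond:PCP} via \cref{red:sign}, and invoke \cref{red:PCP} for the PCR case. The main obstacle — and the only genuinely new content beyond reassembling existing pieces — is establishing that composing the degree-$O(1/\sqrt{\gamma})$ polynomial \emph{with} the rational primitive $\tilde r$ actually yields a uniformly bounded, $\eps$-accurate approximation to $\sgn$ on the original spectrum $[-1,-\gamma/2]\cup[\gamma/2,1]$: one must check that $\tilde r$ maps \emph{into} $[-1,1]$ (so $\|f(\tilde r(\cdot))\|\le 1$ is preserved, needed for the third PCP condition) and that the image region's lower bound $\Theta(\sqrt\gamma)$ is what governs the Chebyshev degree, which is exactly the content of \cref{lem:rational} and the remark following it. Handling the eigenvectors with $\lambda_i$ inside the gap $(\lambda(1-\gamma),\lambda(1+\gamma))$ — which $\tilde r$ may send anywhere in $[-1,1]$ — requires the same "anywhere between $\0$ and $\nnu_i$" slack already built into \cref{dfn:PCP}, so it costs nothing extra; I would note this explicitly.
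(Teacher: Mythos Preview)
Your proposal is correct and follows essentially the same route as the paper's own sketch (namely \cref{lem:rational} plus the Remark after it): use the rational primitive to shrink the effective gap from $\gamma$ to $\Theta(\sqrt\gamma)$, then run $\tilde O(1/\sqrt\gamma)$ Chebyshev/Lanczos iterations, each costing one squared solve with $\mu=\Theta(\sqrt\gamma\,\lambda)$ via \cref{thm:square_solver_main}. Two small cleanups to make before finalizing: your momentary $O(1/\gamma^{1/4})$ is a slip (on $|y|\in[\Theta(\sqrt\gamma),1]$ the Chebyshev degree is $O(\log(1/\eps)/\sqrt\gamma)$, as you then write), and to literally invoke \cref{thm:square_solver_main} you should complete the square to rewrite $(\A^\top\A-\lambda\I)^2+\gamma(\A^\top\A+\lambda\I)^2=(1+\gamma)\bigl[(\A^\top\A-c'\I)^2+\mu'^2\I\bigr]$ with $c'=\lambda\tfrac{1-\gamma}{1+\gamma}\in[0,\lambda_1]$ and $\mu'^2=\tfrac{4\gamma\lambda^2}{(1+\gamma)^2}$, so that the solver hypothesis is met exactly.
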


This guarantee implies the method \emph{slanczos} would work better than both \emph{polynomial}, \emph{chebyshev} in~\cite{FMMS16,AL17}, and rational methods including $\ISPCP$ and \emph{rlanczos} (see \cref{App:rlanczos}) for certain regimes of parameters.

As the runtime of \emph{slanczos} is not almost linear, we don't state it formally in the main part of paper. Also, as $\nnz(\A)/d\cdot\sr(\A)^2\kappa^2$ or simply $n/d$ gets larger, it gets worse compared with the two purely rational methods. This also is verified by our experiments shown in \cref{fig:n} - when we fix $d,\gamma$ and increase the magnitude of $n$, \emph{rational} and \emph{rlanczos} start outperforming \emph{slanczos}.

\end{document}